\documentclass[11pt,onecolumn]{article}
\usepackage[top=1in, bottom=1in, left=1.25in, right=1.25in]{geometry}
\usepackage{appendix}
\usepackage{amsfonts}
\usepackage{amsmath}
\usepackage{amssymb}
\usepackage{bm}
\usepackage{graphicx}
\usepackage{color, soul}
\usepackage{stfloats}
\usepackage{algorithm}
\usepackage{algorithmic}
\usepackage{booktabs}
\DeclareMathOperator*{\argmin}{argmin}
\usepackage{subfigure}
\usepackage{hyperref}

\usepackage[amsmath,thmmarks]{ntheorem}
\usepackage{theorem}

\newtheorem{theorem}{Theorem}
\newtheorem{lemma}{Lemma}
\newtheorem{remark}{Remark}

\newtheorem{definition}{Definition}

\theoremheaderfont{\sc}\theorembodyfont{\upshape}
\theoremstyle{nonumberplain}
\theoremseparator{}
\theoremsymbol{\rule{1ex}{1ex}}
\newtheorem{proof}{Proof}

\title{Cross Validation in Compressive Sensing and its Application of OMP-CV Algorithm}
\author{Jinye Zhang, Laming Chen, Petros T. Boufounos, and Yuantao Gu \thanks{This work was partially supported by the National Program on Key Basic Research Project (973 Program 2013CB329201) and the National Natural Science Foundation of China (NSFC 61371137). PTB is exclusively supported by Mitsubishi Electric Research Laboratories. The corresponding author of this paper is Yuantao Gu (gyt@tsinghua.edu.cn).}}
%\author{Jinye Zhang $^{\dagger}$ \qquad Laming Chen
%  $^{\dagger}$ \qquad Petros T. Boufounos $^{\star}$\qquad Yuantao Gu
%  $^{\dagger}$ }

%\thanks{$^{\dagger}$ This work was partially supported
%    by the National Program on Key Basic Research Project (973 Program
%    2013CB329201) and the National Natural Science Foundation of China
%    (NSFC 61371137). PTB is exclusively supported by Mitsubishi
%    Electric Research Laboratories. The corresponding author of this
  %  paper is Yuantao Gu (gyt@tsinghua.edu.cn).}
%    
%\address{$^{\dagger}$ State Key Laboratory on Microwave and Digital
%  Communications  \\ Tsinghua National Laboratory for Information
%  Science and Technology  \\ Department of Electronic Engineering,
%  Tsinghua University, Beijing 100084, CHINA \\ $^{\star}$ Mitsubishi
%  Electric Research Laboratories, Cambridge, MA 02139, USA}
\date{~}
\begin{document}
\maketitle
\begin{abstract}
Compressive sensing (CS) is a data acquisition technique that measures sparse or compressible signals at a sampling rate lower than their Nyquist rate. Results show that sparse signals can be reconstructed using greedy algorithms, often requiring prior knowledge such as the signal sparsity or the noise level. As a substitute to prior knowledge, cross validation (CV), a statistical method that examines whether a model overfits its data, has been proposed to determine the stopping condition of greedy algorithms. This paper first analyzes cross validation in a general compressive sensing framework and developed general cross validation techniques which could be used to understand CV-based sparse recovery algorithms. Furthermore, we provide theoretical analysis for OMP-CV, a cross validation modification of orthogonal matching pursuit, which has very good sparse recovery performance. Finally, numerical experiments are given to validate our theoretical results and investigate the behaviors of OMP-CV.

\textbf{Keywords:} Compressive sensing, signal reconstruction, cross validation, orthogonal matching pursuit
\end{abstract}

\section{Introduction}
%introduction of Compressive Sensing and Cross Validation
Compressive sensing (CS) is a new data acquisition technique that aims to measure sparse and compressible signals at sampling rate smaller than the Nyquist rate \cite{donoho2006compressed} \cite{candes2006compressive}. Its fundamental promise is that some certain classes of signals, such as natural images, have a sparse representation where most of the coefficients are approximately zero. Generally speaking, compressive sensing consists of two main building blocks: sampling an $N$-dimensional $k$-sparse signal by computing its $M$ (which is much smaller than $N$) linear projections and reconstructing the signal using various recovery methods.

Although the reconstruction of the original signal $\mathbf{x}$ from its $M$ measurements is an ill-posed problem, it can be achieved by using the prior knowledge that $\mathbf{x}$ is sparse, i.e. $k\ll N$. One important result in CS theory is that $\mathbf{x}$ can be reconstructed using optimization strategies aiming to find the sparsest signal matching with the measurements, which can be viewed as an $l_0$ norm minimization problem  \cite{venkataramani1998sub}. Although the $l_0$ minimization is NP-hard, it was demonstrated \cite{candes2005decoding} that it is equivalent to an $l_1$ optimization problem as long as the sensing matrix satisfies the so-called Restricted Isometry Property (RIP) with a constant parameter. In addition, such $l_1$ optimization problem could be solved efficiently via linear programming (LP) techniques.

Apart from the LP techniques, a family of iterative greedy algorithms also received significant attention due to their low computational complexity. They include OMP, ROMP, StOMP, SP, and CoSaMP \cite{tropp2007signal,needell2009uniform,donoho2012sparse,dai2009subspace,needell2009cosamp}. The basic idea behind these algorithms is to find the support set of the signal iteratively. At each iteration, one or several coordinates of the vector $\mathbf{x}$ are selected into the current support set based on the correlation between the columns of sensing matrix and the measurement residual. With $k$ iterations needed, the computational complexity of OMP, for example, is roughly $O(k m N)$ \cite{dai2009subspace}. 

Most greedy algorithms require prior knowledge such as sparsity or noise level to properly stop the iteration, which however could not be satisfied in most practical cases. Without such information the termination of the algorithm may be too early or too late. In the former case, the signal will not be completely recovered (underfitting), while in the latter case some portion of the noise will be treated as signal (overfitting). In both cases, therefore, the reconstruction quality may greatly deteriorate. 

As a substitute to prior knowledge, cross validation (CV) was proposed \cite{boufounos2007sparse} to determine the stopping condition of greedy algorithms. Cross validation \cite{devijver1982pattern, geisser1993predictive, kohavi1995study, picard1984cross, shao1993linear} is a statistical technique that separates a data set into a training (estimation) set and a testing (cross validation) set. The training set is used to construct the model and the testing set is used to adjust the model order so that the noise is not overfitted. When CV is utilized in compressive sensing, the measurement vector is split into a reconstruction measurement and a cross validation measurement. The former is used to reconstruct the sparse signal using a greedy algorithm, while the latter is to decide the stopping condition. The basic idea behind this technique is to sacrifice a small amount of measurements in exchange of prior knowledge. In a nutshell, this technique makes it possible for greedy algorithms to reconstruct the signal without prior knowledge like sparsity or noise level.

\subsection{Related Works}
The idea of applying cross validation in compressive sensing is first proposed by Boufounos, Duarte, and Baraniuk in \cite{boufounos2007sparse}, where the general framework of CS-CV modification is founded. In a CS-CV modified algorithm, both the sensing matrix and the measurement vector are separated into the reconstruction part and the cross validation part. When the former part is utilized iteratively to construct the support set, the latter is adopted to calculate the CV residual and to determine the stopping condition. As soon as the CV residual is smaller than a given constant, its corresponding recovered signal will therefore be outputted as the reconstructed signal. The above work is the first step that introduced cross validation into the field of compressive sensing. 

Another important work in CS literature related to cross validation is made by Ward \cite{ward2009compressed}, who cleverly used the Johnson-Lindenstrauss (JL) Lemma to evaluate the recovery status. In this work, the reconstruction matrix is used for recovering the sparse signal and the cross validation matrix is used for estimating the reconstruction error. The dependence of the desired estimation accuracy and the confidence level in the prediction on the number of CV measurements is also studied. The above work offers us a tractable way for parameter selection in sparse recovery algorithms using CV for recovery error estimation. 

\subsection{Our Contribution}
The main contribution of this work is two-fold. We first develop some general cross validation techniques for compressive sensing. Then the theoretical analysis on OMP-CV algorithm is conducted comprehensively. 

The general cross validation techniques we provide could basically answer the following two questions, referred to as \emph{general CV problems} in the reminder of this paper.

\begin{enumerate}
\item (Recovery error estimation) Given a reconstructed signal, with what
  accuracy and what probability could its CV residual provide bounds
  on its recovery error?
\item (Recovery error comparison) Given a pair of reconstructed signals, with
  what probability could the comparison between their CV residuals
  correctly evaluate their recovery errors?
\end{enumerate}
To solve these problems, we first calculate the probability distribution of CV residuals. Consequently, by transforming the distribution into inequalities that hold with certain probability, we directly answer the above two questions.

Equipped with the general cross validation techniques, we then analyze the OMP-CV algorithm. We refer to the algorithm output, which is the reconstructed signal with the smallest CV residual, as the \emph{OMP-CV output}. The reconstructed signal with the smallest recovery error is referred to as the \emph{oracle output}. 

Our analysis result shows that the recovery error of the OMP-CV output is very close to that of the oracle output with high probability, given that the oracle output recovers all indices in the support set of the original signal. In order to achieve the above result, we first analyze the internal structure of two recovered signals in different OMP iterations. We then study how their CV residuals affect the recovery errors using the techniques developed for the general CV problems. Finally we generalize the recovery error comparison between two recovered signals, which are generated in different OMP iterations, to that of all recovered signals. Therefore we could estimate how close the OMP-CV output is to the oracle output.

The reminder of the paper is organized as follows. Section \ref{section preliminaries} contains the problem formulation, OMP-CV description, and some mathematical tools required in the following analysis. Section \ref{section cross validation} analyzes the cross validation techniques for compressive sensing while Section \ref{section OMP-CV} provides a comprehensive discussion on OMP-CV algorithm. Numerical simulations are given in Section \ref{section Simulation} to verify the theoretical content. Concluding remarks are drawn in Section \ref{section conclusion}, while proofs of some theorems are presented in the Appendix.

\section{Preliminaries \label{section preliminaries}}

\subsection{Notations \label{problem formulation}}

We consider an unknown $k$-sparse signal $\mathbf{x} \in \mathbb{R}^N$ observed using $M$ linear measurements corrupted by additive noise. Let $T$ be the support set of $\bf x$ and $\vert T \vert = k$ to
denote the cardinality of $T$. The vector $\mathbf{x}_T$ contains the elements of $\mathbf{x}$ indexed by $T$. To implement the CV-based modification, we separate the original $M$ by $N$ sensing matrix to a reconstruction matrix $\mathbf{A} \in \mathbb{R}^{m \times N}$ and a CV matrix $\mathbf{A}_{\rm{cv}} \in \mathbb{R}^{m_{\rm{cv}} \times   N}$. The measurement vector is also separated accordingly, to a reconstruction measurement $\mathbf{y} \in \mathbb{R}^m$ and a CV measurement $\mathbf{y}_{\rm{cv}} \in \mathbb{R}^{m_{\rm{cv}}}$. In this paper we only consider Gaussian sensing matrices and additive Gaussian noises. The reconstruction matrix $\mathbf{A}$ is properly normalized to have unit column norm. Because the same data acquisition system is assumed to be used to obtain both the reconstruction and CV measurements, the CV matrix $\mathbf{A}_{\rm{cv}}$ is normalized to have column norm equal to $\sqrt{m_{\rm{cv}} / m }$ and the CV noise has the same per measurement variance as the measurement noise. In other words, the notations can be formulated as
\begin{align*}
\mathbf{y} &= \mathbf{A} \mathbf{x} +\mathbf{n}, &  \mathbf{n} &= \sigma_{\rm{n}} \mathbf{a}_{\rm{n}},\\
\mathbf{y}_{\rm{cv}} &= \mathbf{A}_{\rm{cv}} \mathbf{x} + \mathbf{n}_{\rm{cv}}, & \mathbf{n}_{\rm{cv}} &= \sigma_{\rm{n}} \mathbf{a}_{{\rm{cv}},{\rm{n}}} ,
\end{align*}
where the entries of $\mathbf{A}$, $\mathbf{A}_{\rm{cv}}$, $\mathbf{a}_{\rm{n}}$, and $\mathbf{a}_{{\rm{cv}},{\rm{n}}}$ are i.i.d normally distributed with mean zero and variance $1/m$.

To make the analysis more clear, we emphasize that in this paper, the input signal is considered as deterministic while the sampling matrix and noise are random. Without loss of clarity and for notational simplicity, the random variables and their realizations are denoted by same notation. 

\subsection{Orthogonal Matching Pursuit with Cross Validation}

\begin{table}[t]
\renewcommand{\arraystretch}{1.2}
\caption{OMP-CV Algorithm}
\label{OMP-CV Algorithm} 
\begin{center}
\begin{tabular}{l}
\toprule[1pt]
{\bf Input:} \hspace{0.5em} $\mathbf{A}, \mathbf{A}_{\rm{cv}}, \mathbf{y}, \mathbf{y}_{\rm{cv}}, d$;\\
{\bf Output:} \hspace{0.5em} $\mathbf{\hat{x}}$.\\
\hline
{\bf Initialization:} \hspace{0.5em} Set $p=1$, $\epsilon_{\rm{cv}}^0 = \Vert \mathbf{y}_{\rm{cv}}\Vert_2^2$;\\
{\bf Repeat:}\\
\hspace{1.5em} Compute $\mathbf{\hat{x}}^p$ using an OMP iteration;\\
\hspace{1.5em} Compute $\epsilon_{\rm{cv}}^p = \Vert \mathbf{A}_{\rm{cv}} \mathbf{\hat{x}}^p - \mathbf{y}_{\rm{cv}} \Vert_2^2$;\\
\hspace{1.5em} Increment $p$ by 1; \\
{\bf Until:} \hspace{0.5em} $p \geq d$\\
Compute $o_{\rm{cv}} = \displaystyle \argmin_p \epsilon_{\rm{cv}}^p$; \\
{\bf Return:} $\mathbf{\hat{x}} = \mathbf{\hat{x}}^{o_{\rm{cv}}}$. \\
\bottomrule[1pt]
\end{tabular}
\end{center}
\end{table}

The OMP-CV algorithm proposed in \cite{boufounos2007sparse} is a noise- and sparsity-robust greedy recovery algorithm that adopts CV in OMP. In this algorithm, every iteration can be viewed as two separate parts: reconstructing the signal by OMP and evaluating the recovered signal by cross validation techniques, which is utilized to properly terminate the iteration before the recovery starts to overfit the noise. The OMP-CV algorithm that studied in this work is slightly different from its original version. One may refer to Table \ref{OMP-CV Algorithm}, where the iteratively reconstructed signal is chosen as  output based on the criteria that its CV residual is the smallest rather than less than a certain constant. 

In Table \ref{OMP-CV Algorithm}, we use $\mathbf{\hat{x}}^p$ and $T^p$ to denote the recovered signal and its support, respectively, in the $p$-th iteration. The difference between the recovered signal
$\mathbf{\hat{x}}^p$ and the input signal $\mathbf{x}$ is denoted using $\Delta\mathbf{x}^p$. The recovery error and the CV residual corresponding to $\mathbf{\hat{x}}^p$ are denoted by $\varepsilon^p_{\rm x}$ and $\epsilon_{\rm{cv}}^p$, respectively.
\begin{equation*}
\Delta\mathbf{x}^p \triangleq \mathbf{x} - \mathbf{\hat{x}}^p, \qquad   \varepsilon^p_{\rm x} \triangleq \Vert \Delta\mathbf{x}^p \Vert_2^2, \qquad
\epsilon_{\rm{cv}}^p \triangleq \Vert \mathbf{y}_{\rm{cv}} - \mathbf{A}_{\rm{cv}} \mathbf{\hat{x}}^p \Vert _2^2. 
\end{equation*}

Here we are trying to present some intuition about how OMP-CV works. Please refer to Figure \ref{fig OMP-CV behavior}, which demonstrates the evolution by iteration of residual, CV residual, and recovery error. One may notice that the trend of residual in iteration behaves abruptly different comparing to that of recovery error, as soon as the reconstructed signal starts to overfit the noise. Therefore, residual fails to serve as an indicator for correctly terminating the algorithm. However, the CV residual evolves similarly as that of the recovery error. This is the reason that CV modification could improve the performance of OMP and other related greedy algorithms. 

\begin{figure}
\begin{center}
\includegraphics[scale=0.6]{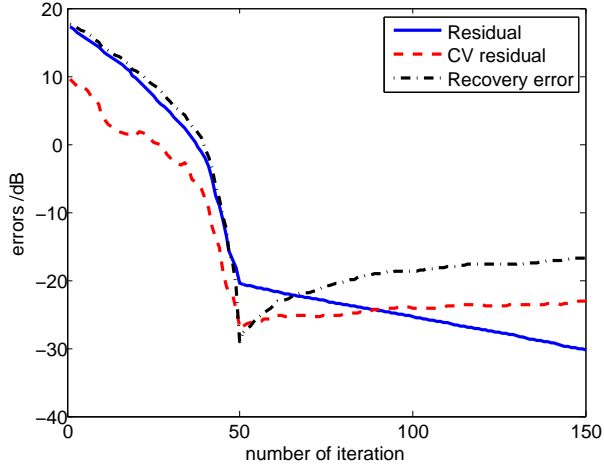}
\caption{The evolution of residual, CV residual, and recovery error of OMP-CV. \label{fig OMP-CV behavior}}
\end{center}
\end{figure}

We would like to emphasize that OMP-CV is a highly practical algorithm. OMP-CV does not require prior information such as noise level or sparsity. Instead, only the maximum number of iterations is required as input\footnote{One may notice that the number of maximum iteration cannot be greater than that of the measurements, because regular OMP will produce zero residual after that and conclude.}. Furthermore, OMP-CV provides an estimate of the recovery error in its CV residual. This is very helpful because it could be immediately detected if the algorithm did not recover the signal well. Finally, it is worthwhile to mention that by properly setting $m_{\rm{cv}}$, the recovery performance of OMP-CV competes with that of OMP even when the accurate information of noise level is given for the latter. These advantages will be supported both theoretically and empirically in following part of this work.

\subsection{Restricted Isometry Property}

In compressive sensing literature, a large body of works focuses on the theoretical analysis of sparse recovery algorithm performance. Among these theoretical analysis, the RIP \cite{candes2006near} becomes one of the most helpful and widely used tools. It quantifies the idea that the geometry of sparse signals should be preserved under the mapping of sensing matrix. In this paper, the RIP is frequently used to study the internal structure of reconstructed signals acquired in different iterations of OMP.  

\begin{definition} \cite{dai2009subspace}\label{definition RIP}
(RIP):  A sensing matrix $\mathbf{A} \in \mathbb{R}^{m \times N}$ is said to satisfy the Restricted Isometry Property with parameters $(k,\delta)$ for $k \leq m$, $0 \leq \delta \leq 1$, if for all index sets $T\subset \{1, \cdots , N \}$ such that $\vert T \vert \leq k$ and for all $\mathbf{x} \in \mathbb{R}^{N}$, one has 
\begin{equation}
(1-\delta)\Vert \mathbf{x}_{T} \Vert_2^2 \leq \Vert \mathbf{A}_T \mathbf{x}_T \Vert_2^2 \leq (1+\delta) \Vert \mathbf{x}_T \Vert_2^2.
\end{equation}
We define $\delta_k$, the Restricted Isometry Constant (RIC), as the infimum of all parameters $\delta$ for which the RIP holds.
\end{definition}

\begin{remark} \cite{dai2009subspace}
(RIP and eigenvalues):If a sensing matrix $\mathbf{A}\in \mathbb{R}^{m \times N}$ satisfies the RIP with parameters $(k,\delta_k)$, then for all $T \subset \{1,\cdots,N\}$ such that $\vert T \vert \leq k$, it holds that 
\begin{equation}
1-\delta_k \leq \lambda_{\min}(\mathbf{A}_T' \mathbf{A}_T) \leq \lambda_{\max}(\mathbf{A}_T' \mathbf{A}_T)\leq 1+\delta_k,
\end{equation}
where $(\cdot)', \lambda_{\min}(\cdot)$, and $\lambda_{\max}(\cdot)$ denote the transpose, the minimal and maximal eigenvalues of a matrix, respectively.
\end{remark}

Most known families of matrices satisfying the RIP with optimal or near-optimal performance guarantees are random. Specifically, in this paper we will deal with Gaussian random matrix, which has the following property. 

\begin{remark} \cite{needell2009cosamp}
(Matrices satisfying the RIP): If the entries of $\sqrt{m} \mathbf{A}$ are independent and identically distributed standard normal variables and if 
\begin{equation}
m \geq C {k \log (N/k)}/{\tau^2},
\end{equation}
where $C$ is a constant, then, one has $\delta_k \leq \tau$
except with probability $N^{-1}$.
\end{remark}

If a sensing matrix satisfies the RIP, it has some other properties that are required in our analysis. 
The first one is a simple translation of Definition \ref{definition RIP}. 

\begin{lemma} \label{RIP lemma 1}
Suppose $\mathbf{A}$ has an RIP of $\delta_k$ and $T$ is a set of $k$ indices or fewer. For all $\mathbf{x} \in \mathbb{R}^{N}$, one has 
\begin{gather}
\sqrt{(1-\delta_k)}\Vert \mathbf{x}_T \Vert_2 \leq \Vert \mathbf{A}_T \mathbf{x}_T \Vert_2 \leq \sqrt{(1+\delta_k)} \Vert \mathbf{x}_T \Vert_2,\\
(1-\delta_k)\Vert \mathbf{x}_T \Vert_2 \leq \Vert \mathbf{A}'_T \mathbf{A}_T \mathbf{x}_T \Vert_2 \leq (1+\delta_k) \Vert \mathbf{x}_T \Vert_2,\\
\frac{1}{(1+\delta_k)}\Vert \mathbf{x}_T \Vert_2 \leq \Vert (\mathbf{A}'_T \mathbf{A}_T)^{-1} \mathbf{x}_T \Vert_2 \leq \frac{1}{(1-\delta_k)} \Vert \mathbf{x}_T \Vert_2.
\end{gather}
\end{lemma}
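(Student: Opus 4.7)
The plan is to deduce all three inequalities from the eigenvalue bounds on $\mathbf{A}_T'\mathbf{A}_T$ that are recorded in the remark just before the lemma, namely $1-\delta_k\leq \lambda_{\min}(\mathbf{A}_T'\mathbf{A}_T)\leq \lambda_{\max}(\mathbf{A}_T'\mathbf{A}_T)\leq 1+\delta_k$. Since the matrix $\mathbf{A}_T'\mathbf{A}_T$ is real symmetric and positive semidefinite whenever $|T|\le k$ and $\delta_k<1$, it admits an orthonormal eigendecomposition; writing a vector in the eigenbasis and reading off the squared norms will immediately give all three claims.

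For the first inequality, I would simply apply the RIP bound $(1-\delta_k)\|\mathbf{x}_T\|_2^2\leq \|\mathbf{A}_T\mathbf{x}_T\|_2^2\leq (1+\delta_k)\|\mathbf{x}_T\|_2^2$ from Definition \ref{definition RIP} and take square roots of each side; this step is essentially a restatement. For the second inequality, I would let $\mathbf{A}_T'\mathbf{A}_T=\mathbf{U}\boldsymbol{\Lambda}\mathbf{U}'$ with eigenvalues $\lambda_1,\dots,\lambda_{|T|}\in[1-\delta_k,1+\delta_k]$, write $\mathbf{x}_T=\sum_i c_i \mathbf{u}_i$, and observe that
\begin{equation*}
\|\mathbf{A}_T'\mathbf{A}_T\mathbf{x}_T\|_2^2=\sum_i \lambda_i^2 c_i^2,\qquad \|\mathbf{x}_T\|_2^2=\sum_i c_i^2.
\end{equation*}
Sandwiching $\lambda_i^2$ between $(1-\delta_k)^2$ and $(1+\delta_k)^2$ and taking square roots then yields the stated bounds.

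For the third inequality I would use the same eigendecomposition. Since $\mathbf{A}_T'\mathbf{A}_T$ has all eigenvalues in $[1-\delta_k,1+\delta_k]\subset (0,2]$, it is invertible and $(\mathbf{A}_T'\mathbf{A}_T)^{-1}$ has eigenvalues $1/\lambda_i\in[1/(1+\delta_k),1/(1-\delta_k)]$. The same expansion in the eigenbasis then gives
\begin{equation*}
\|(\mathbf{A}_T'\mathbf{A}_T)^{-1}\mathbf{x}_T\|_2^2=\sum_i \lambda_i^{-2} c_i^2,
\end{equation*}
and bounding $\lambda_i^{-2}$ between $1/(1+\delta_k)^2$ and $1/(1-\delta_k)^2$ completes the proof after taking square roots.

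There is no genuine obstacle here: the entire argument reduces to reading off singular values of $\mathbf{A}_T$ (equivalently eigenvalues of $\mathbf{A}_T'\mathbf{A}_T$ and its inverse) and applying the spectral bound $\lambda_{\min}\|\mathbf{v}\|\leq \|\mathbf{M}\mathbf{v}\|\leq \lambda_{\max}\|\mathbf{v}\|$ for symmetric positive definite $\mathbf{M}$. The only minor care point is to note that invertibility in the third inequality requires $\delta_k<1$, which is implicitly assumed by the hypothesis $0\leq \delta\leq 1$ in Definition \ref{definition RIP} together with the RIP being nondegenerate on $T$.
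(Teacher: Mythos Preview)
Your proposal is correct and matches the paper's treatment: the paper does not give a detailed proof of this lemma, stating only that it is ``a simple translation of Definition~\ref{definition RIP}'' and citing the eigenvalue remark, which is precisely the spectral bound you invoke. Your eigendecomposition argument is the standard way to make this translation explicit, and there is nothing to add.
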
 

A second consequence is that the disjoint sets of columns from the sensing matrix span nearly orthogonal subspaces. To quantify this observation, we have the following results.

\begin{lemma} \cite{needell2009cosamp} \label{RIP lemma 2}
(Approximate orthogonality): Suppose $\mathbf{A}$ has a RIC of $\delta_{\vert S \vert + \vert T \vert}$, where $S,T \subset \{1,2, \cdots ,N\}$ are disjoint sets. One has 
\begin{equation}
\Vert \mathbf{A}_S' \mathbf{A}_T \Vert_2 \leq  \delta_{\vert S \vert + \vert T \vert}.
\end{equation}
\end{lemma}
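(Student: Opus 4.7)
The plan is to reduce the spectral-norm bound to a bilinear-form estimate and then apply the RIP definition via a polarization (parallelogram) identity. Specifically, by the variational characterization of the operator norm,
\[
\Vert \mathbf{A}_S' \mathbf{A}_T \Vert_2 = \sup_{\Vert\mathbf{u}\Vert_2=\Vert\mathbf{v}\Vert_2=1} \bigl| \mathbf{u}' \mathbf{A}_S' \mathbf{A}_T \mathbf{v} \bigr|,
\]
where $\mathbf{u} \in \mathbb{R}^{\vert S\vert}$ and $\mathbf{v} \in \mathbb{R}^{\vert T\vert}$. So it suffices to show that for every such pair of unit vectors, $\bigl| \mathbf{u}' \mathbf{A}_S' \mathbf{A}_T \mathbf{v} \bigr| \leq \delta_{\vert S\vert + \vert T\vert}$.

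Next I would build two test vectors in $\mathbb{R}^N$ that exploit the disjointness of $S$ and $T$. Define $\mathbf{x}_{\pm} \in \mathbb{R}^N$ by placing $\mathbf{u}$ on the coordinates indexed by $S$ and $\pm\mathbf{v}$ on those indexed by $T$, and zero elsewhere. Since $S$ and $T$ are disjoint, each $\mathbf{x}_{\pm}$ is supported on the set $S \cup T$ of cardinality $\vert S\vert + \vert T\vert$, and $\Vert \mathbf{x}_{\pm} \Vert_2^2 = 2$. Applying the RIP of order $\vert S\vert + \vert T\vert$ gives
\[
2(1-\delta_{\vert S\vert+\vert T\vert}) \leq \Vert \mathbf{A}\mathbf{x}_{\pm} \Vert_2^2 \leq 2(1+\delta_{\vert S\vert+\vert T\vert}).
\]

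The key algebraic step is the polarization identity
\[
\Vert \mathbf{A}\mathbf{x}_{+} \Vert_2^2 - \Vert \mathbf{A}\mathbf{x}_{-} \Vert_2^2 = 4 \langle \mathbf{A}_S \mathbf{u}, \mathbf{A}_T \mathbf{v}\rangle = 4\, \mathbf{u}' \mathbf{A}_S' \mathbf{A}_T \mathbf{v},
\]
where the cross terms $\Vert \mathbf{A}_S\mathbf{u}\Vert_2^2$ and $\Vert \mathbf{A}_T\mathbf{v}\Vert_2^2$ cancel because they appear identically in both expansions. Combining this with the two RIP bounds above yields $4 \bigl| \mathbf{u}' \mathbf{A}_S' \mathbf{A}_T \mathbf{v} \bigr| \leq 4\,\delta_{\vert S\vert + \vert T\vert}$, and taking the supremum over unit $\mathbf{u},\mathbf{v}$ finishes the proof.

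I do not anticipate a serious obstacle here: the only subtle point is making sure the supports are truly disjoint so that $\mathbf{x}_{\pm}$ are genuinely $(\vert S\vert+\vert T\vert)$-sparse and the cross terms in $\Vert\mathbf{A}\mathbf{x}_{\pm}\Vert_2^2$ really isolate the quantity $\mathbf{u}'\mathbf{A}_S'\mathbf{A}_T\mathbf{v}$. Disjointness is given by hypothesis, so the polarization trick goes through cleanly.
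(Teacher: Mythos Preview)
Your argument is correct: the variational characterization of the spectral norm together with the polarization identity applied to the $(\vert S\vert+\vert T\vert)$-sparse vectors $\mathbf{x}_{\pm}$ is exactly the standard route to this inequality, and every step goes through as you wrote it.

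Note, however, that the paper does not actually prove this lemma; it is quoted from \cite{needell2009cosamp} without argument. Your proof is essentially the same as the one in that reference, so there is nothing to compare beyond saying that you have reproduced the classical polarization proof accurately.
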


\begin{lemma} \cite{needell2009cosamp} \label{RIP lemma 3}
Let $S,T \subset \{1,2, \cdots ,N\}$ be two disjoint sets and suppose that $\delta_{\vert S \vert + \vert T \vert} < 1$. For all $\mathbf{x} \in \mathbb{R}^{N}$, one has
\begin{equation}
\Vert \mathbf{A}_S' \mathbf{A}_T \mathbf{x}_T\Vert_2 \leq  \delta_{\vert S \vert + \vert T \vert} \Vert \mathbf{x}_T\Vert_2.
\end{equation}
\end{lemma}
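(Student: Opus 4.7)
The plan is to recognize this statement as an immediate consequence of Lemma \ref{RIP lemma 2}. My first step would be to invoke that lemma on the disjoint sets $S$ and $T$---the hypothesis $\delta_{\vert S \vert + \vert T \vert} < 1$ is included only to make the bound meaningful---to obtain $\Vert \mathbf{A}_S' \mathbf{A}_T \Vert_2 \leq \delta_{\vert S \vert + \vert T \vert}$, where $\Vert \cdot \Vert_2$ on the left denotes the spectral (operator $2$-)norm. The second step is to apply the sub-multiplicative inequality $\Vert \mathbf{M} \mathbf{v} \Vert_2 \leq \Vert \mathbf{M} \Vert_2 \Vert \mathbf{v} \Vert_2$ with $\mathbf{M} = \mathbf{A}_S' \mathbf{A}_T$ and $\mathbf{v} = \mathbf{x}_T$, and then chain the two bounds together to reach the claim.

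Since the entire argument collapses to invoking a previously stated lemma together with the definition of operator norm, there is essentially no technical obstacle to overcome. The only small detail worth checking is that $\mathbf{x}_T$ should be viewed as a vector of length $\vert T \vert$ (or equivalently as a vector in $\mathbb{R}^N$ whose support lies in $T$) so that the product $\mathbf{A}_T \mathbf{x}_T$ is well defined; this matches the notation introduced in Section \ref{problem formulation}.

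Had Lemma \ref{RIP lemma 2} not been available, a self-contained alternative would be to start from the polarization identity $4 \mathbf{u}' \mathbf{A}_S' \mathbf{A}_T \mathbf{x}_T = \Vert \mathbf{A}(\mathbf{u} + \mathbf{x}_T) \Vert_2^2 - \Vert \mathbf{A}(\mathbf{u} - \mathbf{x}_T) \Vert_2^2$ for any $\mathbf{u}$ supported on $S$, bound both squared norms using the RIP on the combined set $S \cup T$ (which is legitimate because $S$ and $T$ are disjoint so $\vert S \cup T \vert \leq \vert S \vert + \vert T \vert$), and then take a supremum over unit $\mathbf{u}$ to extract the desired inequality. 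However, since this route essentially reproduces Lemma \ref{RIP lemma 2} as an intermediate step, it would be redundant in the present context, and I would simply cite Lemma \ref{RIP lemma 2}.
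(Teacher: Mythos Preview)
Your proposal is correct and is precisely the natural argument. Note, however, that the paper does not supply its own proof of this lemma: it is stated with a citation to \cite{needell2009cosamp} and left unproved, so there is no ``paper's proof'' to compare against. Your derivation---invoke Lemma~\ref{RIP lemma 2} for the operator-norm bound and then apply $\Vert \mathbf{M}\mathbf{v}\Vert_2 \le \Vert \mathbf{M}\Vert_2 \Vert \mathbf{v}\Vert_2$---is exactly how the cited source obtains it as well.
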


Based on the Lemma \ref{RIP lemma 1} and Lemma \ref{RIP lemma 3}, we derived the following lemma.
\begin{lemma} \label{RIP lemma 4}
Let $S,T \subset \{1,2, \cdots ,N\}$ be two disjoint sets and suppose that $\delta_{\vert S \vert + \vert T \vert} < 1$.  For all $\mathbf{x} \in \mathbb{R}^{N}$, one has
\begin{equation}
\Vert \mathbf{A}_S^ \dagger  \mathbf{A}_T \mathbf{x}_T\Vert_2 \leq  \frac{\delta_{\vert S \vert + \vert T \vert}}{1 - \delta_{\vert S \vert}} \Vert \mathbf{x}_T\Vert_2.
\end{equation}
\end{lemma}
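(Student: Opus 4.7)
The plan is to expand the pseudoinverse in its normal-equations form and then apply the two previous RIP lemmas separately to the two resulting factors. Since the hypothesis $\delta_{|S|+|T|} < 1$ implies $\delta_{|S|} < 1$, Lemma \ref{RIP lemma 1} guarantees that $\mathbf{A}_S$ has full column rank, so the pseudoinverse admits the explicit formula
\begin{equation*}
\mathbf{A}_S^\dagger = (\mathbf{A}_S' \mathbf{A}_S)^{-1} \mathbf{A}_S'.
\end{equation*}

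Next I would apply submultiplicativity of the spectral norm to write
\begin{equation*}
\Vert \mathbf{A}_S^\dagger \mathbf{A}_T \mathbf{x}_T\Vert_2 = \bigl\Vert (\mathbf{A}_S' \mathbf{A}_S)^{-1} \bigl(\mathbf{A}_S' \mathbf{A}_T \mathbf{x}_T\bigr)\bigr\Vert_2 \leq \bigl\Vert (\mathbf{A}_S' \mathbf{A}_S)^{-1}\bigr\Vert_{2\to 2} \cdot \Vert \mathbf{A}_S' \mathbf{A}_T \mathbf{x}_T\Vert_2.
\end{equation*}
The first factor is controlled by the third inequality of Lemma \ref{RIP lemma 1} (equivalently, the eigenvalue bound in the Remark following Definition \ref{definition RIP}), which gives $\Vert (\mathbf{A}_S' \mathbf{A}_S)^{-1}\Vert_{2\to 2} \leq 1/(1 - \delta_{|S|})$. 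The second factor is exactly the object bounded by Lemma \ref{RIP lemma 3}, yielding $\Vert \mathbf{A}_S' \mathbf{A}_T \mathbf{x}_T\Vert_2 \leq \delta_{|S|+|T|} \Vert \mathbf{x}_T\Vert_2$. Multiplying the two bounds gives the claim.

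There is no real obstacle here; the only subtlety is matching the $\mathbb{R}^N$ indexing convention of Lemma \ref{RIP lemma 1} with the $|S|\times|S|$ matrix $\mathbf{A}_S' \mathbf{A}_S$ acting on the vector $\mathbf{A}_S' \mathbf{A}_T \mathbf{x}_T \in \mathbb{R}^{|S|}$, which is handled by interpreting the third inequality of Lemma \ref{RIP lemma 1} as a bound on the spectral norm of the inverse Gram matrix on its natural $|S|$-dimensional domain.
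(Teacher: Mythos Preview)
Your argument is correct and matches the paper's proof essentially line for line: expand $\mathbf{A}_S^\dagger = (\mathbf{A}_S'\mathbf{A}_S)^{-1}\mathbf{A}_S'$, bound the action of $(\mathbf{A}_S'\mathbf{A}_S)^{-1}$ by $1/(1-\delta_{|S|})$ via Lemma~\ref{RIP lemma 1}, and bound $\Vert \mathbf{A}_S'\mathbf{A}_T\mathbf{x}_T\Vert_2$ by $\delta_{|S|+|T|}\Vert\mathbf{x}_T\Vert_2$ via Lemma~\ref{RIP lemma 3}. The only cosmetic difference is that the paper applies Lemma~\ref{RIP lemma 1} directly as a vector inequality rather than passing through the operator norm, which is immaterial.
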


\begin{proof}
The proof of Lemma \ref{RIP lemma 4} is postponed to Appendix \ref{app RIP lemma 4}.
\end{proof}

Before the statement of the last property, we make a definition of an orthogonal projection operator $P_T$.

\begin{definition}\label{definition P}
Let $\mathbf{A} \in \mathbb{R}^{m\times N}$ and $T \subset \{1,\cdots,N\}$. Define an operator $P_T$ as 
\begin{equation}
P_T \triangleq {\bf I} - \mathbf{A}_T \mathbf{A}_T^\dagger,
\end{equation}
where $(\cdot)^\dagger$ denotes the pseudo-inverse of a matrix.
\end{definition}

\begin{remark}
$P_T$ is an orthogonal projection operator whose function is to remove the component of a vector that is in the space spanned by $\mathbf{A}_T$. 
\end{remark}

\begin{lemma} \cite{davenport2009compressive} \label{RIP lemma 5}
Let $S,T \subset \{1,2, \cdots ,N\}$ be two disjoint sets and suppose that $\delta_{\vert S \vert + \vert T \vert} < 1$. For all $\mathbf{x} \in \mathbb{R}^{N}$, one has
\begin{equation}
\sqrt{1-\left( \frac{\delta_{\vert S \vert + \vert T \vert}}{1-\delta_{\vert S \vert + \vert T \vert}}\right)^2} \Vert \mathbf{A}_T \mathbf{x}_T \Vert_2 \leq \Vert P_S \mathbf{A}_T \mathbf{x}_T \Vert_2  \leq \Vert \mathbf{A}_T \mathbf{x}_T \Vert_2.
\end{equation}
\end{lemma}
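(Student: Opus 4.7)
The upper bound is immediate, since $P_S = \mathbf{I} - \mathbf{A}_S \mathbf{A}_S^\dagger$ is an orthogonal projection and therefore non-expansive. All the work lies in the lower bound, for which the plan is to exploit the orthogonal decomposition induced by $P_S$ itself and then chain together the RIP estimates already collected above.

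Write $\delta = \delta_{|S|+|T|}$ for brevity. Because $P_S$ projects onto the orthogonal complement of $\mathrm{range}(\mathbf{A}_S)$ while $\mathbf{I}-P_S = \mathbf{A}_S \mathbf{A}_S^\dagger$ projects onto $\mathrm{range}(\mathbf{A}_S)$, Pythagoras yields
\begin{equation*}
\|\mathbf{A}_T \mathbf{x}_T\|_2^2 = \|P_S \mathbf{A}_T \mathbf{x}_T\|_2^2 + \|\mathbf{A}_S \mathbf{A}_S^\dagger \mathbf{A}_T \mathbf{x}_T\|_2^2.
\end{equation*}
So it suffices to prove the single estimate $\|\mathbf{A}_S \mathbf{A}_S^\dagger \mathbf{A}_T \mathbf{x}_T\|_2 \leq \frac{\delta}{1-\delta}\|\mathbf{A}_T \mathbf{x}_T\|_2$; plugging this into the identity above and rearranging produces exactly the claimed bound.

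Set $\mathbf{z} = \mathbf{A}_S^\dagger \mathbf{A}_T \mathbf{x}_T = (\mathbf{A}_S' \mathbf{A}_S)^{-1} \mathbf{A}_S' \mathbf{A}_T \mathbf{x}_T$. The normal equations $\mathbf{A}_S' \mathbf{A}_S \mathbf{z} = \mathbf{A}_S' \mathbf{A}_T \mathbf{x}_T$ let me rewrite $\|\mathbf{A}_S \mathbf{z}\|_2^2 = \mathbf{z}' \mathbf{A}_S' \mathbf{A}_T \mathbf{x}_T$, which by Cauchy--Schwarz is at most $\|\mathbf{z}\|_2 \|\mathbf{A}_S' \mathbf{A}_T \mathbf{x}_T\|_2$. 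Invoking the lower RIP estimate $\|\mathbf{z}\|_2 \leq \|\mathbf{A}_S \mathbf{z}\|_2 / \sqrt{1-\delta_{|S|}}$ from Lemma \ref{RIP lemma 1} and cancelling one factor of $\|\mathbf{A}_S \mathbf{z}\|_2$ gives
\begin{equation*}
\|\mathbf{A}_S \mathbf{z}\|_2 \;\leq\; \frac{1}{\sqrt{1-\delta_{|S|}}}\, \|\mathbf{A}_S' \mathbf{A}_T \mathbf{x}_T\|_2.
\end{equation*}
Lemma \ref{RIP lemma 3} then bounds $\|\mathbf{A}_S' \mathbf{A}_T \mathbf{x}_T\|_2 \leq \delta \|\mathbf{x}_T\|_2$, and a further application of Lemma \ref{RIP lemma 1} gives $\|\mathbf{x}_T\|_2 \leq \|\mathbf{A}_T \mathbf{x}_T\|_2 / \sqrt{1-\delta_{|T|}}$. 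Chaining and using the monotonicity $\delta_{|S|},\delta_{|T|} \leq \delta_{|S|+|T|} = \delta$ collapses the product $\sqrt{(1-\delta_{|S|})(1-\delta_{|T|})}$ to a lower bound $1-\delta$, delivering $\|\mathbf{A}_S \mathbf{z}\|_2 \leq \frac{\delta}{1-\delta}\|\mathbf{A}_T \mathbf{x}_T\|_2$.

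The one slightly non-obvious manoeuvre, and the only place I expect a reader to pause, is the step that converts a bound in terms of $\|\mathbf{x}_T\|_2$ (the natural output of Lemma \ref{RIP lemma 3}) into one in terms of $\|\mathbf{A}_T \mathbf{x}_T\|_2$, since the latter is what actually appears in the statement. That conversion costs a factor $\sqrt{1-\delta_{|T|}}$ which, together with the $\sqrt{1-\delta_{|S|}}$ picked up from inverting $\mathbf{A}_S' \mathbf{A}_S$, is precisely what gives the $1-\delta$ in the denominator of the final constant. Beyond this bookkeeping, the argument is a routine combination of the RIP facts already assembled in Lemmas \ref{RIP lemma 1}--\ref{RIP lemma 3}.
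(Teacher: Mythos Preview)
The paper does not actually prove this lemma; it is stated with a citation to \cite{davenport2009compressive} and used as a black box. There is therefore no in-paper proof to compare against.

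Your argument is correct. The Pythagorean split $\|\mathbf{A}_T\mathbf{x}_T\|_2^2 = \|P_S\mathbf{A}_T\mathbf{x}_T\|_2^2 + \|\mathbf{A}_S\mathbf{A}_S^\dagger\mathbf{A}_T\mathbf{x}_T\|_2^2$ reduces the lower bound to controlling $\|\mathbf{A}_S\mathbf{z}\|_2$ with $\mathbf{z}=\mathbf{A}_S^\dagger\mathbf{A}_T\mathbf{x}_T$, and your chain
\[
\|\mathbf{A}_S\mathbf{z}\|_2^2 = \mathbf{z}'\mathbf{A}_S'\mathbf{A}_T\mathbf{x}_T \le \|\mathbf{z}\|_2\,\|\mathbf{A}_S'\mathbf{A}_T\mathbf{x}_T\|_2 \le \frac{\|\mathbf{A}_S\mathbf{z}\|_2}{\sqrt{1-\delta_{|S|}}}\cdot\delta\,\|\mathbf{x}_T\|_2 \le \frac{\|\mathbf{A}_S\mathbf{z}\|_2}{\sqrt{1-\delta_{|S|}}}\cdot\frac{\delta}{\sqrt{1-\delta_{|T|}}}\,\|\mathbf{A}_T\mathbf{x}_T\|_2
\]
is exactly the route taken in the cited reference, and the monotonicity $\delta_{|S|},\delta_{|T|}\le\delta_{|S|+|T|}$ cleanly collapses the constants to $\delta/(1-\delta)$. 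One cosmetic point: when $\delta_{|S|+|T|}>1/2$ the quantity under the square root in the statement is negative and the lower bound is vacuous; this is a feature of the lemma as stated, not a defect in your proof.
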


\section{Cross Validation in Compressive Sensing \label{section cross validation}}
The results of general cross validation techniques in compressive sensing are presented in this section. These techniques will be utilized in the later analysis of OMP-CV algorithm. We would like to emphasize that besides the application of analyzing OMP-CV algorithm, the content in this section is fundamentally general and can be used to understand other CV-based sparse recovery algorithms as well. 

\subsection{Recovery Error Estimation}

First, we start with calculating the probability distribution of $\epsilon_{\rm{cv}}$, which is described by the following lemma.

\begin{lemma}\label{cv lemma 1}
Let $\mathbf{\hat{x}}$ be the recovered signal. Assuming there are enough measurements for cross validation, one has 
\begin{equation}
\epsilon_{\rm{cv}} = \Vert \mathbf{y}_{\rm{cv}} - \mathbf{A}_{\rm{cv}} \mathbf{\hat{x}} \Vert _2^2 \sim \mathcal{N} (\mu, \sigma^2),
\end{equation}
where $\mu = \frac{m_{\rm{cv}}}{m} ( \varepsilon_{\rm{x}}+\sigma_{\rm{n}}^2 ), \sigma^2 = \frac{2 m_{\rm{cv}}}{m^2} (\varepsilon_{\rm{x}}+\sigma_{\rm{n}}^2)^2$, and $\varepsilon_{\rm x} = \Vert \mathbf{x} - \mathbf{\hat{x}} \Vert_2^2$. ${\bf y}_{\rm cv}, {\bf A}_{\rm cv}, m, m_{\rm cv}$, and $\sigma_{\rm n}^2$ are defined in section \ref{problem formulation}.
\end{lemma}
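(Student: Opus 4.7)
The plan is to reduce the CV residual to a scaled chi-squared random variable and then invoke a Gaussian approximation for large $m_{\rm cv}$. First I would rewrite
\[
\mathbf{y}_{\rm cv} - \mathbf{A}_{\rm cv}\mathbf{\hat{x}} \;=\; \mathbf{A}_{\rm cv}(\mathbf{x} - \mathbf{\hat{x}}) + \mathbf{n}_{\rm cv} \;=\; \mathbf{A}_{\rm cv}\Delta\mathbf{x} + \mathbf{n}_{\rm cv},
\]
where by construction $\mathbf{\hat{x}}$ depends only on $\mathbf{A}$ and $\mathbf{y}$ and is therefore independent of the CV measurement noise $\mathbf{n}_{\rm cv}$ and the CV matrix $\mathbf{A}_{\rm cv}$. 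Conditioning on $\mathbf{\hat{x}}$, we may treat $\Delta\mathbf{x}$ as a fixed deterministic vector with $\|\Delta\mathbf{x}\|_2^2 = \varepsilon_{\rm x}$.

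Next I would compute the distribution of each component. Since the rows of $\mathbf{A}_{\rm cv}$ are i.i.d.\ with $\mathcal{N}(0,1/m)$ entries, the $i$-th entry $(\mathbf{A}_{\rm cv}\Delta\mathbf{x})_i = \sum_j (A_{\rm cv})_{ij}\Delta x_j$ is Gaussian with mean zero and variance $\varepsilon_{\rm x}/m$, independent across $i$. The entries of $\mathbf{n}_{\rm cv}$ are independently $\mathcal{N}(0,\sigma_{\rm n}^2/m)$ and independent of $\mathbf{A}_{\rm cv}$, so summing yields that the components of $\mathbf{y}_{\rm cv} - \mathbf{A}_{\rm cv}\mathbf{\hat{x}}$ are i.i.d.\ $\mathcal{N}\!\bigl(0,(\varepsilon_{\rm x}+\sigma_{\rm n}^2)/m\bigr)$.

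Consequently the normalized CV residual
\[
\frac{m}{\varepsilon_{\rm x}+\sigma_{\rm n}^2}\,\epsilon_{\rm cv} \;=\; \sum_{i=1}^{m_{\rm cv}} Z_i^2
\]
with $Z_i$ i.i.d.\ standard normal, is distributed as $\chi^2_{m_{\rm cv}}$. Hence the exact mean and variance of $\epsilon_{\rm cv}$ are $\frac{m_{\rm cv}(\varepsilon_{\rm x}+\sigma_{\rm n}^2)}{m}$ and $\frac{2m_{\rm cv}(\varepsilon_{\rm x}+\sigma_{\rm n}^2)^2}{m^2}$, matching the claimed $\mu$ and $\sigma^2$. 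The Gaussian statement then follows from the classical central limit theorem applied to the sum of the i.i.d.\ $Z_i^2$ terms: when $m_{\rm cv}$ is sufficiently large (which is precisely the ``enough measurements for cross validation'' hypothesis), $\chi^2_{m_{\rm cv}}$ is well approximated by $\mathcal{N}(m_{\rm cv},2m_{\rm cv})$, and rescaling gives the claim.

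The main subtlety, and the only nontrivial step, is the last one: the Gaussian approximation is not exact for finite $m_{\rm cv}$, so the lemma should really be read as an asymptotic/CLT-type statement under the stated assumption on $m_{\rm cv}$. The moment computation and the chi-squared identification are straightforward once the independence between $\mathbf{\hat{x}}$ and $(\mathbf{A}_{\rm cv},\mathbf{n}_{\rm cv})$ is invoked; I expect the writeup to essentially consist of the display above followed by a reference to the CLT.
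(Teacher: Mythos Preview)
Your proposal is correct and follows essentially the same route as the paper's own proof: identify each coordinate of $\mathbf{y}_{\rm cv}-\mathbf{A}_{\rm cv}\mathbf{\hat{x}}$ as an i.i.d.\ $\mathcal{N}\bigl(0,(\varepsilon_{\rm x}+\sigma_{\rm n}^2)/m\bigr)$ variable, recognize $\epsilon_{\rm cv}$ as a scaled $\chi^2_{m_{\rm cv}}$, and then invoke the CLT for the Gaussian approximation. Your explicit remark on the independence of $\mathbf{\hat{x}}$ from $(\mathbf{A}_{\rm cv},\mathbf{n}_{\rm cv})$ and on the approximate nature of the Gaussian claim for finite $m_{\rm cv}$ is a welcome clarification, but the argument is otherwise identical to the paper's.
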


\begin{proof}
The proof is postponed to Appendix \ref{app cv lemma 1}.
\end{proof}

The condition ``there are enough measurements for cross validation'' is required in one step of the proof of Lemma \ref{cv lemma 1}, which is the probability distribution approximation via Central Limit Theorem (CLT). According to Central Limit Theorem, the real probability distribution of $\epsilon_{\rm{cv}}$ converges absolutely to the above approximated result with the increase of $m_{\rm{cv}}$ and the approximation error becomes negligible very fast. In fact, the approximation is rather good when $m_{\rm{cv}}$ is greater than tens. Considering that compressive sensing always deals with large scale problems, such condition could be readily satisfied. This lemma will be verified by simulation result in section \ref{simulation 1}. 

One immediate consequence of Lemma \ref{cv lemma 1} is that we can use $\epsilon_{\rm{cv}}$ to provide estimation for $\varepsilon_{\rm{x}}$ in term of an inequality that holds with certain probability. 

\begin{theorem}\label{cv theorem 1}
(CV estimation): Assuming there are enough measurements for cross validation, the following inequality holds with probability $ \rm{erf} ({\lambda}/{\sqrt{2}})$
\begin{equation}
h(\lambda,+)\epsilon_{\rm{cv}} - \sigma_{\rm{n}}^2 \leq \varepsilon_{\rm{x}} \leq h(\lambda,-)\epsilon_{\rm{cv}} -\sigma_{\rm{n}}^2,
\end{equation}
where
\begin{equation}
h(\lambda,\pm) \triangleq \frac{m}{m_{\rm{cv}}} \frac{1}{1 \pm \lambda \sqrt{{2}/{m_{\rm{cv}}}}},
\end{equation} 
$\lambda$ is a parameter concerning the trade off between the probability and the estimation accuracy, and ${\rm erf} (u)\triangleq \frac{1}{\sqrt{\pi}} \int_{-u}^{u} {\rm e}^{-t^2} {\rm d} t$ denotes the error function of standard Gaussian distribution.
\end{theorem}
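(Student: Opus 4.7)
The plan is to chain Lemma \ref{cv lemma 1} with the standard two-sided Gaussian tail bound and then algebraically invert the resulting inequality for $\epsilon_{\rm cv}$ into an inequality for $\varepsilon_{\rm x}$. By Lemma \ref{cv lemma 1}, $\epsilon_{\rm cv}$ is approximately $\mathcal{N}(\mu,\sigma^2)$ with mean $\mu = \frac{m_{\rm cv}}{m}(\varepsilon_{\rm x} + \sigma_{\rm n}^2)$ and standard deviation $\sigma = \sqrt{2/m_{\rm cv}}\,\mu$, so the ratio $\sigma/\mu$ collapses to the simple factor $\sqrt{2/m_{\rm cv}}$ that controls the multiplicative tightness of the eventual estimate. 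This observation is what makes the final bound depend only on the single dimensionless parameter $\lambda\sqrt{2/m_{\rm cv}}$.

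First I would record the elementary Gaussian fact that for $Z\sim\mathcal{N}(\mu,\sigma^2)$ the event $|Z-\mu|\leq \lambda\sigma$ has probability $\operatorname{erf}(\lambda/\sqrt{2})$, which is immediate from the definition of the error function supplied in the theorem statement. Applying this to $Z=\epsilon_{\rm cv}$ gives, with probability $\operatorname{erf}(\lambda/\sqrt{2})$,
\begin{equation*}
\mu\bigl(1 - \lambda\sqrt{2/m_{\rm cv}}\bigr) \;\leq\; \epsilon_{\rm cv} \;\leq\; \mu\bigl(1 + \lambda\sqrt{2/m_{\rm cv}}\bigr).
\end{equation*}

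Next I would solve for $\mu$. Under the assumption (implicit in the statement, since $h(\lambda,-)$ must be positive) that $\lambda\sqrt{2/m_{\rm cv}}<1$, both bracketed factors are strictly positive, so dividing preserves the inequalities and yields
\begin{equation*}
\frac{\epsilon_{\rm cv}}{1 + \lambda\sqrt{2/m_{\rm cv}}} \;\leq\; \mu \;\leq\; \frac{\epsilon_{\rm cv}}{1 - \lambda\sqrt{2/m_{\rm cv}}}.
\end{equation*}
Substituting $\mu = \frac{m_{\rm cv}}{m}(\varepsilon_{\rm x} + \sigma_{\rm n}^2)$, multiplying through by $m/m_{\rm cv}$, and finally subtracting $\sigma_{\rm n}^2$ converts the chain into the stated bound $h(\lambda,+)\epsilon_{\rm cv} - \sigma_{\rm n}^2 \leq \varepsilon_{\rm x} \leq h(\lambda,-)\epsilon_{\rm cv} - \sigma_{\rm n}^2$.

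There is essentially no hard step: once Lemma \ref{cv lemma 1} is granted, the theorem is a one-line manipulation of a Gaussian confidence interval. The only place where care is needed is to note that the ``enough measurements for cross validation'' hypothesis is inherited from Lemma \ref{cv lemma 1} (so that the Gaussian approximation and hence the $\operatorname{erf}$ probability are accurate), and to remark that $\lambda$ should satisfy $\lambda < \sqrt{m_{\rm cv}/2}$ so that the denominator in $h(\lambda,-)$ stays positive and the division preserves the order of the inequalities. Everything else is routine algebra.
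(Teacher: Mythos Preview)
Your proposal is correct and follows exactly the approach the paper indicates: the paper's own proof is simply ``According to Lemma \ref{cv lemma 1} and the properties of Gaussian distribution, the result can be derived after some basic algebra,'' and you have spelled out precisely that algebra. Your added remark that $\lambda<\sqrt{m_{\rm cv}/2}$ is needed for $h(\lambda,-)$ to be positive is a useful clarification the paper leaves implicit.
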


\begin{proof}
According to Lemma \ref{cv lemma 1} and the properties of Gaussian distribution, the result can by derived after some basic algebra.
\end{proof}

Theorem \ref{cv theorem 1} basically solves the first general CV problem that one can estimate the interval of recovery error $\varepsilon_{\rm{x}}$ by the observed CV residual $\epsilon_{\rm{cv}}$ with probability $ \rm{erf} ({\lambda}/{\sqrt{2}})$. One may notice that the width of the interval is 
\begin{equation}
\frac{m}{m_{\rm{cv}}} \frac{2\lambda\sqrt{2}}{\sqrt{m_{\rm{cv}}} - {2\lambda^2}/{ \sqrt{m_{\rm{cv}}}}}\epsilon_{\rm{cv}},
\end{equation}
which means that the bounds become tighter with the increase of the number of measurements used for cross validation. In particular, if $m_{\rm{cv}}$ is far larger than $2\lambda^2$, one could accurately estimate $\varepsilon_{\rm{x}}$ by the estimator $\frac{m}{m_{\rm{cv}}}\epsilon_{\rm{cv}}-\sigma_{\rm n}^2$. For example, with $\lambda = 3$, $m = 400$, and $m_{\rm{cv}} = 80$, it holds with probability $99.4 \%$ that 
\begin{equation}
3.4\epsilon_{\rm{cv}} - \sigma_{\rm{n}}^2 \leq \varepsilon_{\rm{x}} \leq 9.5\epsilon_{\rm{cv}} -\sigma_{\rm{n}}^2.
\end{equation}

\subsection{Recovery Error Comparison}

In the second general CV problem, we try to compare two recovered signals, $\mathbf{\hat{x}}^p$ and $\mathbf{\hat{x}}^q$. According to Theorem \ref{cv theorem 1}, if $\varepsilon_{\rm{x}}^p$ is larger than $\varepsilon_{\rm{x}}^q$, the CV residuals should also have $\epsilon_{\rm{cv}}^p > \epsilon_{\rm{cv}}^q$ with high probability. Therefore, we could be able to compare the recovery errors by simply comparing their CV residuals. This section presents the mathematical formulation of this probability. 

For simplicity, we have the following definition.
\begin{definition} \label{Generalized sensing matrix and input signal}
(Generalized reconstrction matrix and input signal): 
Let $\mathbf{A}_{\rm{g}} \triangleq [\mathbf{A},\mathbf{a}_{\rm{n}}]$ and $\mathbf{x}_{\rm{g}} \triangleq [\mathbf{x}',\sigma_{\rm{n}}]'$, then 
\begin{equation}
\mathbf{y} = \mathbf{A} \mathbf{x} +\mathbf{n} = \mathbf{A}_{\rm{g}} \mathbf{x}_{\rm{g}},
\end{equation}
where $\mathbf{A}_{\rm{g}}$ is called the \emph{generalized reconstruction matrix}, and $\mathbf{x}_{\rm{g}}$ is called the \emph{generalized input signal}.
\end{definition}
The purpose of making this definition is to simplify our analysis. Compared to the original input signal $\mathbf{x}$, $\mathbf{x}_{\rm{g}}$ has an extra term which represents the noise and can never be recovered. It can be understood as a generalized version of input signal with a part that reflects the measurement noise. The generalized versions of $\Delta\mathbf{x}^p$ and $\varepsilon^p_{\rm x}$ are $\Delta\mathbf{x}_{\rm{g}}^p$ and $\varepsilon_{\rm{g}}^p$, respectively. 
\begin{equation*}
\Delta\mathbf{x}_{\rm{g}}^p \triangleq [{(\Delta \mathbf{x}^p)}',\sigma_{\rm{n}}]', \qquad \varepsilon_{\rm{g}}^p\triangleq \Vert \Delta \mathbf{x}_{\rm{g}}^p \Vert_2^2.
\end{equation*}

We start with calculating the probability distribution of $\Delta \epsilon_{\rm{cv}} = \epsilon_{\rm{cv}}^p - \epsilon_{\rm{cv}}^q$.

\begin{lemma}\label{cv lemma 2}
Let $\mathbf{\hat{x}}^p$ and $\mathbf{\hat{x}}^q$ be two recovered signals.
Assuming there are enough measurements for cross validation, one has
\begin{equation}
\Delta \epsilon_{\rm{cv}} = \epsilon_{\rm{cv}}^p - \epsilon_{\rm{cv}}^{q} \sim \mathcal{N}( \mu, \sigma^2),
\end{equation}
where $\mu = \frac{m_{\rm{cv}}}{m} ( \varepsilon_{\rm{g}}^p - \varepsilon_{\rm{g}}^q )$ and $\sigma^2 =\frac{2 m_{\rm{cv}}}{m^2} [(\varepsilon_{\rm{g}}^p)^2 + (\varepsilon_{\rm{g}}^q)^2 - 2\rho_{\rm{g}}^2 \varepsilon_{\rm{g}}^p \varepsilon_{\rm{g}}^q ]$, and 
\begin{equation}\rho_{\rm{g}} \triangleq \frac{\langle\Delta \mathbf{x}_{\rm{g}}^p, \Delta \mathbf{x}_{\rm{g}}^q\rangle}{\Vert \Delta \mathbf{x}_{\rm{g}}^p\Vert_2 \Vert\Delta \mathbf{x}_{\rm{g}}^q\Vert_2}
\end{equation}
denotes the correlation coefficient of the two generalized recovery error signals.
\end{lemma}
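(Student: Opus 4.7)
The plan is to mirror the strategy used for Lemma \ref{cv lemma 1} but to track the joint Gaussian structure shared by the two residuals, which arises because $\mathbf{\hat{x}}^p$ and $\mathbf{\hat{x}}^q$ are evaluated against the \emph{same} CV matrix and CV noise. First I would absorb the noise into the signal via the generalized notation of Definition \ref{Generalized sensing matrix and input signal}: setting $\mathbf{A}_{\rm{cv,g}} \triangleq [\mathbf{A}_{\rm{cv}},\mathbf{a}_{\rm{cv,n}}]$, one has $\mathbf{y}_{\rm{cv}} - \mathbf{A}_{\rm{cv}} \mathbf{\hat{x}}^p = \mathbf{A}_{\rm{cv,g}} \Delta \mathbf{x}_{\rm{g}}^p$, hence $\epsilon_{\rm{cv}}^p = \Vert \mathbf{A}_{\rm{cv,g}} \Delta \mathbf{x}_{\rm{g}}^p \Vert_2^2$ and likewise for $q$. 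This is the same trick that drove the proof of Lemma \ref{cv lemma 1}, and it reduces the problem to a sum of (dependent within a row) Gaussian quadratics.

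Next, letting $\mathbf{a}_i'$ denote the $i$-th row of $\mathbf{A}_{\rm{cv,g}}$, I would introduce the scalars $Z_i^p \triangleq \mathbf{a}_i' \Delta \mathbf{x}_{\rm{g}}^p$ and $Z_i^q \triangleq \mathbf{a}_i' \Delta \mathbf{x}_{\rm{g}}^q$. Because the entries of $\mathbf{A}_{\rm{cv,g}}$ are i.i.d.\ $\mathcal{N}(0,1/m)$, the pairs $(Z_i^p, Z_i^q)$ are i.i.d.\ across $i$ and, within each $i$, are centered jointly Gaussian with $\mathrm{Var}(Z_i^p) = \varepsilon_{\rm{g}}^p/m$, $\mathrm{Var}(Z_i^q) = \varepsilon_{\rm{g}}^q/m$, and $\mathrm{Cov}(Z_i^p, Z_i^q) = \rho_{\rm{g}}\sqrt{\varepsilon_{\rm{g}}^p \varepsilon_{\rm{g}}^q}/m$. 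Consequently $\Delta \epsilon_{\rm{cv}} = \sum_{i=1}^{m_{\rm{cv}}} W_i$ with $W_i \triangleq (Z_i^p)^2 - (Z_i^q)^2$ is a sum of i.i.d.\ random variables, perfectly set up for the Central Limit Theorem.

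I would then compute the per-term moments. The mean is immediate: $E[W_i] = (\varepsilon_{\rm{g}}^p - \varepsilon_{\rm{g}}^q)/m$. For the variance I invoke Isserlis' (Wick's) theorem for centered jointly Gaussian pairs, namely $E[X^4] = 3(E[X^2])^2$ and $E[X^2 Y^2] = E[X^2] E[Y^2] + 2(E[XY])^2$. After substitution a cross term $4\rho_{\rm{g}}^2 \varepsilon_{\rm{g}}^p \varepsilon_{\rm{g}}^q/m^2$ appears, and subtracting $(E[W_i])^2 = (\varepsilon_{\rm{g}}^p - \varepsilon_{\rm{g}}^q)^2/m^2$ collapses the expression to $\mathrm{Var}(W_i) = (2/m^2)\bigl[(\varepsilon_{\rm{g}}^p)^2 + (\varepsilon_{\rm{g}}^q)^2 - 2\rho_{\rm{g}}^2 \varepsilon_{\rm{g}}^p \varepsilon_{\rm{g}}^q\bigr]$. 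Multiplying by $m_{\rm{cv}}$ yields the claimed $\mu$ and $\sigma^2$.

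Finally, applying the Central Limit Theorem to $\sum_{i=1}^{m_{\rm{cv}}} W_i$ produces the asserted Gaussian approximation, valid whenever $m_{\rm{cv}}$ is large, which is exactly the ``enough measurements for cross validation'' hypothesis used in Lemma \ref{cv lemma 1}. I expect the main obstacle to be the variance calculation: one must apply the fourth-moment Gaussian identity carefully so that the covariance contribution $2\rho_{\rm{g}}^2\varepsilon_{\rm{g}}^p\varepsilon_{\rm{g}}^q$ is retained, and so that the $(\varepsilon_{\rm{g}}^p - \varepsilon_{\rm{g}}^q)^2$ piece coming from $(E[W_i])^2$ cancels the redundant products exactly. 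Everything else is a direct parallel to the single-signal analysis in Lemma \ref{cv lemma 1}.
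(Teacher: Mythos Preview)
Your proposal is correct and reaches the same conclusion as the paper, but the execution is organized differently. The paper's proof in Appendix \ref{app cv lemma 2} does \emph{not} use the generalized vectors $\Delta\mathbf{x}_{\rm g}^p,\Delta\mathbf{x}_{\rm g}^q$ in the derivation itself; instead it keeps the noise separate, applies the difference-of-squares factorization $(Z_i^p)^2-(Z_i^q)^2=(Z_i^p+Z_i^q)(Z_i^p-Z_i^q)$ at the level of matrix entries, and then classifies the resulting summands into three types of elementary random variables ($a_{{\rm cv}_{ij}}^2$, $a_{{\rm cv}_{ij}}a_{{\rm cv}_{ik}}$, $n_{{\rm cv}_i}a_{{\rm cv}_{ij}}$), computing the variance by an explicit double-sum expansion. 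Only at the very end does the paper repackage the answer in terms of $\varepsilon_{\rm g}^p,\varepsilon_{\rm g}^q,\rho_{\rm g}$.

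Your route---absorb the noise into $\mathbf{A}_{\rm cv,g}$ so that $(Z_i^p,Z_i^q)$ is a centered bivariate Gaussian with the stated covariance, and then invoke Isserlis' theorem for the fourth moments---is more streamlined: it bypasses the entry-level bookkeeping entirely and makes the appearance of $\rho_{\rm g}^2$ immediate from $E[(Z_i^p)^2(Z_i^q)^2]=E[(Z_i^p)^2]E[(Z_i^q)^2]+2\bigl(E[Z_i^pZ_i^q]\bigr)^2$. The paper's approach, on the other hand, is more elementary (no fourth-moment identity beyond $\mathrm{Var}(a^2)=2/m^2$ for a Gaussian $a$) and, because it never uses joint Gaussianity, generalizes more directly to the non-Gaussian setting of Lemma \ref{gen cv lemma 2}. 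Both arguments rely on the same CLT step and the same ``enough CV measurements'' caveat.
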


\begin{proof}
The proof of Lemma \ref{cv lemma 2} is deferred to Appendix \ref{app cv lemma 2}.
\end{proof}

With Lemma \ref{cv lemma 2}, we are much closer to the answer to the key question, which is presented in the following theorem.

\begin{theorem}\label{cv theorem 2}
(CV comparison) Let $\mathbf{\hat{x}}^p$ and $\mathbf{\hat{x}}^q$ be two recovered signals and assume there are enough measurements for cross validation. If $\varepsilon_{\rm{x}}^p \geq \varepsilon_{\rm{x}}^q$,  it holds with probability $\Phi(\lambda)$ that $\epsilon_{\rm{cv}}^p \geq \epsilon_{\rm{cv}}^q$, where $\lambda$ is determined by
\begin{equation} \label{cv theorem e2}
\frac{1}{\lambda^2} = \frac{2}{m_{\rm{cv}}} \left[ 1+2(1-\rho_{\rm{g}}^2)\frac{\varepsilon_{\rm{g}}^p \varepsilon_{\rm{g}}^q}{(\varepsilon_{\rm{g}}^p - \varepsilon_{\rm{g}}^q)^2}\right],
\end{equation}
and $\Phi (u) \triangleq \frac{1}{\sqrt{2\pi}}\int_{-\infty}^u {\rm e}^\frac{-t^2}{2} {\rm d}t$ is the cumulative distribution function of standard Gaussian distribution.
\end{theorem}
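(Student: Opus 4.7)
The plan is to reduce the statement to a direct computation based on Lemma \ref{cv lemma 2}. Since Lemma \ref{cv lemma 2} already gives us the full Gaussian distribution of $\Delta \epsilon_{\rm{cv}} \triangleq \epsilon_{\rm{cv}}^p - \epsilon_{\rm{cv}}^q$, the remaining task is to compute the one-sided tail probability $P(\Delta \epsilon_{\rm{cv}} \geq 0)$ and to show that the ratio $\mu/\sigma$ equals the $\lambda$ declared in \eqref{cv theorem e2}.

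First, I would translate the hypothesis $\varepsilon_{\rm{x}}^p \geq \varepsilon_{\rm{x}}^q$ to the generalized quantities. Because $\varepsilon_{\rm{g}}^p = \varepsilon_{\rm{x}}^p + \sigma_{\rm{n}}^2$ and $\varepsilon_{\rm{g}}^q = \varepsilon_{\rm{x}}^q + \sigma_{\rm{n}}^2$ by the definition preceding Lemma \ref{cv lemma 2}, the hypothesis is equivalent to $\varepsilon_{\rm{g}}^p \geq \varepsilon_{\rm{g}}^q$, which guarantees $\mu = \frac{m_{\rm{cv}}}{m}(\varepsilon_{\rm{g}}^p - \varepsilon_{\rm{g}}^q) \geq 0$. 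This sign observation matters because it is what makes the probability $\Phi(\lambda)$ rather than $\Phi(-\lambda)$.

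Next, I would standardize. By Lemma \ref{cv lemma 2}, $(\Delta \epsilon_{\rm{cv}} - \mu)/\sigma$ is standard normal, so
\begin{equation*}
P(\epsilon_{\rm{cv}}^p \geq \epsilon_{\rm{cv}}^q) = P(\Delta \epsilon_{\rm{cv}} \geq 0) = P\!\left(\frac{\Delta \epsilon_{\rm{cv}} - \mu}{\sigma} \geq -\frac{\mu}{\sigma}\right) = \Phi\!\left(\frac{\mu}{\sigma}\right).
\end{equation*}
Thus the theorem amounts to identifying $\lambda = \mu/\sigma$, i.e., verifying that $1/\lambda^2 = \sigma^2/\mu^2$ matches the displayed formula.

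The final step is pure algebra. Plugging in $\mu$ and $\sigma^2$ from Lemma \ref{cv lemma 2},
\begin{equation*}
\frac{\sigma^2}{\mu^2} = \frac{2}{m_{\rm{cv}}} \cdot \frac{(\varepsilon_{\rm{g}}^p)^2 + (\varepsilon_{\rm{g}}^q)^2 - 2\rho_{\rm{g}}^2 \varepsilon_{\rm{g}}^p \varepsilon_{\rm{g}}^q}{(\varepsilon_{\rm{g}}^p - \varepsilon_{\rm{g}}^q)^2},
\end{equation*}
and the numerator on the right can be rewritten as $(\varepsilon_{\rm{g}}^p - \varepsilon_{\rm{g}}^q)^2 + 2(1-\rho_{\rm{g}}^2)\varepsilon_{\rm{g}}^p \varepsilon_{\rm{g}}^q$, which after dividing by $(\varepsilon_{\rm{g}}^p - \varepsilon_{\rm{g}}^q)^2$ produces exactly \eqref{cv theorem e2}. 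There is no serious obstacle here; the entire argument is mechanical once Lemma \ref{cv lemma 2} is in hand. The only small subtleties are (i) handling the degenerate case $\varepsilon_{\rm{g}}^p = \varepsilon_{\rm{g}}^q$ (where $\lambda$ in \eqref{cv theorem e2} is understood as $0$ and the probability degenerates to $1/2$, consistent with $\Phi(0)$), and (ii) remembering that the probability is over the randomness in $\mathbf{A}_{\rm{cv}}$ and $\mathbf{n}_{\rm{cv}}$, while $\mathbf{\hat{x}}^p, \mathbf{\hat{x}}^q$ (and hence $\varepsilon_{\rm{g}}^p, \varepsilon_{\rm{g}}^q, \rho_{\rm{g}}$) are treated as deterministic for this cross-validation comparison, exactly as in the setup of Lemma \ref{cv lemma 2}.
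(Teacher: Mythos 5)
Your proposal is correct and matches the paper's intended argument: the paper's proof of this theorem is a one-line appeal to Lemma \ref{cv lemma 2} plus ``basic algebra,'' and you have simply written out that algebra (standardizing $\Delta\epsilon_{\rm cv}$, identifying $\lambda=\mu/\sigma$, and rewriting the variance numerator as $(\varepsilon_{\rm g}^p-\varepsilon_{\rm g}^q)^2+2(1-\rho_{\rm g}^2)\varepsilon_{\rm g}^p\varepsilon_{\rm g}^q$), all of which checks out. Your remarks on the sign of $\mu$, the degenerate case, and the source of randomness are correct clarifications that the paper leaves implicit.
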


\begin{proof}
According to Lemma \ref{cv lemma 2} and the properties of Gaussian distribution, the result can by derived after some basic algebra.
\end{proof}

$\Phi(\lambda)$ is called \emph{CV comparison success probability}, because it is the probability that the size order of CV residuals correctly evaluates that of the recovery errors. Next we discuss the parameters that determines $\lambda$.  

\begin{itemize}
\item Parameter $m_{\rm{cv}}$:\\
When $m_{\rm{cv}}$ increases, $\frac{1}{\lambda^2}$ decreases, thus $\lambda$ increases. Hence, a larger $m_{\rm{cv}}$ is required to obtain a larger $\lambda$. This corresponds to our intuition: $m_{\rm{cv}}$ is the number of CV measurements and the more CV measurements we have, the better CV performance, which is shown as a higher CV comparison success probability, we should be able to obtain. 
\item Parameter $\rho_{\rm{g}}$:\\
The value of $\lambda$ is in positive relation with $\rho_{\rm{g}}^2$. We would like to note that $\rho_{\rm{g}}$ describes the correlation of $\Delta \mathbf{x}_{\rm{g}}^p$ and $\Delta \mathbf{x}_{\rm{g}}^q$. As an interpretation, CV comparison success probability is often higher if the two recovered signals need to be compared are highly correlated. One possible explanation for this fact is that the similar part of the signal yields similar part in CV measurements and the randomness of the comparison problem comes from the dissimilar part of the recovered signal. From this observation we would like to emphasize that the CV comparison for two highly correlated signals is often of a greater success probability.
\item Parameter $\varepsilon_{\rm{g}}^p$ and $\varepsilon_{\rm{g}}^q$:\\
The influence of $\varepsilon_{\rm{g}}^p$ and $\varepsilon_{\rm{g}}^q$ is depicted by the following theorem: 
\end{itemize}

\begin{theorem} \label{cv theorem 3}
Let the CV measurement number $m_{\rm{cv}}$ and the similarity level $\rho_{\rm{g}}$ be fixed and assume there are enough measurements for cross validation, CV comparison success probability is equal to or higher than $\Phi(\lambda_0)$ if and only if the ratio of generalized signal error $\varepsilon_{\rm{g}}^p / \varepsilon_{\rm{g}}^q$ satisfies
\begin{equation}
\frac{\varepsilon_{\rm{g}}^p}{\varepsilon_{\rm{g}}^q} \geq 2C_0 + 1 + 2\sqrt{C_0^2 + C_0},
\end{equation} 
where $C_0 \triangleq \frac{\lambda_0^2(1-\rho_{\rm{g}}^2)}{m_{\rm{cv}}-2\lambda_0^2}$ is a constant related to $m_{\rm{cv}}$, $\lambda_0$ and $\rho_{\rm{g}}$.
\end{theorem}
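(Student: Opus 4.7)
The plan is to reduce the statement to an elementary quadratic inequality in the ratio $r \triangleq \varepsilon_{\rm g}^p / \varepsilon_{\rm g}^q$, starting from the closed form of $\lambda$ supplied by Theorem \ref{cv theorem 2}. Since $\Phi$ is strictly increasing, the hypothesis ``CV comparison success probability is at least $\Phi(\lambda_0)$'' is equivalent to $\lambda \geq \lambda_0$, and hence to $1/\lambda^2 \leq 1/\lambda_0^2$. So the first step is to substitute the formula from \eqref{cv theorem e2} and isolate the ratio-dependent term; this gives
\begin{equation*}
\frac{\varepsilon_{\rm g}^p \varepsilon_{\rm g}^q}{(\varepsilon_{\rm g}^p - \varepsilon_{\rm g}^q)^2} \;\leq\; \frac{m_{\rm cv} - 2\lambda_0^2}{4\lambda_0^2(1-\rho_{\rm g}^2)} \;=\; \frac{1}{4C_0},
\end{equation*}
where the definition of $C_0$ is used in the last equality. (The hypothesis $\varepsilon_{\rm x}^p \geq \varepsilon_{\rm x}^q$ together with the definition of $\varepsilon_{\rm g}$ guarantees that $\varepsilon_{\rm g}^p \geq \varepsilon_{\rm g}^q$, so $r \geq 1$ and no sign issues arise in dividing by the squared difference.)

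Next I would normalize by $\varepsilon_{\rm g}^q$, introduce $r = \varepsilon_{\rm g}^p / \varepsilon_{\rm g}^q \geq 1$, and rewrite the displayed inequality as $r/(r-1)^2 \leq 1/(4C_0)$, i.e.\ $r^2 - (2+4C_0)\,r + 1 \geq 0$. This quadratic in $r$ has positive discriminant $(2+4C_0)^2 - 4 = 16C_0^2 + 16 C_0$, so its roots are
\begin{equation*}
r_{\pm} \;=\; (1 + 2C_0) \pm 2\sqrt{C_0^2 + C_0}.
\end{equation*}
Since the product of the roots is $1$ and their sum is positive, both roots are positive, $r_- \in (0,1]$ and $r_+ \geq 1$. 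Because we are constrained to $r \geq 1$, the quadratic is nonnegative exactly on $r \geq r_+$, which gives the desired bound $\varepsilon_{\rm g}^p / \varepsilon_{\rm g}^q \geq 2C_0 + 1 + 2\sqrt{C_0^2 + C_0}$.

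Finally I would note that every step in the above chain is an equivalence (the monotonicity of $\Phi$; the algebraic rearrangement; solving the quadratic and discarding the irrelevant root $r_-<1$), so the ``only if'' direction follows by running the same computation backwards. A brief remark should be added that the derivation implicitly assumes $m_{\rm cv} > 2\lambda_0^2$, so that $C_0$ is positive and the rearrangement preserving inequality directions is valid; this is the same large-$m_{\rm cv}$ regime in which Theorem \ref{cv theorem 2} is meaningful. There is no real obstacle here — the entire argument is a bookkeeping exercise converting the formula for $\lambda$ into a quadratic in the generalized error ratio; the only care needed is choosing the correct root and verifying the equivalence of the two directions.
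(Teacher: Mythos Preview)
Your proposal is correct and follows essentially the same route as the paper's proof: both start from the monotonicity of $\Phi$ to convert $\Phi(\lambda)\geq\Phi(\lambda_0)$ into $1/\lambda^2\leq 1/\lambda_0^2$, substitute the formula \eqref{cv theorem e2}, and reduce to an elementary inequality in the ratio $r=\varepsilon_{\rm g}^p/\varepsilon_{\rm g}^q$. The only cosmetic difference is that the paper rewrites the condition as $r+1/r\geq 4C_0+2$ and invokes monotonicity of $r+1/r$ on $r\geq 1$, whereas you solve the equivalent quadratic $r^2-(2+4C_0)r+1\geq 0$ directly; your treatment of the ``if'' direction and the implicit constraint $m_{\rm cv}>2\lambda_0^2$ is in fact slightly more explicit than the paper's.
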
 
This Theorem can be derived from Lemma \ref{cv lemma 2} and Theorem \ref{cv theorem 2}. 

\begin{proof}
In this proof we just prove that if $\Phi (\lambda) \geq \Phi(\lambda_0)$, it holds that 
\begin{equation}
\frac{\varepsilon_{\rm{g}}^p}{\varepsilon_{\rm{g}}^q} \geq 2C_0 + 1 + 2\sqrt{C_0^2 + C_0}.
\end{equation}
The counterpart of the proof is extremely similar so we do not present it here. Since the CV comparison success probability is higher than $\Phi(\lambda_0)$ and the function $\Phi(u)$ is a monotonically increasing function, it holds that $\lambda > \lambda_0$. In addition, because $\lambda$ and $\lambda_0$ are both positive, then
\begin{equation}
\frac{1}{\lambda_0^2} \geq \frac{1}{\lambda^2} = \frac{2}{m_{\rm{cv}}} \left[1+2(1-\rho_{\rm{g}}^2)\frac{\varepsilon_{\rm{g}}^p \varepsilon_{\rm{g}}^q}{(\varepsilon_{\rm{g}}^p - \varepsilon_{\rm{g}}^q)^2}\right].
\end{equation}
Further
\begin{equation}
\frac{\varepsilon_{\rm{g}}^p}{\varepsilon_{\rm{g}}^q} + \frac{\varepsilon_{\rm{g}}^q}{\varepsilon_{\rm{g}}^p} \geq \frac{4\lambda_0^2}{m_{\rm{cv}} - 2\lambda_0^2}(1-\rho_{\rm{g}}^2) + 2.
\end{equation}
Because $\varepsilon_{\rm{g}}^p \geq \varepsilon_{\rm{g}}^q$, the left part of the above inequality monotonically increases with $\varepsilon_{\rm{g}}^p / \varepsilon_{\rm{g}}^q$. Thus 
\begin{equation}
\frac{\varepsilon_{\rm{g}}^p}{\varepsilon_{\rm{g}}^q} \geq \frac{2\lambda_0^2(1-\rho_{\rm{g}}^2)}{m_{\rm{cv}}-2\lambda_0^2} + 1 + \sqrt{\frac{4\lambda_0^4(1-\rho_{\rm{g}}^2)^2}{(m_{\rm{cv}}-2\lambda_0^2)^2} + \frac{4\lambda_0^2(1-\rho_{\rm{g}}^2)}{m_{\rm{cv}}-2\lambda_0^2}}.
\end{equation} 
Using the notation of $C_0$ we may complete the proof:
\begin{equation}
\frac{\varepsilon_{\rm{g}}^p}{\varepsilon_{\rm{g}}^q} \geq 2C_0 + 1 + 2\sqrt{C_0^2 + C_0}.
\end{equation}
where 
\begin{equation}
C_0 = \frac{\lambda_0^2(1-\rho_{\rm{g}}^2)}{m_{\rm{cv}}-2\lambda_0^2}.
\end{equation}
\end{proof}

\begin{remark}
The reason that $\rho_{\rm{g}}$ should be discussed can be explained as follows. For two recovered signals with a high correlation coefficient, their CV comparison success probability is often high. Hence, if we know the lower bound of $\rho_{\rm{g}}$ in advance, a much better CV recovery error comparison performance can be guaranteed. This is also the case of the OMP-CV algorithm. In OMP algorithm, a new index is incorporated into the current support set in each iteration and the recovered signal is determined by the current support set. Thus, the recovered signals of neighboring iterations have an extremely high correlation coefficient. As a result, the CV comparison success probability is very high when we compare the recovery errors of these recovered signals.
\end{remark}

The above analysis solves the recovery error estimation problem and the recovery error comparison problem. These results provide powerful tools for CV-based compressive sensing algorithms. Particularly, we studied the application of cross validation in OMP algorithm. The results are presented in following section.
\section{Sparsity and Noise Robust OMP \label{section OMP-CV}}
This section presents our theoretical analysis of the behavior of OMP-CV. First make the definitions:

\begin{definition} \label{definition alpha}
(Ratio of unrecovered signal and noise): The ratio $\alpha^p \in \mathbb{R}$, defined as\begin{equation}
\alpha^p \triangleq \frac{\Vert \mathbf{x}_{T \backslash T^p}\Vert_2}{\sigma_{\rm{n}}},
\end{equation}
measures to what extent the signal $\mathbf{\hat{x}}$ has not been recovered by $\mathbf{\hat{x}}^p$.
\end{definition}

\begin{definition}
(Oracle output and OMP-CV output): In OPM-CV, by \emph{oracle output} we mean the recovered signal that has the lowest recovery error. By \emph{OMP-CV output} we mean the recovered signal with the lowest CV residual.
\end{definition}

Intuitively, among recovered signals generated in different OMP iterations, the oracle output is most likely to be the OMP-CV output, which does not held all the time due to the randomness of the CV matrix. The following result describes relationship between the OMP-CV output and the oracle output.

\begin{theorem} \label{OMP-CV theorem 1}
In OMP-CV, assume that the oracle output is $\mathbf{\hat{x}}^o$ and $T \subset T^o$. For any recovered signal $\mathbf{\hat{x}}^p$ other than $\mathbf{\hat{x}}^o$: 
\begin{itemize}
\item if $T \backslash T^p \neq \emptyset$, then $\epsilon_{\rm{cv}}^o < \epsilon_{\rm{cv}}^p$ with probability $\Phi(\lambda)$, where 
\begin{equation}
\label{not properly recovered signal}
\lambda \geq \sqrt{\frac{m_{\rm{cv}}}{2}} \sqrt{ 1 - \rm{g}(\alpha^p) };
\end{equation}
\item if $T \backslash T^p = \emptyset$ and if $\mathbf{\hat{x}}^p$ is the OMP-CV output, then with probability greater than $\left\{ 1 - (d-k) [1 - \Phi(\lambda_0)] \right\}$ we have
\begin{equation}
\label{properly recovered signal}
\varepsilon_{\rm{g}}^p \leq C_1 {\varepsilon_{\rm{g}}^o},
\end{equation}
\end{itemize}
where $ \rm{g}(\alpha^p)$ is roughly proportional to $1 /(\alpha^p)^2$, $\lambda_0$ is a constant chosen to decide the probability with which (\ref{properly recovered signal}) holds, and $C_1$ is only related to $\lambda_0$ and $m_{\rm{cv}}$.
\end{theorem}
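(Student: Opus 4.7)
The plan is to treat the two bullets separately, appealing in each case to the cross-validation comparison tools of Section \ref{section cross validation} together with the RIP structural lemmas. Throughout, I would exploit the fact that $T \subseteq T^o$ forces the generalised oracle error $\varepsilon_{\rm g}^o$ to be dominated by noise-driven contributions, while the structural relationship between $\hat{\mathbf{x}}^p$ and $\hat{\mathbf{x}}^o$ governs both $\varepsilon_{\rm g}^p/\varepsilon_{\rm g}^o$ and the correlation $\rho_{\rm g}$.

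For the first bullet ($T \setminus T^p \neq \emptyset$), I would apply Theorem \ref{cv theorem 2} directly to the pair $(\hat{\mathbf{x}}^p, \hat{\mathbf{x}}^o)$. The two quantitative ingredients are: (i) an upper bound $\varepsilon_{\rm g}^o = O(\sigma_{\rm n}^2)$, obtained from the least-squares fit on the oracle support $T^o$ via Lemma \ref{RIP lemma 1}, together with the observation that once $T \subseteq T^o$ the recovery error reduces to projected noise; and (ii) a lower bound $\varepsilon_{\rm g}^p \geq c\,\|\mathbf{x}_{T \setminus T^p}\|_2^2 = c\,(\alpha^p\sigma_{\rm n})^2$, obtained by applying $P_{T^p}$ to $\mathbf{A}\mathbf{x}$ and invoking Lemma \ref{RIP lemma 5}. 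Substituting these into the formula for $\lambda$ in Theorem \ref{cv theorem 2}, the correction term $2(1-\rho_{\rm g}^2)\,\varepsilon_{\rm g}^p\varepsilon_{\rm g}^o/(\varepsilon_{\rm g}^p - \varepsilon_{\rm g}^o)^2$ collapses, when $\varepsilon_{\rm g}^p \gg \varepsilon_{\rm g}^o$, to something of order $\varepsilon_{\rm g}^o/\varepsilon_{\rm g}^p = O(1/(\alpha^p)^2)$. Absorbing this correction into $g(\alpha^p)$ and solving for $\lambda$ gives $\lambda^2 \geq \frac{m_{\rm cv}}{2}(1 - g(\alpha^p))$, which is exactly (\ref{not properly recovered signal}).

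For the second bullet ($T \subseteq T^p$), I would use Theorem \ref{cv theorem 3} combined with a union bound. Since both $\hat{\mathbf{x}}^p$ and $\hat{\mathbf{x}}^o$ already contain the true support, their generalised difference vectors share the same signal component and differ only through the noise-driven columns adjoined beyond $T$; invoking Lemma \ref{RIP lemma 2} and Lemma \ref{RIP lemma 4} to estimate the off-support interference shows that $\rho_{\rm g}$ is close to one, so $C_0 = \lambda_0^2(1-\rho_{\rm g}^2)/(m_{\rm cv} - 2\lambda_0^2)$ and $C_1 = 2C_0 + 1 + 2\sqrt{C_0^2 + C_0}$ are moderate constants depending only on $\lambda_0$, $m_{\rm cv}$ and $\rho_{\rm g}$. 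By Theorem \ref{cv theorem 3}, for any fixed iterate $\hat{\mathbf{x}}^p$ with $T \subseteq T^p$, the event $\{\varepsilon_{\rm g}^p > C_1\varepsilon_{\rm g}^o\}$ implies $\epsilon_{\rm cv}^p > \epsilon_{\rm cv}^o$ with probability at least $\Phi(\lambda_0)$, and hence contradicts $\hat{\mathbf{x}}^p$ being the OMP-CV output. There are at most $d - k$ iterates with $T \subseteq T^p$ other than the oracle, arising in iterations $k+1,\ldots,d$, so a union bound shows that, except on an event of probability at most $(d-k)[1-\Phi(\lambda_0)]$, the OMP-CV output must satisfy $\varepsilon_{\rm g}^p \leq C_1\varepsilon_{\rm g}^o$.

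The main obstacle I anticipate is producing simultaneously valid and clean bounds on $\varepsilon_{\rm g}^o$, $\varepsilon_{\rm g}^p$, and $\rho_{\rm g}$ that translate into an explicit function $g(\alpha^p)$ scaling like $1/(\alpha^p)^2$. The generalised error vectors mix a deterministic $\sigma_{\rm n}$ tail with random RIP-constrained projections through $\mathbf{A}_{T^o}$ and $\mathbf{A}_{T^p}$, and the inner product defining $\rho_{\rm g}$ couples these projections to the residual signal supported on $T \setminus T^p$. Carrying out the necessary RIP bookkeeping while keeping the final expression a transparent function of $\alpha^p$, $m_{\rm cv}$, and the RIP constants is the delicate step; the union-bound stage of the second bullet, in contrast, is straightforward once Theorem \ref{cv theorem 3} has been invoked and $\rho_{\rm g}$ bounded.
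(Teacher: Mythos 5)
Your proposal follows essentially the same route as the paper: Theorem \ref{cv theorem 2} applied to the pair $(\hat{\mathbf{x}}^p,\hat{\mathbf{x}}^o)$ with $\varepsilon_{\rm g}^o=O(\sigma_{\rm n}^2)$ and $\varepsilon_{\rm g}^p\geq(1+(\alpha^p)^2)\sigma_{\rm n}^2$ for the first bullet, and Theorem \ref{cv theorem 3} plus a lower bound on $\rho_{\rm g}$ and a union bound over the at most $d-k$ iterates containing $T$ for the second. The ``delicate RIP bookkeeping'' you defer is exactly what the paper supplies via an explicit block-pseudoinverse relation between the iterates (Lemmas \ref{OMP-CV lemma 1} and \ref{OMP-CV lemma 2}), which yields the concrete $g(\alpha^p)$ and the RIC-dependent lower bound $\rho_{\rm g}\geq\beta_5$; your sketch is otherwise sound.
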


\begin{proof}
One may refer to Appendix \ref{app OMP-CV theorem 1} for the proof of Theorem \ref{OMP-CV theorem 1}. 
\end{proof}

\begin{figure}[h]
\begin{center}
\includegraphics[scale=0.6]{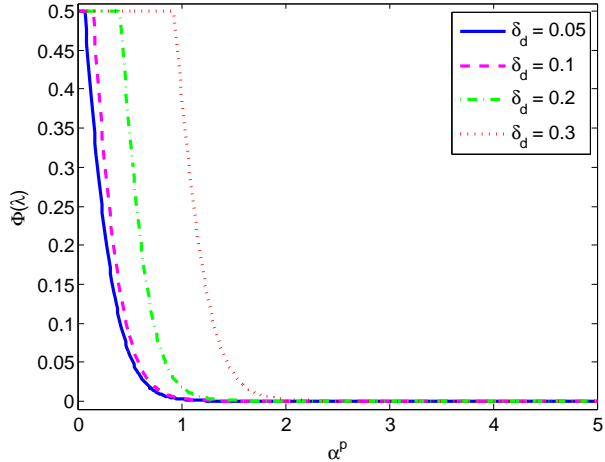}
\caption{This figure plots $\left[ 1-\Phi(\lambda) \right]$, the upper bound of the probability that $\epsilon_{\rm{cv}}^p < \epsilon_{\rm{cv}}^o$, with variation of $\alpha^p$ and RIC $\delta_d$. $\left[ 1-\Phi(\lambda) \right]$ decays sharply as $\alpha ^p$ increases. \label{fig cv success probability}}
\end{center}
\end{figure}

Theorem \ref{OMP-CV theorem 1} supports the recovery performance of OMP-CV. It divides recovered signals into two categories. For $\mathbf{\hat{x}}^p$ with $T \backslash T^p \neq \emptyset$, the
probability $\left[ 1-\Phi(\lambda) \right]$ decays sharply as $\alpha ^p$ increases. Since this is the upper bound of the probability that $\epsilon_{\rm{cv}}^p < \epsilon_{\rm{cv}}^o$, it is nearly impossible for such $\mathbf{\hat{x}}^p$ to be the OMP-CV output. If $\delta_d < 0.1$, e.g., $\left[ 1-\Phi(\lambda) \right]$ would be less than $0.5 \%$ with $\alpha^p = 1$, and further drops to $0.0063 \%$ with $\alpha^p=2$. The probability $\left[ 1-\Phi(\lambda) \right]$ is shown in Figure \ref{fig cv success probability} with $m_{\rm{cv}} = 48$ and variation of $\alpha^p$ and $\delta_d$. As a result, the OMP-CV output recovers all indices of the support set with overwhelming probability. Further, if the OMP-CV output recovers all indices of the support set, its recovery error can be bounded by $C_1 \varepsilon_{\rm{g}}^o$ with high probability showing that the recovery error of OMP-CV output is very close to that of the oracle output.

\begin{remark}
(The rationality of assuming the oracle output recovers all indices in the support set, i.e. $T \subset T^o$) Assume otherwise $T \backslash T^o \neq \emptyset$. Since in OMP-CV the support set of the current iteration contains all indices of the support sets of its previous iterations,

\begin{itemize}
\item if there exists a recovered signal $\mathbf{\hat{x}}^p$ such that $(T^p \backslash T^o) \cap T \neq \emptyset$, it is nearly impossible that $\epsilon_{\rm{cv}}^o < \epsilon_{\rm{cv}}^p$ due to (\ref{not properly recovered signal}). Therefore $\mathbf{\hat{x}}^o$ is nearly impossible to be the oracle output, which contradicts with statements of Theorem \ref{OMP-CV theorem 1}.
\item if $\mathbf{\hat{x}}^o$ has the maximum number of indices of the support $T$ among all recovered signals, then other indices of $T$ are not incorporated in $d$ times of iteration. In this case, $\mathbf{A}_{T \backslash T^o} \mathbf{\hat{x}}_{T \backslash T^o}$ acts as the same role as noise $\mathbf{a}_{\rm{n}} \sigma_{\rm{n}}$. Therefore, they can be treated as noise and similar analysis can be conducted by changing $\sigma_{\rm{n}}^2$ to $(\sigma_{\rm{n}}^2 + \Vert \mathbf{\hat{x}}_{T \backslash T^o}\Vert_2^2)$ and properly modifying the footnote of the related RICs.
\end{itemize}
\end{remark}

\begin{remark} (Parameter details and setting)
Parameter details of Theorem \ref{OMP-CV theorem 1} are: 
\begin{equation*}
\begin{split}
\rm{g}(\alpha^p) = & \frac{\beta_1 (\alpha^p)^2 + \beta_2}{\beta_1 (\alpha^p)^2 + \beta_2 + \max[ (\alpha^p)^2 - \beta_3 \alpha^p - \beta_4, 0]^2} 
\approx \frac{\beta_1}{(\alpha^p)^2 + \beta_1}, \\
C_1 = & 2 C_0 + 1 + 2 \sqrt{C_0^2 + C_0}, \\
C_0 \leq & \beta_5 \frac{\lambda_0^2}{m_{\rm{cv}} - 2 \lambda_0^2},
\end{split}
\end{equation*}
where betas are decided by RICs \cite{candes2005decoding} of the sensing matrix. $\beta_1$ is far larger than $\beta_2$, $\beta_3$, and $\beta_4$. E.g., if $\delta_d <0.1$, the values of betas are: $\beta_1 = 2.08$, $\beta_2 = \beta_3 = 0.03$, $\beta_4 = 0.02$, and $\beta_5 = 0.0376$. 

For parameter setting, $m_{\rm{cv}}$ does not need to be a very large number to attain a promising performance. As $C_0$ is proportional to $1/(m_{\rm{cv}} - 2 \lambda_0^2)$, the value of $C_0$ becomes extremely small when $m_{\rm{cv}}$ becomes a little greater than $2\lambda_0^2$; meanwhile, $C_0$ does not decay much when $m_{\rm{cv}}$ largely exceeds $2 \lambda_0^2$. Therefore $m_{\rm{cv}}$ should be properly chosen to be a little greater than $2\lambda_0^2$. As for $\lambda_0$, the probability $\{1 - (d-k) [1 - \Phi(\lambda_0)]\}$ increase significantly with the increase of $\lambda_0$. For example, it decays by approximately 100 times with $\lambda_0$ increases by 1. For example, when $d-k=100$, by setting $\lambda_0 = 4$, we attain the probability $1 - (d-k) (1 - \Phi(\lambda_0)) = 99.7 \%$. When $d-k$ increase to 10000, we can attain the similar probability of $99.4 \%$ by merely increase $\lambda_0$ by $1$ to be $5$. 

If, e.g., $(d-k) = 100$, setting $m_{\rm{cv}} = 48$ and $\lambda_0 = 4$, produces a numerical form of \eqref{properly recovered signal}: with probability $99.7 \%$ we have 
\begin{equation}
\varepsilon_{\rm{g}}^p \leq 1.47 {\varepsilon_{\rm{g}}^o}.
\end{equation}
\end{remark}

Apart from Theorem \ref{OMP-CV theorem 1}, we also note that the recovery error of OMP-CV output can be estimated in its CV residual in practical cases. This is achieved via a direct application of Theorem \ref{cv theorem 1}:
\begin{remark}(OMP-CV output estimation) 
Let $\mathbf{\hat{x}}^p$ be the OMP-CV output with CV residual $\epsilon_{\rm{cv}}^p$ and assume there are enough measurements for cross validation. It holds with probability $\rm{erf} (\frac{\lambda}{\sqrt{2}})$ that,

\begin{equation}\label{OMP-CV e16}
h(\lambda,+)\epsilon_{\rm{cv}}^p - \sigma_{\rm{n}}^2 \leq \varepsilon_{\rm{x}}^p \leq h(\lambda,-)\epsilon_{\rm{cv}}^p -\sigma_{\rm{n}}^2,
\end{equation}
where $h(\lambda,\pm)$ is defined in Theorem \ref{cv theorem 1}. In practical cases, since $\frac{1}{m}\Vert \mathbf{y}-\mathbf{A}\hat{\mathbf{x}} \Vert_2^2$ is an minimum variance unbiased (MVU) estimator for $\sigma_{\rm{n}}^2$ and $\hat{\mathbf{x}}^p$ is our best result in estimating $\mathbf{x}$, $\sigma_{\rm{n}}^2$ can be approximated by $\hat{\sigma}_{\rm{n}}^2=\frac{1}{m}\Vert \mathbf{y}-\mathbf{A}\hat{\mathbf{x}}^p \Vert_2^2$. Therefore, in practical cases, we could approximate (\ref{OMP-CV e16}) using
\begin{equation}
h(\lambda,+)\epsilon_{\rm{cv}}^p - \hat{\sigma}^2_{\rm{n}} \leq \varepsilon_{\rm{x}}^p \leq h(\lambda,-)\epsilon_{\rm{cv}}^p -\hat{\sigma}^2_{\rm{n}},
\end{equation}
where $\hat{\sigma}_{\rm{n}}^2=\frac{1}{m}\Vert \mathbf{y}-\mathbf{A}\hat{\mathbf{x}}^p \Vert_2^2$.
\end{remark} 
In OMP-CV, therefore, the estimation of the recovery error of the OMP-CV output is available to us in practical cases. If the recovery performance is not as good as expected, some measures could be conducted to improve the recovery performance, e.g., adding more measurements. 
\section{Besides Gaussian: Cross Validation in Other Sensing Matrices}
This section provides a brief discussion of CV performance in the scenario where a general random sensing matrix is used instead of the Gaussian matrix. In the discussion, the matrix setting is similar as previous sections; elements of both reconstruction matrix and CV matrix obey a probability distribution with mean $0$ and variance $1/m$. However, substituting the condition of Gaussian distribution, we now consider a more general one: for elements in the reconstruction matrix $\mathbf{A}_{ij}$ (as well as in the CV matrix), it is given that ${\rm{Var}}(\mathbf{A}_{ij}^2)=\gamma$. Under such condition, imitating Lemma \ref{cv lemma 1} and Lemma \ref{cv lemma 2}, the probability distribution of $\epsilon_{\rm{cv}}$ and $\Delta \epsilon_{\rm{cv}}$ could be given as 

\begin{lemma} \label{gen cv lemma 1}
Let $\mathbf{\hat{x}}$ be the recovered signal and $\Delta x_j$ be the $j$-th term of $\Delta \mathbf{x} = \mathbf{x} - \mathbf{\hat{x}}$. Assuming there are enough measurements for cross validation, one has
\begin{equation}
\epsilon_{\rm{cv}} = \Vert \mathbf{y}_{\rm{cv}} - \mathbf{\Phi}_{\rm{cv}} \mathbf{\hat{x}} \Vert _2^2 \sim N (\mu, \sigma^2),
\end{equation}
where $\mu = \frac{m_{\rm{cv}}}{m} ( \varepsilon_{\rm{x}}+\sigma_{\rm{n}}^2 )$ and $\sigma^2 = \frac{2 m_{\rm{cv}}}{m^2} \left[ (\varepsilon_{\rm{x}}+\sigma_{\rm{n}}^2)^2 + (\frac{m^2}{2}\gamma - 1) \displaystyle\sum _{j=1}^N \Delta x_j^4\right]$.
\end{lemma}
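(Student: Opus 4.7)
The plan is to mirror the proof of Lemma~\ref{cv lemma 1} and track where the non-Gaussian fourth moment of the sensing matrix enters the variance calculation. First I would write $\epsilon_{\rm cv} = \|\mathbf{A}_{\rm cv}\Delta\mathbf{x} + \mathbf{n}_{\rm cv}\|_2^2 = \sum_{i=1}^{m_{\rm cv}} w_i^2$, where $w_i \triangleq \sum_{j=1}^N (\mathbf{A}_{\rm cv})_{ij}\Delta x_j + (\mathbf{n}_{\rm cv})_i$ and $\Delta x_j$ is the $j$-th entry of $\mathbf{x} - \hat{\mathbf{x}}$. Because the rows of $\mathbf{A}_{\rm cv}$ and the entries of $\mathbf{n}_{\rm cv}$ are mutually independent, the $w_i$ are i.i.d.\ random variables, and the Central Limit Theorem (valid under the same ``enough CV measurements'' hypothesis already used in Lemma~\ref{cv lemma 1}) lets me approximate $\epsilon_{\rm cv}$ by $\mathcal{N}(m_{\rm cv}\,E[w_i^2],\, m_{\rm cv}\,{\rm Var}(w_i^2))$.

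The mean reduces to the Gaussian case: using only the second-moment data $E[(\mathbf{A}_{\rm cv})_{ij}^2] = 1/m$ and $E[(\mathbf{n}_{\rm cv})_i^2] = \sigma_{\rm n}^2/m$, I immediately get $E[w_i^2] = \tfrac{1}{m}(\varepsilon_{\rm x} + \sigma_{\rm n}^2)$, reproducing $\mu = \tfrac{m_{\rm cv}}{m}(\varepsilon_{\rm x} + \sigma_{\rm n}^2)$. All new content therefore lives in ${\rm Var}(w_i^2) = E[w_i^4] - (E[w_i^2])^2$, so the crux is to compute $E[w_i^4]$ without Gaussianity. Setting $S \triangleq \sum_j (\mathbf{A}_{\rm cv})_{ij}\Delta x_j$ and expanding the quartic, all odd cross-terms vanish via $E[(\mathbf{A}_{\rm cv})_{ij}] = E[(\mathbf{n}_{\rm cv})_i] = 0$, leaving $E[w_i^4] = E[S^4] + 6\,E[S^2]\,E[(\mathbf{n}_{\rm cv})_i^2] + E[(\mathbf{n}_{\rm cv})_i^4]$.

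The key identity is the fourth moment of $S$: by independence of the $(\mathbf{A}_{\rm cv})_{ij}$ across $j$, only the all-matched and the two-pair multi-index patterns survive the multinomial expansion, giving $E[S^4] = E[u^4]\sum_j \Delta x_j^4 + 3(E[u^2])^2\bigl(\varepsilon_{\rm x}^2 - \sum_j \Delta x_j^4\bigr)$. Using $E[u^4] = {\rm Var}(u^2) + (E[u^2])^2 = \gamma + 1/m^2$ this collapses to $(\gamma - 2/m^2)\sum_j \Delta x_j^4 + (3/m^2)\varepsilon_{\rm x}^2$. Plugging in the Gaussian-level noise moments retained from the original setup, subtracting $(E[w_i^2])^2$, and multiplying by $m_{\rm cv}$ then assembles the pieces into $\sigma^2 = \tfrac{2 m_{\rm cv}}{m^2}\bigl[(\varepsilon_{\rm x} + \sigma_{\rm n}^2)^2 + \bigl(\tfrac{m^2 \gamma}{2} - 1\bigr)\sum_j \Delta x_j^4\bigr]$, which is the claimed expression. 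The main obstacle is simply the bookkeeping of the fourth-moment combinatorics and confirming the degenerate direction: for standard Gaussian entries $\gamma = 2/m^2$, so the $\sum_j \Delta x_j^4$ correction vanishes and Lemma~\ref{cv lemma 1} is recovered exactly.
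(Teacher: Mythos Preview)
Your proposal is correct and is precisely the natural extension of the paper's proof of Lemma~\ref{cv lemma 1}: the paper gives no separate proof for Lemma~\ref{gen cv lemma 1}, but the intended argument is exactly to rerun the $r_i$ decomposition and replace the $\chi_1^2$ shortcut (which relies on Gaussianity of $r_i$) with the direct fourth-moment computation you carry out. Your bookkeeping of $E[S^4]$ via ${\rm Var}(u^2)=\gamma$ and your assumption that the noise retains its Gaussian fourth moment both match the paper's stated formula, and the sanity check $\gamma = 2/m^2$ recovering Lemma~\ref{cv lemma 1} is the right consistency test.
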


\begin{lemma} \label{gen cv lemma 2}
Let $\mathbf{\hat{x}}^p$ and $\mathbf{\hat{x}}^q$ be two recovered signals and $\Delta x_j^p$ be the $j$-th term of $\Delta \mathbf{x}^p = \mathbf{x} - \mathbf{\hat{x}}^p$. Assuming there are enough measurements for cross validation, one has
\begin{equation}
\Delta \epsilon_{\rm{cv}}^l = \epsilon_{\rm{cv}}^p - \epsilon_{\rm{cv}}^{q} \sim N(\mu, \sigma^2),
\end{equation}
where $\mu = \frac{m_{\rm{cv}}}{m} ( \varepsilon_{\rm{g}}^p - \varepsilon_{\rm{g}}^q )$, $\sigma^2 =\frac{2m_{\rm{cv}}}{m^2} [ (\varepsilon_{\rm{g}}^p)^2 + (\varepsilon_{\rm{g}}^q)^2 - 2\rho_{\rm{g}}^2 \varepsilon_{\rm{g}}^p \varepsilon_{\rm{g}}^q + (\frac{m^2}{2}\gamma - 1) \sum_{j=1}^{N} ((\Delta x_j^p)^2 - (\Delta x_j^q)^2 )^2 ]$, and  
\begin{equation}\rho_{\rm{g}} \triangleq \frac{\langle\Delta \mathbf{x}_{\rm{g}}^p, \Delta \mathbf{x}_{\rm{g}}^q\rangle}{\Vert \Delta \mathbf{x}_{\rm{g}}^p\Vert_2 \Vert\Delta \mathbf{x}_{\rm{g}}^q\Vert_2}
\end{equation}
denotes the correlation coefficient of the two generalized recovery error signals.
\end{lemma}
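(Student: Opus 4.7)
The plan is to mirror the proof of Lemma \ref{cv lemma 2}, replacing every step that silently used the Gaussianity of the entries of the CV matrix with a computation that invokes only the stated moment assumptions: $\mathbb{E}[\mathbf{A}_{{\rm cv},ij}] = 0$, $\mathbb{E}[\mathbf{A}_{{\rm cv},ij}^2] = 1/m$, and $\text{Var}(\mathbf{A}_{{\rm cv},ij}^2) = \gamma$. First I would rewrite $\epsilon_{\rm cv}^p$ via the generalized matrix and signal of Definition \ref{Generalized sensing matrix and input signal}, so that $\epsilon_{\rm cv}^p = \|\mathbf{A}_{{\rm cv},{\rm g}} \Delta \mathbf{x}_{\rm g}^p\|_2^2$ and
\begin{equation*}
\Delta \epsilon_{\rm cv} = \sum_{i=1}^{m_{\rm cv}} Y_i, \qquad Y_i \triangleq (\mathbf{r}_i' \Delta \mathbf{x}_{\rm g}^p)^2 - (\mathbf{r}_i' \Delta \mathbf{x}_{\rm g}^q)^2,
\end{equation*}
where $\mathbf{r}_i$ denotes the $i$-th row of $\mathbf{A}_{{\rm cv},{\rm g}}$. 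Since the $Y_i$ are i.i.d.\ with finite first and second moments, the Central Limit Theorem again delivers a Gaussian limit provided $m_{\rm cv}$ is moderately large; what differs from the Gaussian case is only the value of $\text{Var}(Y_1)$.

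Computing $\mathbb{E}[Y_1]$ uses only the second-moment structure $\mathbb{E}[\mathbf{r}_1 \mathbf{r}_1'] = (1/m) \mathbf{I}$, yielding $\mathbb{E}[Y_1] = (\varepsilon_{\rm g}^p - \varepsilon_{\rm g}^q)/m$ and hence the stated mean. The main technical step, and the place where $\gamma$ enters, is $\text{Var}(Y_1)$: this requires all joint fourth moments $\mathbb{E}[r_{1j} r_{1k} r_{1l} r_{1n}]$. By independence and zero mean, only the ``all four equal'' and ``two matched pairs'' index patterns survive; they carry weights $\mathbb{E}[r^4] = \gamma + 1/m^2$ and $1/m^2$ respectively. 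Organising these contributions, I would expand $\mathbb{E}[(\mathbf{r}_1' \mathbf{u})^2 (\mathbf{r}_1' \mathbf{v})^2]$ and its $\mathbf{u} = \mathbf{v}$ specialisations and find that each splits cleanly into a Gaussian-type piece plus a diagonal correction proportional to $(\gamma - 2/m^2) \sum_j u_j^2 v_j^2$ (or its analogue). The main obstacle is simply keeping this case analysis tidy and combining the three expansions correctly after subtracting $(\mathbb{E}[Y_1])^2$.

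Once the dust settles, the Gaussian-type pieces assemble into the familiar $\frac{2}{m^2}[(\varepsilon_{\rm g}^p)^2 + (\varepsilon_{\rm g}^q)^2 - 2 \rho_{\rm g}^2 \varepsilon_{\rm g}^p \varepsilon_{\rm g}^q]$ of Lemma \ref{cv lemma 2}, while the diagonal corrections merge into a single term $(\gamma - 2/m^2) \sum_j ((\Delta x_j^p)^2 - (\Delta x_j^q)^2)^2$. Because the $(N{+}1)$-st coordinate of both generalised error signals is the same value $\sigma_{\rm n}$, it contributes $0$ to this last sum, so the index can be restricted to $j = 1, \ldots, N$ as stated. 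Multiplying by $m_{\rm cv}$ and rewriting $(\gamma - 2/m^2) = (2/m^2)(m^2 \gamma / 2 - 1)$ matches the claimed $\sigma^2$ exactly. A useful sanity check is the Gaussian value $\gamma = 2/m^2$, for which the correction vanishes and Lemma \ref{cv lemma 2} is recovered.
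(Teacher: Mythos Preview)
Your proposal is correct and follows essentially the same approach the paper uses for the Gaussian analogue, Lemma~\ref{cv lemma 2} (the paper does not actually spell out a separate proof of Lemma~\ref{gen cv lemma 2}): decompose $\Delta\epsilon_{\rm cv}$ as a sum of $m_{\rm cv}$ i.i.d.\ summands, compute their first two moments, and apply the CLT. Your use of the generalised matrix to absorb the noise column is a mild cosmetic simplification over the paper's explicit split into $a_{{\rm cv}_{ij}}^2$, $a_{{\rm cv}_{ij}}a_{{\rm cv}_{ik}}$, and $n_{{\rm cv}_i}a_{{\rm cv}_{ij}}$ terms, and your observation that the $(N{+}1)$-st coordinate cancels in the diagonal correction is exactly what restricts the extra sum to $j=1,\ldots,N$.
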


Compared to their Gaussian version (Lemma \ref{cv lemma 1} and Lemma \ref{cv lemma 2}), in Lemma \ref{gen cv lemma 1} and Lemma \ref{gen cv lemma 2} the mean of $\epsilon_{\rm{cv}}$ and $\Delta \epsilon_{\rm{cv}}$ stay unchanged while there is an extra term in their variance, which is the product of $(\frac{m^2}{2}\gamma-1)$ and a non-negative term. 

We would like to point out that, when the mean of $\epsilon_{\rm{cv}}$ and $\Delta \epsilon_{\rm{cv}}$ stay unchanged, the CV performance, including CV estimation, CV comparison, and definitely its use in OMP-CV, will be better if the variance decrease. In the Gaussian case, the extra terms (both in Lemma \ref{gen cv lemma 1} and Lemma \ref{gen cv lemma 2}) equal to zero as $\gamma=2/m^2$. Therefore, for any other distribution, if $(\frac{m^2}{2}\gamma-1)$ is negative, the CV performance will be no worse than that in the Gaussian case. For example, the Rademacher sensing matrix, with $\gamma=0$ and $(\frac{m^2}{2}\gamma-1)=-1$, obtains a better CV performance than the Gaussian matrix.

\section{Numerical Simulation \label{section Simulation}}
This section gives simulation results concerning our theoretical analyses. Subsection \ref{simulation 1} validates Lemma \ref{cv lemma 1} and Lemma \ref{cv lemma 2} respectively, upon which our general CV techniques were developed. Subsection \ref{simulation 2} simulates the numerical example mentioned in Section \ref{section OMP-CV}, providing evidence supporting the rationality of Theorem \ref{OMP-CV theorem 1}. In Subsection \ref{simulation 3}, the behaviors of OMP-CV algorithm are studied with variation of different parameters. Throughout this section, Gaussian signals are used where non-zero entries are generated following the standard Gaussian distribution.

\subsection{Validation for Lemma \ref{cv lemma 1} and Lemma \ref{cv lemma 2} \label{simulation 1}} 
As previously noted, CLT is used to approximate the probability distribution of $\epsilon_{\rm{cv}}$ in Lemma \ref{cv lemma 1} and $\Delta \epsilon_{\rm{cv}}$ in Lemma \ref{cv lemma 2}. The simulations
in this section attempt to validate these approximations. In both simulations, parameters are set as $N=512$, $m = 96$, $m_{\rm{cv}} = 48$ and $k=50$. With recovered signals ($\mathbf{\hat{x}}$ for
Lemma \ref{cv lemma 1}; $\mathbf{\hat{x}}^p$ and $\mathbf{\hat{x}}^q$ for Lemma \ref{cv lemma 2}) fixed, the random CV matrix along with its noise is realized 1E5 times and the probability distributions of random variables ($\epsilon_{\rm{cv}}$ for Lemma \ref{cv lemma 1}; $\Delta \epsilon_{\rm{cv}}$ for Lemma \ref{cv lemma 2}) are calculated. The experiment results, shown in Figure \ref{fig s1}, indicate that the simulation results agree well with the theoretical prediction. This validates our approximation and supports both lemmas.

\begin{figure}[h]
\begin{center}
\subfigure[]{\includegraphics[scale=0.45]{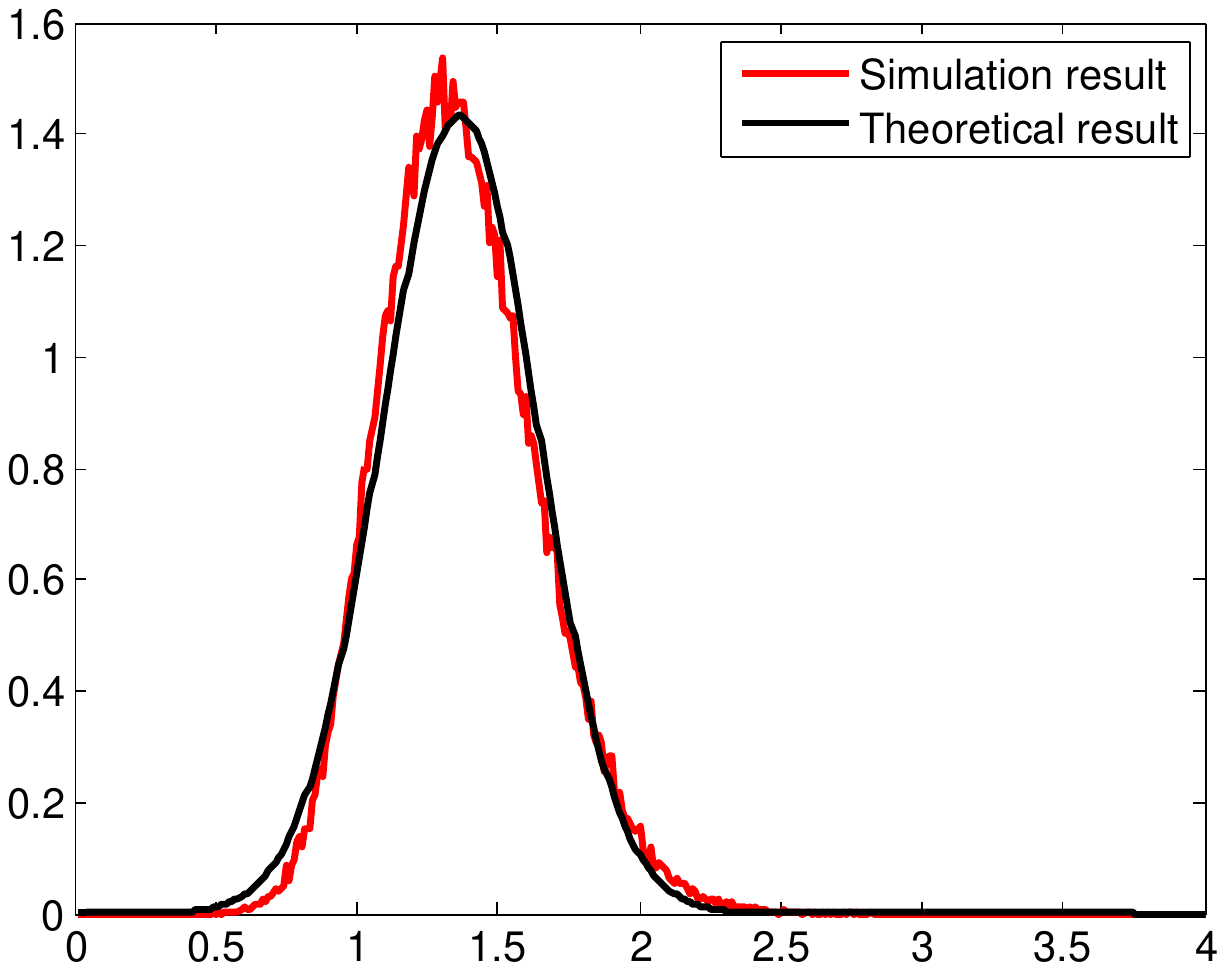}} ~~~~
\subfigure[]{\includegraphics[scale=0.45]{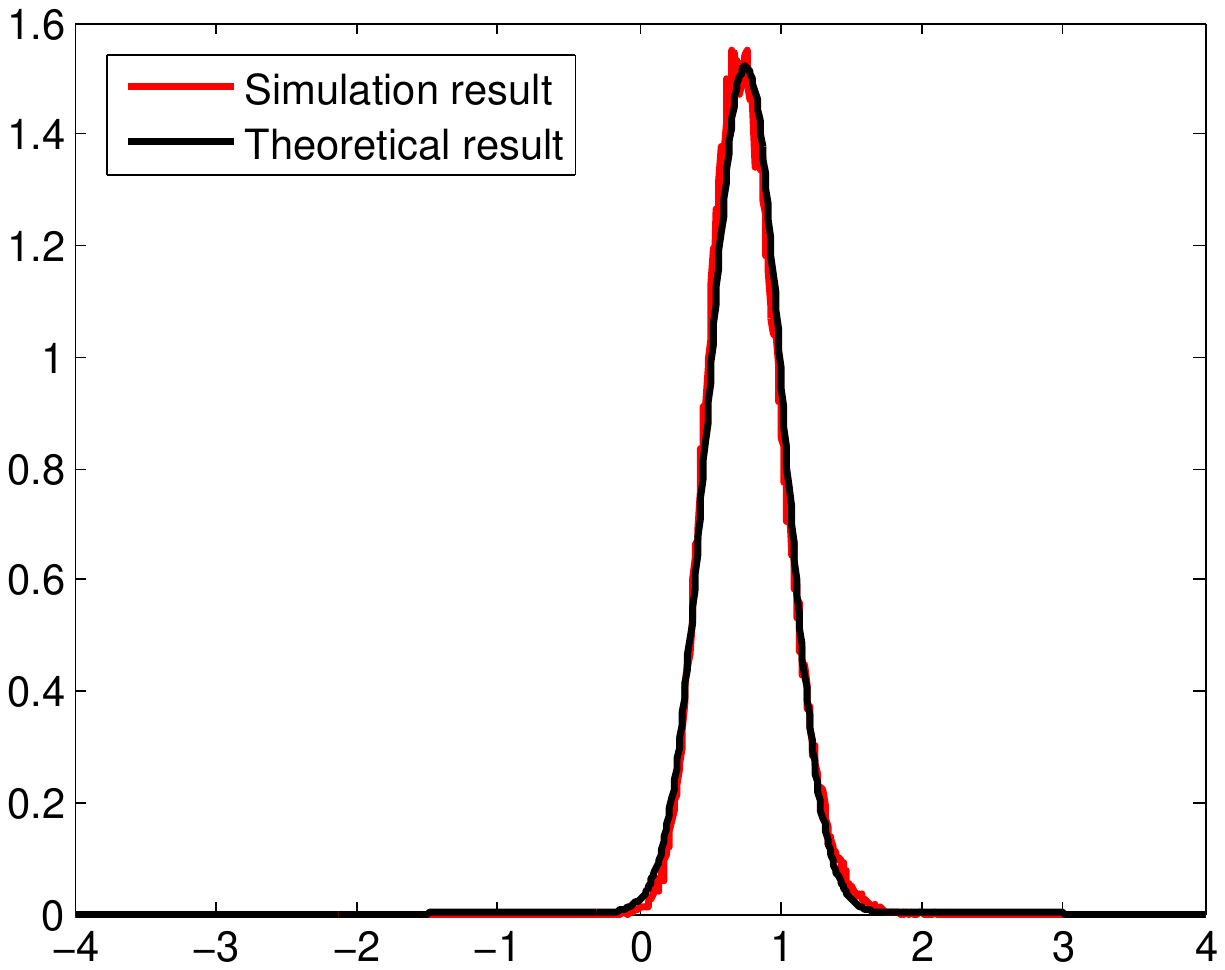}}
\caption{Validation for Lemma \ref{cv lemma 1} and Lemma \ref{cv lemma 2}. Figure \ref{fig s1}(a) validates Lemma \ref{cv lemma 1} by plotting the simulation result of probability distribution of $\epsilon_{\rm{cv}}$, while Figure \ref{fig s1}(b) validates Lemma \ref{cv lemma 2} by plotting that of $\Delta \epsilon_{\rm{cv}}$ (red curve). The simulation results agree well with the theoretical results, which are plotted in black as reference. The Kullback-Leibler divergences of the theoretical results from the simulation results are $0.0152$ and $0.0093$ for Fig. \ref{fig s1}(a) and Fig. \ref{fig s1}(b) respectively. \label{fig s1}}
\end{center}
\end{figure}

\subsection{Validation for Theorem \ref{OMP-CV theorem 1} \label{simulation 2}}
\begin{figure}[h]
\begin{center}
\subfigure[$\sigma_{\rm{n}} = 0.05$]{\includegraphics[scale=0.4]{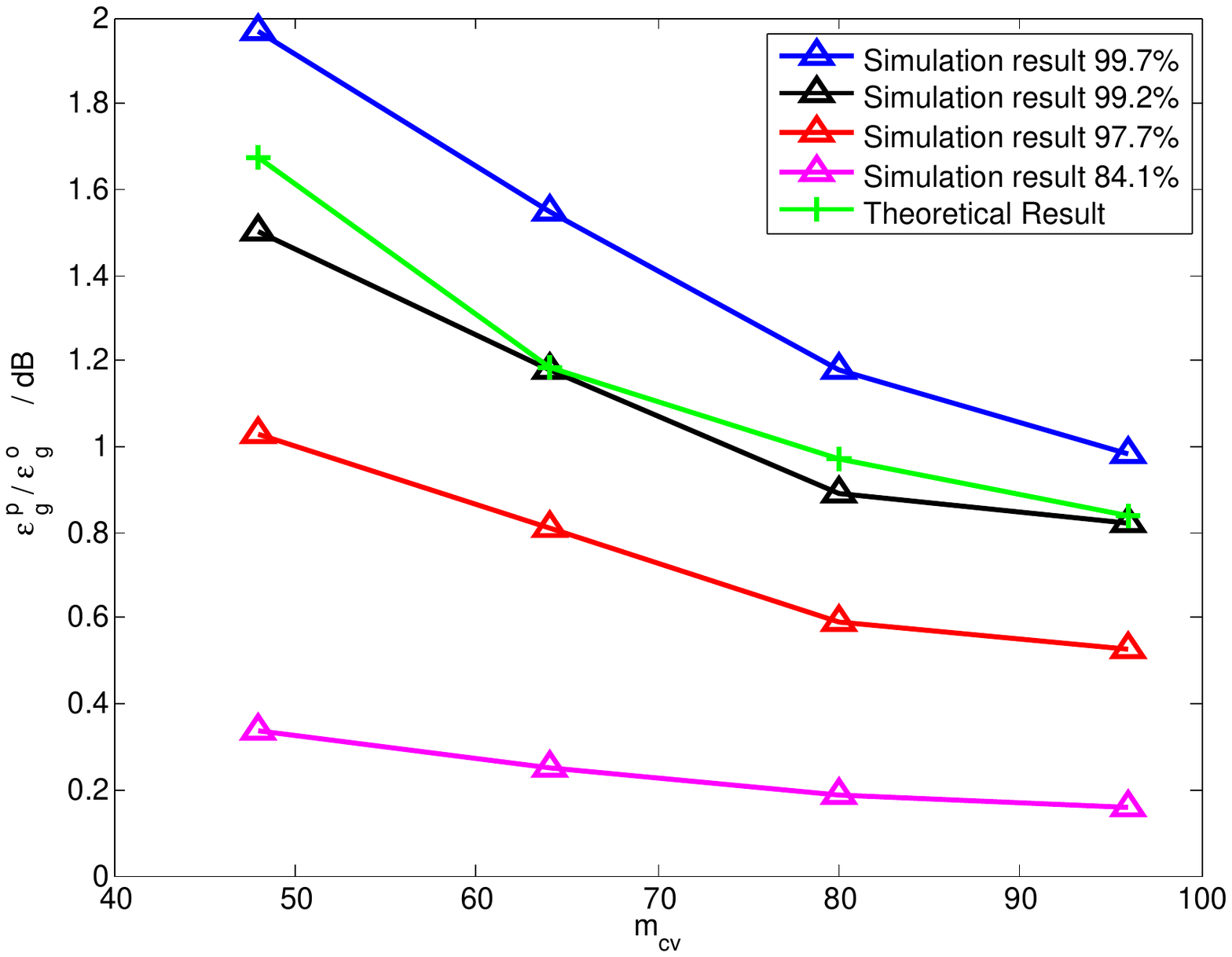}}~~~~
\subfigure[$\sigma_{\rm{n}} = 0.1$]{\includegraphics[scale=0.4]{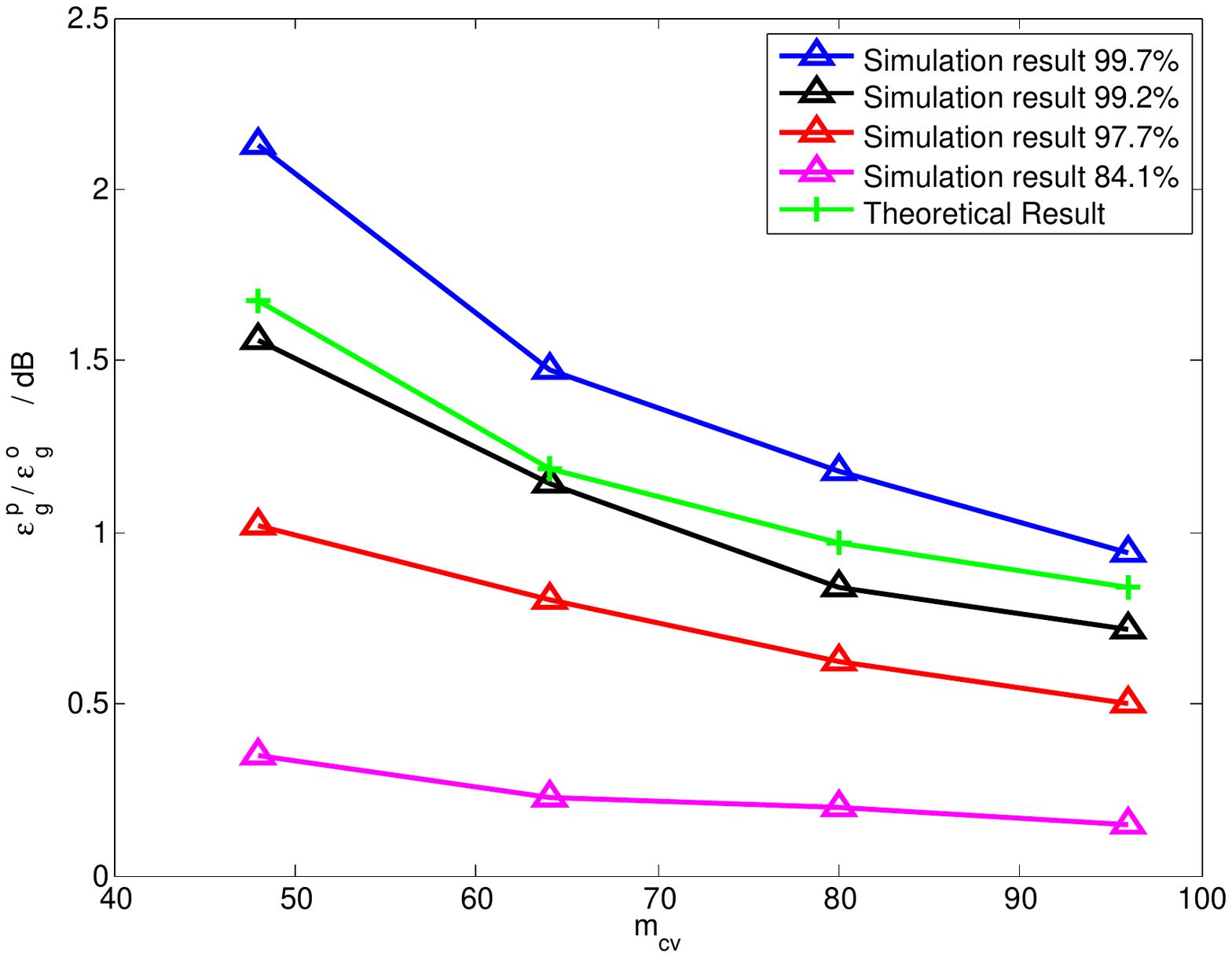}}
\caption{Theoretical results and experimental results of $\varepsilon_{\rm{g}}^p / \varepsilon_{\rm{g}}^o$ under different $m_{\rm{cv}}$ and $\sigma_{\rm{n}}$. Parameters are  $N = 1000$, $m=400$, $k=50$, $d-k = 100$, and $m_{\rm{cv}}$ varies from $48$ to $96$ with step $16$, where with high probability it holds that $\varepsilon_{\rm{g}}^p \leq C_1 {\varepsilon_{\rm{g}}^o}$ according to Theorem \ref{OMP-CV theorem 1}. The theoretical value of $C_1$ that holds with probability $99.7\%$ is plotted in green line (Theoretical result), while the empirical values of $C_1$ that hold with $99.7\%$, $99.2\%$, $97.7\%$, $84.1 \%$ are plotted in blue, black, red, and magenta line respectively (Experimental result). We could see that the experimental value decreases with the increase of $m_{\rm{cv}}$, does not vary when noise level changes, and agrees well with the theoretical result.\label{fig s2}}
\end{center}
\end{figure}

This experiment simulates recovery performance of OMP-CV and compare it with Theorem \ref{OMP-CV theorem 1}. Parameters are set as $N = 1000$, $m=400$, $k=50$, $d-k = 100$, and $m_{\rm{cv}}$ varies from $48$ to $96$ with step $16$. For each parameter setting, the experiment repeated for 5000 times. According to Theorem \ref{OMP-CV theorem 1}, it holds with probability $\left\{ 1 - (d-k) [1 - \Phi(\lambda_0)] \right\}$ that 
\begin{equation}
\varepsilon_{\rm{g}}^p \leq C_1 {\varepsilon_{\rm{g}}^o},
\end{equation}
provided that oracle output and CV output recovered all indices in the support set. The theoretical bound $C_1$ is calculated using 
\begin{gather}
C_1 = 2 C_0 + 1 + 2 \sqrt{C_0^2 + C_0}, \\
C_0 \leq \beta_5 \frac{\lambda_0^2}{m_{\rm{cv}} - 2 \lambda_0^2},
\end{gather}
where $\lambda_0=4$ so that $\left\{ 1 - (d-k) [1 - \Phi(\lambda_0)] \right\} = 99.7\%$ and $\beta_5=0.0376$ under the assumption that $\delta_d<0.1$.

From the result we can see the simulation values decrease with the increase of $m_{\rm{cv}}$ and do not vary when noise level changes, which agree with the theoretical analysis. Furthermore, we could see that with overwhelming probability the simulation error ratio $\varepsilon_{\rm{g}}^p / \varepsilon_{\rm{g}}^o$ is smaller than the theoretical bound and in most cases it remains pretty low. $84.1\%$ of the recovered signal, e.g., have a error ratio lower than $0.4 dB$. More precisely, the percentage that the error ratio does not exceeds the theoretical bound is roughly $99.5\%$, which is approximately the theoretical probability in our theorem, $99.7 \%$. The actual proportion is a little bit smaller than the theoretical probability because the RIP assumption and the CLT approximation may slightly vary with the practical situation.
\subsection{Simulation for OMP-CV Algorithm \label{simulation 3}}
This subsection investigates the behaviors of OMP-CV with variation of different parameters. The performance of OMP, referred to as \emph{OMP-residual}, is also given accordingly as reference. The stopping criteria of OMP-residual is based on its residual, i.e. terminating the iteration as long as 
\begin{equation}
\Vert \mathbf{y} - \mathbf{A} \mathbf{\hat{x}} \Vert_2^2 < \sigma_n^2.
\end{equation}
We would like to note that the prior knowledge of noise level is required in OMP-residual but not in OMP-CV. For each parameter setting, algorithms are repeated 1000 times. 
\subsubsection{Number of CV Measurements}
This experiment investigates the effect of additional cross validation measurements. Parameters are set as $N=1000$, $k=50$, and $\sigma_n^2 =0.1$. The number of reconstruction measurement $m$ is fixed to be 360 and the number of CV measurements $m_{\rm{cv}}$ varies from $10$ to $80$ with step $10$. The experiment result is shown in Figure \ref{fig s3ss1}, from which we can see with the increase of $m_{\rm{cv}}$ improves the recovery performance significantly. We can also see the improvement of the recovery performance becomes slow when $m_{\rm{cv}}$ exceeds $60$. Thus, it is not necessary for $m_{\rm{cv}}$ to be very large to obtain a satisfying recovery performance.

\begin{figure}[h]
\begin{center}
\includegraphics[scale=0.6]{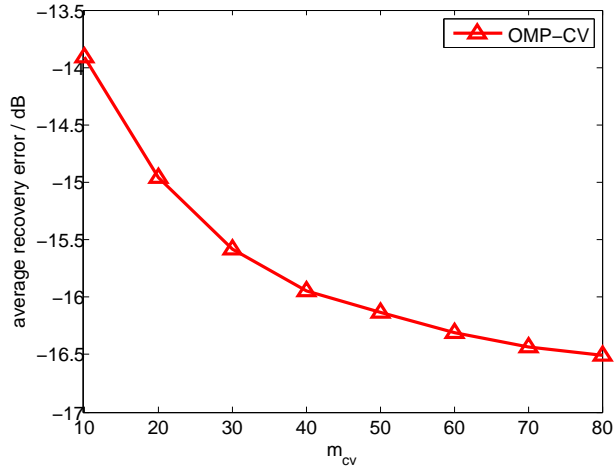}
\caption{The effect of additional CV measurements. Parameters are $N=1000$, $k=50$, $m=360$, and $\sigma_n^2 =0.1$. $m_{\rm{cv}}$ varies from $10$ to $80$ with step $10$. The experiment result shows that the increase of $m_{\rm{cv}}$ improves the recovery performance significantly and the performance improvement becomes slow when $m_{\rm{cv}}$ exceeds $60$. Thus, it is not necessary for $m_{\rm{cv}}$ to be very large to obtain a satisfying recovery performance. \label{fig s3ss1}}
\end{center}
\end{figure}
\subsubsection{Trade Off between $m$ and $m_{\rm{cv}}$}
Given a fixed total number of measurements $M$, there is a tradeoff between $m$ and $m_{\rm{cv}}$ \cite{boufounos2007sparse}. On one hand, increasing $m$ will reduce the reconstruction error. On the other hand, increasing $m_{\rm{cv}}$ will improve the CV estimation, and thus make the OMP-CV output closer to the oracle output. This simulation empirically investigates the recovery performance of OMP-CV as $m_{\rm{cv}}$ varies.

In this experiment, we set $N=1000$, $k=50$, $\sigma_n^2 =0.1$, and $M=400$. $m_{\rm{cv}}$ varies from $100$ to $10$ with step $-10$ and we have $m = M - m_{\rm{cv}}$. For OMP-CV, $m$ measurements are used for reconstruction, while $m_{\rm{cv}}$ measurements are used for cross validation. For OMP-residual, $m$ measurements are used for reconstruction and the termination is based on residual with the accurate noise level given. In addition, the recovery performance, where all $M$ measurements are used for reconstruction using OMP-residual, is given as reference. We average 1000 repetitions for experiments of each parameter setting.

The experiment result, plotted in Figure \ref{fig s3ss2}, shows the best performance of OMP-CV lies in the region where $m_{\rm{cv}}$ is neither too small nor too large. OMP-CV outperforms OMP-residual except when $m_{\rm{cv}}$ is very small, indicating that CV-based termination is better than residual-based termination, even if the latter uses an exact noise level. Additionally, note that with the same number of measurements at hand ($M$ measurements for both OMP-CV and OMP-residual), OMP-CV can achieve recovery performance similar to OMP-residual with parameters appropriately set, even though prior knowledge is required for OMP-residual. In this sense, OMP-CV outperforms OMP-residual.
\begin{figure}[h]
\begin{center}
\includegraphics[scale=0.6]{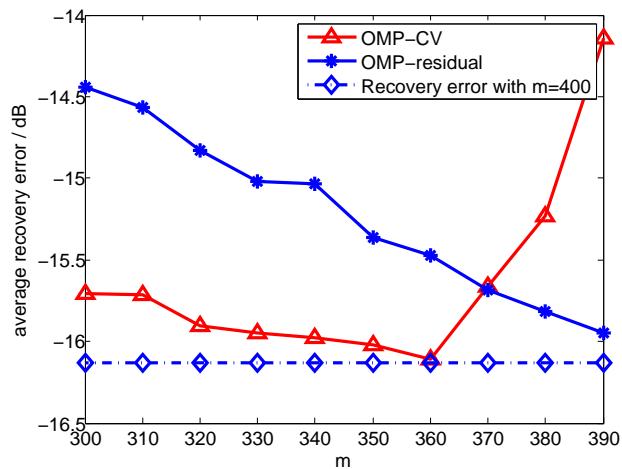}
\caption{The trade Off between $m$ and $m_{\rm{cv}}$. Parameters are  fixed as $N=1000$, $k=50$, $\sigma_n^2 =0.1$, and   $M=400$. $m_{\rm{cv}}$ varies from $100$ to $10$ with step $-10$ while $m = M - m_{\rm{cv}}$. OMP-CV outperforms OMP except when $m_{\rm{cv}}$ is too small. In addition, using same number of measurements, OMP-CV has recovery performance similar to OMP-residual (recovery error with $m=400$) with parameters appropriately set even though the prior knowledge is required for OMP-residual. \label{fig s3ss2}}
\end{center}
\end{figure}
\subsubsection{The Performance of OMP-CV under Different Noise Level}
This experiment investigates the recovery performance of OMP-CV under different noise level. In addition to the reference performance of OMP-residual, the performance of OMP that is conducted with no prior knowledge, referred to as \emph{OMP-without-prior-knowledge}, is given as baseline. For OMP-without-prior-knowledge, the stopping criteria \cite{donohosparselab} is to terminate the iteration either when iteration times exceeds $M$ or when
\begin{equation}
\Vert \mathbf{y} - \mathbf{A} \mathbf{\hat{x}} \Vert_2 < 10^{-5} \Vert \mathbf{y} \Vert_2. 
\end{equation}
The number of total measurements $M$ is fixed to be $400$. For OMP-CV, $m=352$ measurements are used for reconstruction while $m_{\rm{cv}}=48$ measurements for cross validation. For OMP-residual and OMP-without-prior-knowledge, all $M$ measurements are used for reconstruction. The noise level $\sigma_n^2$ varies from $0.02$ to $0.2$ with step $0.02$. Other parameters are $N=1000$, and $k=50$. The experiment result is shown in Figure \ref{fig s3ss3}. 

From Figure \ref{fig s3ss3} we can see that OMP-CV and OMP-residual have similar performance under different noise level even when the prior knowledge is available for OMP-residual. We also observe that the performance of OMP largely deteriorates when information of noise level is not available (OMP-without-prior-knowledge). In this sense, OMP-CV outperforms OMP-residual.  

\begin{figure}[h]
\begin{center}
\includegraphics[scale=0.6]{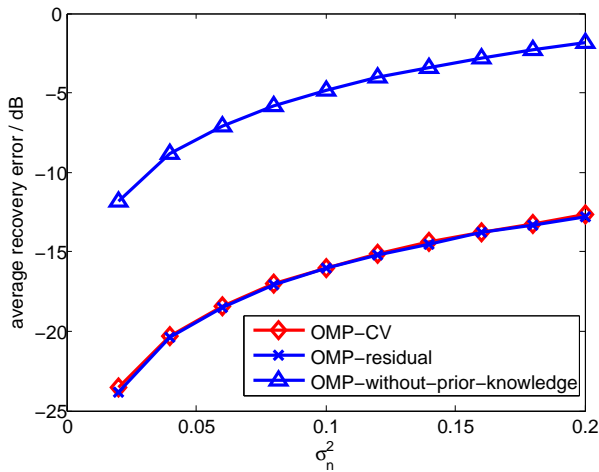}
\caption{The performance of OMP-CV, OMP-residual, and OMP-without-prior-knowledge under different noise level. Parameters are $N=1000$, $k=50$, $M=500$, and $m_{\rm{cv}} = 48$. $\sigma_n^2$ varies from $0.02$ to $0.2$ with step $0.02$. OMP-CV and OMP-residual have similar performance under different noise level even when the prior knowledge is available for OMP-residual. In addition, the performance of OMP-without-prior-knowledge is very bad where no prior knowledge is given to decide the stopping condition of OMP. \label{fig s3ss3}}
\end{center}
\end{figure}

\section{Conclusion \label{section conclusion}}
This paper presents a theoretical study of CV in compressive sensing, providing analysis of general CV problems as well as analysis of the OMP-CV algorithm. As a highly practical algorithm, OMP-CV could reconstruct the signal without prior knowledge such as sparsity or noise level; its performance is supported in this paper both theoretically and empirically. Additionally, our results on general CV problems could also be used to understand other CS-based reconstruction algorithms.  

CV sacrifices a small amount of measurements to estimate the reconstruction error. In a nutshell, this technique makes it possible for greedy algorithms to reconstruct the signal without prior knowledge like the sparsity or noise level. In future work, we would like to extend our analysis on CV by studying the use of CV in other greedy sparse recovery algorithms.
\section{Acknowledgement}
The auther Jinye Zhang would like to thank to Wensi You for her encouragement, without which this work could hardly be completed. He would also like to thank Xinyue Shen for her helpful suggestions, which made the research process much smoother. 
\section*{Appendix}
\appendix
\section{proof of Lemma \ref{RIP lemma 4}}
 \label{app RIP lemma 4}
\begin{proof}
Lemma \ref{RIP lemma 1} shows that, 
\begin{equation}\label{App e1}
\Vert \mathbf{A}_S^\dagger \mathbf{A}_T \mathbf{x}_T\Vert_2 = \Vert (\mathbf{A}_S'\mathbf{A}_S)^{-1}\mathbf{A}_S' \mathbf{A}_T \mathbf{x}_T\Vert_2 \leq \frac{1}{1-\delta_{\vert S \vert}} \Vert\mathbf{A}_S' \mathbf{A}_T \mathbf{x}_T\Vert_2.
\end{equation}
Furthermore, Lemma \ref{RIP lemma 3} implies that,
\begin{equation}\label{App e2}
\Vert\mathbf{A}_S' \mathbf{A}_T 
\mathbf{x}_T\Vert_2 \leq \delta_{\vert S \vert + \vert T \vert} \Vert \mathbf{x}_T\Vert_2.
\end{equation}
Substitute (\ref{App e2}) into (\ref{App e1}) to complete the proof.
\end{proof}
\section{proof of Lemma \ref{cv lemma 1}}
\label{app cv lemma 1}
\begin{proof}
Write the CV residual as, 
\begin{equation}
\epsilon_{\rm{cv}} = \Vert \mathbf{y}_{\rm{cv}} - \mathbf{A}_{\rm{cv}} \mathbf{\hat{x}} \Vert _2^2 = \Vert \mathbf{A}_{\rm{cv}} (\mathbf{x}-\mathbf{\hat{x}}) + \mathbf{n}_{\rm{cv}} \Vert _2^2 = \displaystyle \sum_{i=1}^{m_{\rm{cv}}} \left( \sum _ {j=1} ^N a_{{\rm{cv}}_{ij}} \Delta x_j + n_{{\rm{cv}}_i} \right)^2,
\end{equation}
where $\Delta x_i$ is the $i$-th element of $\Delta \mathbf{x} = \mathbf{x}-\mathbf{\hat{x}}$ and $a_{{\rm{cv}}_{ij}}$ is the element in the $i$-th row and $j$-th column of $\mathbf{A}_{\rm{cv}}$. 

Define that $r_i=\sum _ {j=1} ^N a_{{\rm{cv}}_{ij}} \Delta x_j + n_{{\rm{cv}}_i}$, one has $\epsilon_{\rm{cv}} = \sum _{i=1}^{m_{\rm{cv}}} r_i^2$. Calculate the mean and variance of $r_i$ as,
\begin{equation}
{\rm{E}}(r_i) = \sum _ {j=1} ^N \Delta x_j {\rm{E}}(a_{{\rm{cv}}_{ij}}) + {\rm{E}}(n_{{\rm{cv}}_i}) = 0,
\end{equation}
and
\begin{equation}\label{AppB e1}
{\rm{Var}}(r_i) =\sum _ {j=1} ^N \Delta x_j^2 {\rm{E}}(a_{{\rm{cv}}_{ij}}^2) + {\rm{E}}(n_{{\rm{cv}}_i}^2) = \frac{1}{m}\left(\sum _ {j=1} ^N \Delta x_j^2 + \sigma_{\rm{n}}^2 \right).
\end{equation}
The first equation of (\ref{AppB e1}) holds because $a_{{\rm{cv}}_{ij}}$ and $n_{{\rm{cv}}_i}$ are mutually independent. Recall that $\varepsilon_{\rm{x}} = \sum _ {j=1} ^N \Delta x_j^2$, we further obtain ${\rm{Var}}(r_i) = (\varepsilon_{\rm{x}} + \sigma_{\rm{n}}^2) / m$.
 
Being linear combination of Gaussian variables, $r_i$ is Gaussian distributed and therefore, 
\begin{equation}
r_i^2 \sim \frac{1}{m} (\varepsilon_{\rm{x}} + \sigma_{\rm{n}}^2) \chi_1^2,
\end{equation}
which has mean $(\varepsilon_{\rm{x}} + \sigma_{\rm{n}}^2) / m$ and variance $2(\varepsilon_{\rm{x}} + \sigma_{\rm{n}}^2) ^2 / m^2$. Under the assumption that $m_{\rm{cv}}$ is large, the Central Limit Theorem implies that
\begin{equation}
\epsilon_{\rm{cv}} = \sum_{i=1}^ {m_{\rm{cv}}} r_i^2 \sim \mathcal{N}\left(\frac{m_{\rm{cv}}}{m}(\varepsilon_{\rm{x}} + \sigma_{\rm{n}}^2), \frac{2m_{\rm{cv}}}{m^2}(\varepsilon_{\rm{x}} + \sigma_{\rm{n}}^2)^2\right).
\end{equation} 
\end{proof}
\section{proof of Lemma \ref{cv lemma 2}}
\label{app cv lemma 2}
\begin{proof}
Write $\Delta \epsilon_{\rm{cv}}=\epsilon_{\rm{cv}}^p - \epsilon_{\rm{cv}}^q$ as,
\begin{equation}
\begin{split}
\Delta \epsilon_{\rm{cv}} = & \displaystyle \sum_{i=1}^{m_{\rm{cv}}} \left[ \left(\sum_{j=1}^{N} a_{{\rm{cv}}_{ij}} \Delta x_j^p + n_{{\rm{cv}}_i}\right)^2 - \left(\sum_{j=1}^{N} a_{{\rm{cv}}_{ij}} \Delta x_j^q + n_{{\rm{cv}}_i}\right)^2 \right]\\
=& \sum_{i=1}^{m_{\rm{cv}}} \left[ \left(\sum_{j=1}^{N} a_{{\rm{cv}}_{ij}} (\Delta x_j^p + \Delta x_j^q) \right)\left(\sum_{k=1}^{N}a_{{\rm{cv}}_{ik}}(\Delta x_k^p - \Delta x_k^q)\right) + 2 n_{{\rm{cv}}_i}\left(\sum_{k=1}^{N}a_{{\rm{cv}}_{ik}}(\Delta x_k^p - \Delta x_k^q)\right) \right],
\end{split}
\end{equation}
where $a_{{\rm{cv}}_{ij}}$ is the element in the $i$-th row and $j$-th column of $\mathbf{A}_{\rm{cv}}$. Define that $$r_i = \displaystyle\left[\left(\sum_{j=1}^{N} a_{{\rm{cv}}_{ij}} (\Delta x_j^p + \Delta x_j^q) \right)\left(\sum_{k=1}^{N}a_{{\rm{cv}}_{ik}}(\Delta x_k^p - \Delta x_k^q)\right) + 2 n_{{\rm{cv}}_i}\left(\sum_{k=1}^{N}a_{{\rm{cv}}_{ik}}(\Delta x_k^p - \Delta x_k^q)\right)\right],$$ one has $\Delta \epsilon_{\rm{cv}} = \sum_{i=1}^{m_{\rm{cv}}} r_i$.

We next calculate the mean and variance of $r_i$. $r_i$ is the linear combination of random variables of three kinds: $a_{{\rm{cv}}_{ij}}^2$, $a_{{\rm{cv}}_{ij}}a_{{\rm{cv}}_{ik}}$ and $n_{{\rm{cv}}_i}a_{{\rm{cv}}_{ij}}$, whose means and variances are,
\begin{align}
{\rm{E}} (a_{{\rm{cv}}_{ij}}^2) &= 1 / m ~~&~~ {\rm{Var}} (a_{{\rm{cv}}_{ij}}^2) &= 2 / m^2,\\
{\rm{E}} (a_{{\rm{cv}}_{ij}}a_{{\rm{cv}}_{ik}}) &= 0 ~~&~~ {\rm{Var}} (a_{{\rm{cv}}_{ij}}a_{{\rm{cv}}_{ik}}) &= 1 / m^2,\\
{\rm{E}} (n_{{\rm{cv}}_i}a_{{\rm{cv}}_{ij}}) &= 0 ~~&~~ {\rm{Var}} (n_{{\rm{cv}}_i}a_{{\rm{cv}}_{ij}}) &= \sigma_{\rm{n}}^2 / m^2.
\end{align}
Also, any two of these three types of random variables are mutually independent, therefore,
\begin{equation}
{\rm{Cov}} (x,y)=0
\end{equation}
where $(x, y) = (a_{{\rm{cv}}_{ij}}^2, a_{{\rm{cv}}_{ij}}a_{{\rm{cv}}_{ik}})$, $(a_{{\rm{cv}}_{ij}}^2, n_{{\rm{cv}}_i}a_{{\rm{cv}}_{ij}})$, or $(a_{{\rm{cv}}_{ij}}a_{{\rm{cv}}_{ik}}, n_{{\rm{cv}}_i}a_{{\rm{cv}}_{ij}})$. 

Calculate the mean of $r_i$ as,
\begin{equation}
{\rm{E}} (r_i) = \frac{1}{m} \displaystyle \sum_{j=1}^{N} (\Delta x_j^p + \Delta x_j^q)(\Delta x_j^p - \Delta x_j^q) = \frac{1}{m}(\varepsilon_{\rm{x}}^p - \varepsilon_{\rm{x}}^q) = \frac{1}{m}(\varepsilon_{\rm{g}}^p - \varepsilon_{\rm{g}}^q).
\end{equation} 

Write the variance of $r_i$ as, 
\begin{equation} \label{App e4}
\begin{split}
{\rm{Var}} (r_i) = {{\rm{Var}}} \left(\displaystyle\sum_{j=1}^{N} a_{{\rm{cv}}_{ij}} (\Delta x_j^p + \Delta x_j^q) \sum_{k=1}^{N}a_{{\rm{cv}}_{ik}}(\Delta x_k^p - \Delta x_k^q)\right) + 4 {\rm{Var}} \left( n_{cv_i}\sum_{k=1}^{N}a_{{\rm{cv}}_{ik}}(\Delta x_k^p - \Delta x_k^q)\right).
\end{split}
\end{equation}

Next calculate both terms of (\ref{App e4}) respectively,  
\begin{equation}
\begin{split}
{\rm{first~term}} &= \textrm{Var}\left(\displaystyle\sum_{j=1}^{N} a_{{\rm{cv}}_{ij}} (\Delta x_j^p + \Delta x_j^q) \sum_{k=1}^{N}a_{{\rm{cv}}_{ik}}(\Delta x_k^p - \Delta x_k^q)\right)\\
&= \frac{2}{m^2}\sum_{j=1}^{N}(\Delta x_j^p+\Delta x_j^q)^2 (\Delta x_j^p-\Delta x_j^q)^2 \\ 
& + \frac{1}{m^2}\sum_{j=1}^{N} \sum_{k=j+1}^N \left((\Delta x_j^p+\Delta x_j^q)(\Delta x_k^p-\Delta x_k^q) + (\Delta x_k^p + \Delta x_k^q)(\Delta x_j^p-\Delta x_j^q)\right)^2\\
&= \frac{2}{m^2}\sum_{j=1}^{N}\left((\Delta x_j^p)^2 - (\Delta x_j^q)^2\right)^2 + \frac{4}{m^2} \sum_{j=1}^{N} \sum_{k=j+1}^N \left(\Delta x_j^p \Delta x_k^p - \Delta x_j^q \Delta x_k^q\right)^2\\
&= \frac{2}{m^2} \left[\left(\sum_{j=1}^{N}(\Delta x_j^p)^2\right)^2 + \left(\sum_{j=1}^{N}(\Delta x_j^q)^2\right)^2 - 2 \left(\sum_{j=1}^{N} \Delta x_j^p \Delta x_j^q\right)^2\right]\\
&=  \frac{2}{m^2}  \left[ (\varepsilon_{\rm{x}}^p)^2 + (\varepsilon_{\rm{x}}^q)^2 - 2\langle \Delta \mathbf{x}^p, \Delta \mathbf{x}^q\rangle ^2 \right];
\end{split}
\end{equation}
\begin{equation}
{\rm{second~term}} = 4 \textrm{Var} \left( n_{cv_i}\sum_{k=1}^{N}A_{cv_{ik}}(\Delta x_k^p - \Delta x_k^q)\right) = \displaystyle \frac{4\sigma_{\rm{n}}^2}{m^2} \sum_{j=1}^{N} (\Delta x_j^p - \Delta x_j^q)^2 = \frac{4\sigma_{\rm{n}}^2}{m^2} \Vert \Delta \mathbf{x}^p - \Delta \mathbf{x}^q \Vert_2^2.
\end{equation}
Therefore, 
\begin{equation}
\begin{split}
\textrm{Var} (R_i) &= \frac{2}{m^2}  \left[ (\varepsilon_{\rm{x}}^p)^2 + (\varepsilon_{\rm{x}}^q)^2 - 2\langle \Delta \mathbf{x}^p, \Delta \mathbf{x}^q\rangle ^2 + 2 \sigma_{\rm{n}}^2 \Vert \Delta \mathbf{x}^p - \Delta \mathbf{x}^q \Vert_2^2 \right] \\
&=\frac{2}{m^2} \left[ (\varepsilon_{\rm{g}}^p)^2 + (\varepsilon_{\rm{g}}^q)^2 - 2\rho_{\rm{g}}^2 \varepsilon_{\rm{g}}^p \varepsilon_{\rm{g}}^q \right].
\end{split}
\end{equation}

Furthermore, under the assumption that $m_{\rm{cv}}$ is large, the Central Limit Theorem implies that,   
\begin{equation}
\Delta \epsilon_{\rm{cv}} = \displaystyle \sum_{i=1} ^ {m_{\rm{cv}}} r_i \sim \mathcal{N}( \mu, \sigma^2),
\end{equation}
where $\mu = \frac{m_{\rm{cv}}}{m}( \varepsilon_{\rm{g}}^p - \varepsilon_{\rm{g}}^q)
$ and $\sigma^2 =\frac{2m_{\rm{cv}}}{m^2} [(\varepsilon_{\rm{g}}^p)^2 + (\varepsilon_{\rm{g}}^q)^2 - 2\rho_{\rm{g}}^2 \varepsilon_{\rm{g}}^p \varepsilon_{\rm{g}}^q]$.
\end{proof}
\section{proof of Theorem \ref{OMP-CV theorem 1}}\label{app OMP-CV theorem 1}
\begin{proof}
The proof of Theorem \ref{OMP-CV theorem 1} consists of three steps. Firstly, the relation between recovered signals generated in two iterations of OMP-CV is studied. Next, general CV techniques are used to calculate the CV comparison success probability of these two recovered signals. Finally, we analyse this probability in both situations of Theorem \ref{OMP-CV theorem 1} to complete the proof.
 
\subsection{Step1: Relation between Two Recovered Signals}
Consider recovered signals generated in the $p$-th and $q$-th iteration, $\mathbf{\hat{x}}^p$ and $\mathbf{\hat{x}}^q$, then $\mathbf{\hat{x}}_{T^p}^p = \mathbf{A}_{T^p}^ \dagger \mathbf{y}$ and $\mathbf{\hat{x}}_{T^q}^q = \mathbf{A}_{T^q}^ \dagger \mathbf{y}$. Assume next that $p<q$. The relation between $\mathbf{\hat{x}}^p$ and $\mathbf{\hat{x}}^q$ can be described as the following lemma.

\begin{lemma} \label{OMP-CV lemma 1}
Let $\mathbf{\hat{x}}^p$ and $\mathbf{\hat{x}}^q$ be recovered signals generated in the $p$-th and $q$-th iteration in OMP-CV and $p < q$. It holds that,
\begin{equation}
\mathbf{\hat{x}}_{T^q}^q = \mathbf{A}_{T^q}^ \dagger \mathbf{y} = 
\left[ \begin{array}{c}
\mathbf{A}_{T^p} ^ \dagger (\mathbf{A}_{(T^q)^c} \mathbf{x}_{(T^q)^c} + \mathbf{n} - \mathbf{A}_{T^{q-p}} \bm{\delta}_{T^{q-p}})\\ 
\bm{\delta}_{T^{q-p}}
\end{array} \right]
+ \mathbf{x}_{T^q},
\end{equation}
where $\bm{\delta}_{T^{q-p}} = (\mathbf{A}_{T^{q-p}}' P_{T^p} \mathbf{A}_{T^{q-p}})^{-1} \mathbf{A}_{T^{q-p}}' P_{T^p} (\mathbf{A}_{(T^q)^c} \mathbf{x}_{(T^q)^c} + \mathbf{n})$ and $T^{q-p}$ denotes the set of indices $T^q \backslash T^p$. \footnote{For simplicity, the elements of $\mathbf{\hat{x}}_{T^q}^q$ is reordered. Similar operations are conducted to the analysis below.}
\end{lemma}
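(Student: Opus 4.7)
The plan is to derive the claimed block form by explicitly solving the least-squares system $\min_{z} \|\mathbf{y} - \mathbf{A}_{T^q} z\|_2^2$ using the fact that, by construction of OMP, the supports are nested: $T^p \subset T^q$, so after a benign reordering we may write $\mathbf{A}_{T^q} = [\mathbf{A}_{T^p},\ \mathbf{A}_{T^{q-p}}]$ with $T^{q-p} = T^q \setminus T^p$. Write the solution as $\mathbf{\hat{x}}^q_{T^q} = [z_1'; z_2']'$ and use the normal equations in block form.

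First I would solve for $z_2$ (the new indices) by eliminating $z_1$ via the projector $P_{T^p} = \mathbf{I}-\mathbf{A}_{T^p}\mathbf{A}_{T^p}^\dagger$. The first block of the normal equations yields $\mathbf{A}_{T^p}'(\mathbf{A}_{T^p}z_1+\mathbf{A}_{T^{q-p}}z_2) = \mathbf{A}_{T^p}'\mathbf{y}$, and applying $\mathbf{A}_{T^p}^\dagger$ to this gives
\[
z_1 = \mathbf{A}_{T^p}^\dagger(\mathbf{y}-\mathbf{A}_{T^{q-p}}z_2).
\]
Substituting into the second block row and using $\mathbf{I} - \mathbf{A}_{T^p}\mathbf{A}_{T^p}^\dagger = P_{T^p}$, one obtains the reduced normal equation $\mathbf{A}_{T^{q-p}}' P_{T^p} \mathbf{A}_{T^{q-p}}\, z_2 = \mathbf{A}_{T^{q-p}}' P_{T^p}\mathbf{y}$, which (by Lemma \ref{RIP lemma 5} applied with $S = T^p$, $T = T^{q-p}$, ensuring invertibility of the Gram-like matrix under the operative RIC) uniquely determines $z_2$.

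Next I would simplify $P_{T^p}\mathbf{y}$ by decomposing $\mathbf{y} = \mathbf{A}_{T^p}\mathbf{x}_{T^p} + \mathbf{A}_{T^{q-p}}\mathbf{x}_{T^{q-p}} + \mathbf{A}_{(T^q)^c}\mathbf{x}_{(T^q)^c} + \mathbf{n}$ and using $P_{T^p}\mathbf{A}_{T^p}=\mathbf{0}$. This gives $P_{T^p}\mathbf{y} = P_{T^p}\mathbf{A}_{T^{q-p}}\mathbf{x}_{T^{q-p}} + P_{T^p}(\mathbf{A}_{(T^q)^c}\mathbf{x}_{(T^q)^c}+\mathbf{n})$, and inserting this into the expression for $z_2$ yields $z_2 = \mathbf{x}_{T^{q-p}} + \bm{\delta}_{T^{q-p}}$ with $\bm{\delta}_{T^{q-p}}$ exactly as defined in the statement.

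Finally I would back-substitute the closed form for $z_2$ into the expression for $z_1$, again decomposing $\mathbf{y}$ as above so that the $\mathbf{A}_{T^p}\mathbf{x}_{T^p}$ term is recovered as a bare $\mathbf{x}_{T^p}$ after applying $\mathbf{A}_{T^p}^\dagger$; the $\mathbf{A}_{T^{q-p}}\mathbf{x}_{T^{q-p}}$ pieces cancel against the $\mathbf{x}_{T^{q-p}}$ part of $z_2$, leaving exactly $z_1 = \mathbf{x}_{T^p} + \mathbf{A}_{T^p}^\dagger(\mathbf{A}_{(T^q)^c}\mathbf{x}_{(T^q)^c}+\mathbf{n} - \mathbf{A}_{T^{q-p}}\bm{\delta}_{T^{q-p}})$. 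Stacking $z_1$ and $z_2$ and recognizing that $\mathbf{x}_{T^p}$ stacked above $\mathbf{x}_{T^{q-p}}$ is precisely $\mathbf{x}_{T^q}$ (under the same reordering) delivers the claimed identity. The only real obstacle is the bookkeeping of the reordering of $T^q = T^p \cup T^{q-p}$ and being careful about when $P_{T^p}$ annihilates which summand; the algebra itself is routine once the nestedness $T^p\subset T^q$ is exploited.
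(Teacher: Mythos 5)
Your proposal is correct and is essentially the paper's own argument: both exploit the nested supports to write $\mathbf{A}_{T^q}=[\mathbf{A}_{T^p},\mathbf{A}_{T^{q-p}}]$ and both reduce to the Schur complement $\mathbf{A}_{T^{q-p}}'P_{T^p}\mathbf{A}_{T^{q-p}}$; the paper writes out the explicit $2\times 2$ block inverse of the Gram matrix and forms $\mathbf{A}_{T^q}^\dagger=\mathbf{\Gamma}_q\mathbf{A}_{T^q}'$, whereas you perform the equivalent block elimination directly on the normal equations. The back-substitution and the decomposition of $\mathbf{y}$ match the paper's computation exactly.
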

\begin{proof}
The proof of Lemma \ref{OMP-CV lemma 1} is in Appendix \ref{app OMP-CV lemma 1}.
\end{proof}

Write $\Delta \mathbf{x}^p$ and $\Delta \mathbf{x}^q$ according to Lemma \ref{OMP-CV lemma 1},
\begin{gather}
\Delta \mathbf{x}^p = \mathbf{x}_T - \mathbf{\hat{x}}_T^p = \left[ \begin{array}{c}
-\mathbf{A}_{T^p}^\dagger (\mathbf{A}_{(T^q)^c} \mathbf{x}_{(T^q)^c} + \mathbf{n}) \\
\mathbf{x}_{T^{q-p}}\\
\mathbf{x}_{(T^q)^c}
\end{array} \right],\\
\Delta \mathbf{x}^q = \mathbf{x}_T - \mathbf{\hat{x}}_T^q = \left[ \begin{array}{c}
-\mathbf{A}_{T^p}^\dagger (\mathbf{A}_{(T^q)^c} \mathbf{x}_{(T^q)^c} + \mathbf{n} - \mathbf{A}_{T^{q-p}}\bm{\delta}_{T^{q-p}}) \\
- \bm{\delta}_{T^{q-p}}\\
\mathbf{x}_{(T^q)^c}
\end{array} \right],
\end{gather}
where $(T^q)^c$ denotes the index set $T \backslash T^q$. Bounding $\Vert \bm{\delta}_{T^{q-p}} \Vert_2^2$ using the following lemma.
\begin{lemma}\label{OMP-CV lemma 2}
\begin{equation}
\Vert \bm{\delta}_{T^{q-p}}\Vert_2^2 \leq \eta (\Vert \mathbf{x}_{(T^q)^c} \Vert_2^2 + \sigma_{\rm{n}}^2) 
\end{equation}
where $\eta$ is related to RICs of the sensing matrix $\mathbf{A}$. With the hypothesis $\delta_d \leq 0.1$ we have $\eta \leq 0.0127$.
\end{lemma}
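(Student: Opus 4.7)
The plan is to bound $\bm{\delta}_{T^{q-p}}$ directly from its closed-form definition by controlling the inverse Gram matrix and the projected cross term separately, so that the final estimate has the clean form $\eta(\Vert\mathbf{x}_{(T^q)^c}\Vert_2^2 + \sigma_{\rm n}^2)$. To combine the signal term $\mathbf{A}_{(T^q)^c}\mathbf{x}_{(T^q)^c}$ and the noise term $\mathbf{n}=\sigma_{\rm n}\mathbf{a}_{\rm n}$ on an equal footing and to match the form of the claimed bound, I would invoke the generalized sensing matrix framework of Definition \ref{Generalized sensing matrix and input signal} and write $\mathbf{A}_{(T^q)^c}\mathbf{x}_{(T^q)^c}+\mathbf{n} = \mathbf{A}_{\rm g,U}\mathbf{x}_{\rm g,U}$ for $U=(T^q)^c\cup\{\text{noise index}\}$, so that $\Vert\mathbf{x}_{\rm g,U}\Vert_2^2 = \Vert\mathbf{x}_{(T^q)^c}\Vert_2^2+\sigma_{\rm n}^2$, which is exactly the right-hand side we are aiming for.

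Setting $G=\mathbf{A}_{T^{q-p}}'P_{T^p}\mathbf{A}_{T^{q-p}}$, the definition immediately gives the split bound $\Vert\bm{\delta}_{T^{q-p}}\Vert_2 \leq \Vert G^{-1}\Vert\cdot\Vert\mathbf{A}_{T^{q-p}}'P_{T^p}\mathbf{A}_{\rm g,U}\mathbf{x}_{\rm g,U}\Vert_2$. For the first factor I would use Lemma \ref{RIP lemma 5} together with Lemma \ref{RIP lemma 1}, exploiting that $T^p$ and $T^{q-p}$ are disjoint with $|T^p|+|T^{q-p}|=q\leq d$, to obtain a lower bound on $\lambda_{\min}(G)$ and hence a clean upper bound on $\Vert G^{-1}\Vert$ in terms of $\delta_q$ (alternatively, one may recognize $G$ as a Schur complement of $\mathbf{A}_{T^q}'\mathbf{A}_{T^q}$ and directly inherit $\lambda_{\min}(G)\geq 1-\delta_q$, which yields a slightly sharper constant). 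For the second factor I would expand $P_{T^p}=\mathbf{I}-\mathbf{A}_{T^p}\mathbf{A}_{T^p}^\dagger$ and split into two pieces: the ``bare'' cross term $\mathbf{A}_{T^{q-p}}'\mathbf{A}_{\rm g,U}\mathbf{x}_{\rm g,U}$, bounded directly by Lemma \ref{RIP lemma 3} using disjointness of $T^{q-p}$ and $U$, and the correction term $\mathbf{A}_{T^{q-p}}'\mathbf{A}_{T^p}\mathbf{A}_{T^p}^\dagger\mathbf{A}_{\rm g,U}\mathbf{x}_{\rm g,U}$, to which I would apply Lemma \ref{RIP lemma 4} on the inner pseudoinverse and then Lemma \ref{RIP lemma 3} on the outer cross product. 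Adding the two pieces and telescoping the common $(1-\delta)$ denominators produces a bound of the shape $\delta/(1-\delta)$ times $\Vert\mathbf{x}_{\rm g,U}\Vert_2$.

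Combining the two factors and squaring will yield the desired bound $\Vert\bm{\delta}_{T^{q-p}}\Vert_2^2\leq \eta(\Vert\mathbf{x}_{(T^q)^c}\Vert_2^2+\sigma_{\rm n}^2)$ with $\eta$ expressed as a rational function of the appropriate RICs of $\mathbf{A}$ (and of the augmented matrix $[\mathbf{A},\mathbf{a}_{\rm n}]$, which inherits the same RIP behavior since $\mathbf{a}_{\rm n}$ is i.i.d.\ Gaussian with matching per-entry variance). Substituting the hypothesis $\delta_d\leq 0.1$ into this closed form gives the numerical claim $\eta\leq 0.0127$; the room between a naive estimate and the stated constant is absorbed by using the Schur-complement/Lemma \ref{RIP lemma 5} refinement mentioned above.

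The main obstacle will be careful bookkeeping of which RIC subscripts appear at each step and verifying that each required subscript is at most $d$ (with the disjoint-set applications of Lemma \ref{RIP lemma 3} potentially requiring $\delta_{d+k}$, which must be reconciled with the hypothesis $\delta_d\leq 0.1$ by a monotonicity argument or by relocating the index sets into subsets of size at most $d$). A secondary subtlety is justifying the RIP-type inequalities for the augmented matrix $[\mathbf{A},\mathbf{a}_{\rm n}]$, but since the appended column shares the distribution of the columns of $\mathbf{A}$, the Gaussian RIP argument carries over with only a harmless adjustment to the support cardinality.
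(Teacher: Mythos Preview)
Your proposal is correct and follows essentially the same route as the paper: bound $\Vert\bm{\delta}_{T^{q-p}}\Vert_2$ by $\Vert G^{-1}\Vert\cdot\Vert\mathbf{A}_{T^{q-p}}'P_{T^p}(\mathbf{A}_{(T^q)^c}\mathbf{x}_{(T^q)^c}+\mathbf{n})\Vert_2$, control $\Vert G^{-1}\Vert$ via Lemma~\ref{RIP lemma 5} together with Lemma~\ref{RIP lemma 1}, and bound the cross term by expanding $P_{T^p}=\mathbf{I}-\mathbf{A}_{T^p}\mathbf{A}_{T^p}^\dagger$ into two pieces handled by Lemmas~\ref{RIP lemma 3} and~\ref{RIP lemma 4}. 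The paper does exactly this, arriving at
\[
\eta=\frac{1}{(1-\delta_{q-p})^2\bigl(1-(\tfrac{\delta_q}{1-\delta_q})^2\bigr)}\left(\delta_{|(T^q)^c|+q-p+1}^2+\Bigl(\tfrac{\delta_q\,\delta_{|(T^q)^c|+p+1}}{1-\delta_p}\Bigr)^2\right),
\]
and then asserts that under $\delta_d\le 0.1$ all the RIC subscripts appearing are at most $d$, giving $\eta\le 0.0127$. Your anticipated ``main obstacle'' about subscript bookkeeping is precisely where the paper is loose, and your framing of the noise through the augmented matrix $[\mathbf{A},\mathbf{a}_{\rm n}]$ is a clean way to make rigorous what the paper does informally (it silently treats $\mathbf{n}$ as an extra Gaussian column when applying the RIP cross-term lemmas). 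The Schur-complement shortcut you mention for $\lambda_{\min}(G)\ge 1-\delta_q$ is a legitimate sharpening, but the paper does not take it.
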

\begin{proof}
The proof of Lemma \ref{OMP-CV lemma 2} can be found in Appendix \ref{app OMP-CV lemma 2}.
\end{proof}

\subsection{Step2: Using General CV Techniques}
According to Theorem \ref{cv theorem 2},
\begin{equation}\label{OMP-CV e3}
\frac{1}{\lambda^2} = \frac{2}{m_{\rm{cv}}} \left[1+2(1-\rho_{\rm{g}}^2)\frac{\varepsilon_{\rm{g}}^q \varepsilon_{\rm{g}}^p}{(\varepsilon_{\rm{g}}^q - \varepsilon_{\rm{g}}^p)^2}\right].
\end{equation}
To calculate the CV comparison success probability $\Phi(\lambda)$, we next calculate 
\begin{equation} 
(1-\rho_{\rm{g}}^2)\frac{\varepsilon_{\rm{g}}^q \varepsilon_{\rm{g}}^p}{(\varepsilon_{\rm{g}}^q - \varepsilon_{\rm{g}}^p)^2} = \frac{\varepsilon_{\rm{g}}^q \varepsilon_{\rm{g}}^p - \langle \Delta \mathbf{x}_{\rm{g}}^q, \Delta \mathbf{x}_{\rm{g}}^p \rangle ^2}{(\varepsilon_{\rm{g}}^q - \varepsilon_{\rm{g}}^p)^2}.
\end{equation}

It follows from Lemma \ref{OMP-CV lemma 1} that
\begin{gather}
\label{OMP-CV e8} \varepsilon_{\rm{g}}^p = \sigma_{\rm{n}}^2 + \Vert \mathbf{x}_{(T^q)^c}\Vert_2^2 + \Vert \mathbf{x}_{T^{q-p}} \Vert_2^2 + \Vert \mathbf{A}_{T^p}^\dagger (\mathbf{A}_{(T^p)^c} \mathbf{x}_{(T^p)^c}+ \mathbf{n}) \Vert_2^2, \\
\label{OMP-CV e9}\varepsilon_{\rm{g}}^q = \sigma_{\rm{n}}^2 + \Vert \mathbf{x}_{(T^q)^c}\Vert_2^2 + \Vert \bm{\delta}_{T^{q-p}} \Vert_2^2 + \Vert \mathbf{A}_{T^p}^\dagger [\mathbf{A}_{(T^p)^c}\mathbf{x}_{(T^p)^c} + \mathbf{n} - \mathbf{A}_{T^{q-p}}(\mathbf{x}_{T^{q-p}} + \bm{\delta}_{T^{q-p}})]\Vert_2^2.
\end{gather}
Define,
\begin{align}
\label{OMP-CV e10} \mathbf{a} &\triangleq \mathbf{A}_{T^p}^\dagger (\mathbf{A}_{(T^p)^c} \mathbf{x}_{(T^p)^c} + \mathbf{n}), \\
\label{OMP-CV e11} \mathbf{b} &\triangleq \mathbf{A}_{T^p}^\dagger [\mathbf{A}_{(T^p)^c}\mathbf{x}_{(T^p)^c} + \mathbf{n} - \mathbf{A}_{T^{q-p}}(\mathbf{x}_{T^{q-p}} + \bm{\delta}_{T^{q-p}})], \\
\mathbf{c} &\triangleq \mathbf{A}_{T^p}^\dagger (\mathbf{A}_{T^{q-p}}(\mathbf{x}_{T^{q-p}} + \bm{\delta}_{T^{q-p}})),
\end{align}
where we note that $\mathbf{a} = \mathbf{b} + \mathbf{c}$. According to Lemma \ref{RIP lemma 4},
\begin{align}
\label{OMP-CV e12} \Vert \mathbf{a}\Vert_2^2 &\leq (\frac{\delta_{k+1}}{1-\delta_p})^2 (\Vert\mathbf{x}_{(T^p)^c}\Vert_2^2 + \sigma_{\rm{n}}^2), \\
\label{OMP-CV e13} \Vert \mathbf{b}\Vert_2^2 &\leq (\frac{\delta_{k+1}}{1-\delta_p})^2 (\Vert \bm{\delta}_{T^{q-p}}\Vert_2^2 + \Vert\mathbf{x}_{(T^q)^c}\Vert_2^2 + \sigma_{\rm{n}}^2), \\
\label{OMP-CV e14} \Vert\mathbf{c}\Vert_2^2 &\leq (\frac{\delta_q}{1-\delta_p})^2 \Vert \mathbf{x}_{T^{q-p}} + \bm{\delta}_{T^{q-p}} \Vert_2^2.
\end{align}
Substituting (\ref{OMP-CV e10}) and (\ref{OMP-CV e11}) into (\ref{OMP-CV e8}) and (\ref{OMP-CV e9}), we obtain that,
\begin{gather}
\varepsilon_{\rm{g}}^p = \sigma_{\rm{n}}^2 + \Vert \mathbf{x}_{(T^q)^c}\Vert_2^2 + \Vert \mathbf{x}_{T^{q-p}} \Vert_2^2 + \Vert \mathbf{a} \Vert_2^2, \\
\varepsilon_{\rm{g}}^q = \sigma_{\rm{n}}^2 + \Vert \mathbf{x}_{(T^q)^c}\Vert_2^2 + \Vert \bm{\delta}_{T^{q-p}} \Vert_2^2 + \Vert \mathbf{b}\Vert_2^2.
\end{gather}
Furthermore, by inequalities (\ref{OMP-CV e12}), (\ref{OMP-CV e13}), and (\ref{OMP-CV e14}), one has,
\begin{equation} \label{OMP-CV e1}
\begin{split}
\varepsilon_{\rm{g}}^p \varepsilon_{\rm{g}}^q - \langle \Delta \mathbf{x}_{\rm{g}}^p, \Delta \mathbf{x}_{\rm{g}}^q \rangle ^2 = & (\Vert \mathbf{x}_{T^{q-p}}\Vert_2^2 + \Vert \mathbf{a}\Vert_2^2) (\Vert \bm{\delta}_{T^{q-p}}\Vert_2^2 + \Vert \mathbf{b}\Vert_2^2) - (\bm{\delta}_{T^{q-p}} \cdot \mathbf{x}_{T^{q-p}} - \mathbf{a} \cdot \mathbf{b})_2^2 
\\ & + (\Vert \mathbf{x}_{T^{q-p}}
+ \bm{\delta}_{T^{q-p}}\Vert_2^2 + \Vert \mathbf{a} - \mathbf{b} \Vert_2^2) (\sigma_{\rm{n}}^2 + \Vert\mathbf{x}_{(T^q)^c}\Vert_2^2)
\\ & \leq (\Vert \mathbf{x}_{T^{q-p}}\Vert_2^2 + \Vert \mathbf{a}\Vert_2^2) (\Vert \bm{\delta}_{T^{q-p}}\Vert_2^2 + \Vert \mathbf{b}\Vert_2^2) 
\\ & + (\Vert \mathbf{x}_{T^{q-p}}
+ \bm{\delta}_{T^{q-p}}\Vert_2^2 + \Vert \mathbf{a} - \mathbf{b} \Vert_2^2) (\sigma_{\rm{n}}^2 + \Vert\mathbf{x}_{(T^q)^c}\Vert_2^2)
\\ &\leq \left[\Vert \mathbf{x}_{T^{q-p}}\Vert_2^2 + (\frac{\delta_{k+1}}{1-\delta_p})^2 (\Vert\mathbf{x}_{(T^p)^c}\Vert_2^2 + \sigma_{\rm{n}}^2)\right] \\
~~&\left[ \Vert \bm{\delta}_{T^{q-p}}\Vert_2^2 + (\frac{\delta_{k+1}}{1-\delta_p})^2 (\Vert \bm{\delta}_{T^{q-p}}\Vert_2^2 + \Vert\mathbf{x}_{(T^q)^c}\Vert_2^2 + \sigma_{\rm{n}}^2)\right] 
\\ &+ (1 + (\frac{\delta_q}{1-\delta_p})^2)\Vert \mathbf{x}_{T^{q-p}} + \bm{\delta}_{T^{q-p}} \Vert_2^2 (\sigma_{\rm{n}}^2 + \Vert\mathbf{x}_{(T^q)^c}\Vert_2^2).
\end{split}
\end{equation}
Meanwhile, by combining (\ref{OMP-CV e8}) and (\ref{OMP-CV e9}),
\begin{equation} \label{OMP-CV e2}
\varepsilon_{\rm{g}}^p - \varepsilon_{\rm{g}}^q = \Vert \mathbf{x}_{T^{q-p}} \Vert_2^2 - \Vert \bm{\delta}_{T^{q-p}}\Vert_2^2 + \Vert \mathbf{a}\Vert_2^2 - \Vert \mathbf{b}\Vert_2^2.
\end{equation}

\subsection{Step3: Analysing Both Situations in Theorem \ref{OMP-CV theorem 1}}
So far, we have represented $\lambda$ using $\mathbf{\hat{x}}^p$ and $\mathbf{\hat{x}}^q$. Next, we analyse the CV comparison success probability in both situations of Theorem \ref{OMP-CV theorem 1} respectively. 

Let $\mathbf{\hat{x}}^o$ be the oracle output.

\noindent \textbf{Situation 1}: $T \subset T^o$ and  $T \backslash T^p \neq \emptyset$

In this situation, $p < o$ and $\Vert \mathbf{x}_{(T^o)^c}\Vert_2^2 = 0$. Consider recovered signals $\mathbf{\hat{x}}^p$ and $\mathbf{\hat{x}}^o$. On one hand, write $q$ as $o$ in Equation (\ref{OMP-CV e1}) we have, 
\begin{equation}
\begin{split}
\varepsilon_{\rm{g}}^p \varepsilon_{\rm{g}}^o - \langle \Delta \mathbf{x}_{\rm{g}}^p, \Delta \mathbf{x}_{\rm{g}}^o \rangle ^2 
\leq&  \left(1 + \left(\frac{\delta_{k+1}}{1-\delta_p}\right)^2\right)^2 \Vert \mathbf{x}_{T^{o-p}}\Vert_2^2 \Vert \bm{\delta}_{T^{o-p}}\Vert_2^2 \\
&\quad+ \left(1 + \left(\frac{\delta_{k+1}}{1-\delta_p}\right)^2\right)\left(\frac{\delta_{k+1}}{1-\delta_p}\right)^2\sigma_{\rm{n}}^2\left(\Vert\mathbf{x}_{T^{o-p}}\Vert_2^2 + \Vert\bm{\delta}_{T^{o-p}}\Vert_2^2\right) 
\\ & \quad +\left(\frac{\delta_{k+1}}{1-\delta_p}\right)^4 \sigma_{\rm{n}} ^4 + \left(1 + \left(\frac{\delta_{o}}{1-\delta_p}\right)^2\right) \Vert \mathbf{x}_{T^{o-p}} + \bm{\delta}_{T^{o-p}} \Vert_2^2 \sigma_{\rm{n}}^2.
\end{split}
\end{equation}
Then by Lemma \ref{OMP-CV lemma 2}, 
\begin{equation}
\begin{split}
\varepsilon_{\rm{g}}^p \varepsilon_{\rm{g}}^o - \langle \Delta \mathbf{x}_{\rm{g}}^p, \Delta \mathbf{x}_{\rm{g}}^o \rangle ^2 
&\leq  \left(1 + \left(\frac{\delta_{k+1}}{1-\delta_p}\right)^2\right)^2 \eta \Vert \mathbf{x}_{T^{o-p}}\Vert_2^2 \sigma_{\rm{n}}^2 \\
&\quad+ \left(1 + \left(\frac{\delta_{k+1}}{1-\delta_p}\right)^2\right)\left(\frac{\delta_{k+1}}{1-\delta_p}\right)^2\sigma_{\rm{n}}^2\left(\Vert\mathbf{x}_{T^{o-p}}\Vert_2^2 + \eta \sigma_{\rm{n}}^2\right) 
\\ &\quad + \left(\frac{\delta_{k+1}}{1-\delta_p}\right)^4 \sigma_{\rm{n}} ^4 + \left(1 + \left(\frac{\delta_{o}}{1-\delta_p}\right)^2\right) \left(\Vert \mathbf{x}_{T^{o-p}}\Vert_2^2 + \eta \sigma_{\rm{n}}^2\right) \sigma_{\rm{n}}^2.
\end{split}
\end{equation}
Recall that, 
\begin{equation}
\alpha^p = \frac{\Vert \mathbf{x}_{T \backslash T^p}\Vert_2}{\sigma_{\rm{n}}} = \frac{\Vert \mathbf{x}_{T^{o-p}}\Vert_2}{\sigma_{\rm{n}}},
\end{equation}
then, 
\begin{equation}\label{OMP-CV e4}
\varepsilon_{\rm{g}}^p \varepsilon_{\rm{g}}^o - \langle\Delta \mathbf{x}_{\rm{g}}^p, \Delta \mathbf{x}_{\rm{g}}^o \rangle ^2 \leq \frac{\beta_1}{2} (\alpha^p)^2 \sigma_{\rm{n}}^4 + \frac{\beta_2}{2} \sigma_{\rm{n}}^4,
\end{equation}
where 
\begin{align*}
\beta_1 &= 2 \left[\left(1+ \left(\frac{\delta_{k+1}}{1-\delta_p}\right)^2\right)^2\eta + \left(1+\left(\frac{\delta_{k+1}}{1-\delta_p}\right)^2\right)\left(\frac{\delta_{k+1}}{1-\delta_p}\right)^2 + \left(1+\left(\frac{\delta_o}{1-\delta_p}\right)^2\right)\right],\\
\beta_2 &= 2\left[\left(1+ \left(\frac{\delta_{k+1}}{1-\delta_p}\right)^2\right)\left(\frac{\delta_{k+1}}{1-\delta_p}\right)^2\eta + \left(\frac{\delta_{k+1}}{1-\delta_p}\right)^4 + \left(1+\left(\frac{\delta_o}{1-\delta_p}\right)^2\right) \eta\right].
\end{align*}

On the other hand, write $q$ as $o$ in (\ref{OMP-CV e2}) we have, 
\begin{equation}
\begin{split}
\varepsilon_{\rm{g}}^p - \varepsilon_{\rm{g}}^o & = \Vert \mathbf{x}_{T^{o-p}}\Vert_2^2 - \Vert \bm{\delta}_{T^{o-p}}\Vert_2^2 + \Vert \mathbf{a} \Vert_2^2 - \Vert \mathbf{b}\Vert_2^2
\\ & = (\Vert \mathbf{x}_{T^{o-p}}\Vert_2^2 + \Vert \mathbf{A}_{T^p}^\dagger \mathbf{A}_{T^{o-p}}\mathbf{x}_{T^{o-p}}\Vert_2^2) - (\Vert \bm{\delta}_{T^{o-p}}\Vert_2^2 + \Vert\mathbf{A}_{T^p}^\dagger \mathbf{A}_{T^{o-p}}\bm{\delta}_{T^{o-p}}\Vert_2^2 ) 
\\ & \quad + 2 \langle \mathbf{A}_{T^p}^\dagger \mathbf{A}_{T^{o-p}}(\mathbf{x}_{T^{o-p}} + \bm{\delta}_{T^{o-p}}),\mathbf{A}_{T^p}^\dagger \mathbf{n}\rangle
\\ & \geq \Vert \mathbf{x}_{T^{o-p}}\Vert_2^2 - (1+(\frac{\delta_o}{1-\delta_p})^2)\Vert \bm{\delta}_{T^{o-p}}\Vert_2^2 - 2 \frac{\delta_o \delta_{p+1}}{(1-\delta_p)^2} \Vert \mathbf{x}_{T^{o-p}} + \bm{\delta}_{T^{o-p}}\Vert_2 \sigma_{\rm{n}}
\\ & \geq (\alpha^p)^2 \sigma_{\rm{n}}^2 - (1+(\frac{\delta_o}{1-\delta_p})^2) \eta \sigma_{\rm{n}}^2 - 2 \frac{\delta_o \delta_{p+1}}{(1-\delta_p)^2} ((\alpha^p)^2 \sigma_{\rm{n}} + \sqrt{\eta} \sigma_{\rm{n}}) \sigma_{\rm{n}}
\\ & \geq \left[ (\alpha^p)^2 - \beta_3 \alpha^p - \beta_4 \right] \sigma_{\rm{n}}^2,
\end{split}
\end{equation}
where $\beta_3 = 2 \frac{\delta_o \delta_{p+1}}{(1-\delta_p)^2}$ and $\beta_4 = (1+(\frac{\delta_o}{1-\delta_p})^2)\eta + 2\frac{\delta_o \delta_{p+1}}{(1-\delta_p)^2} \sqrt{\eta}$. Notice that $\varepsilon_{\rm{g}}^p - \varepsilon_{\rm{g}}^o > 0$, then we have,
\begin{equation}\label{OMP-CV e5}
\varepsilon_{\rm{g}}^p - \varepsilon_{\rm{g}}^o \geq \max(\left[ (\alpha^p)^2 - \beta_3 \alpha^p - \beta_4 \right] \sigma_{\rm{n}}^2, 0).
\end{equation}
Combining equations (\ref{OMP-CV e4}) and (\ref{OMP-CV e5}) we finally reach 
\begin{equation} \label{OMP-CV e15}
\frac{\varepsilon_{\rm{g}}^p \varepsilon_{\rm{g}}^o - \langle \Delta \mathbf{x}_{\rm{g}}^p, \Delta \mathbf{x}_{\rm{g}}^o \rangle ^2}{(\varepsilon_{\rm{g}}^p - \varepsilon_{\rm{g}}^o)^2} \leq \frac{\frac{\beta_1}{2} (\alpha^p)^2 + \frac{\beta_2}{2}}{ \max(\left[ (\alpha^p)^2 - \beta_3 \alpha^p - \beta_4 \right], 0)^2}.
\end{equation}
Substituting (\ref{OMP-CV e15}) into (\ref{OMP-CV e3}),
\begin{equation}
\frac{1}{\lambda ^2} \leq \frac{2}{m_{\rm{cv}}} \left[ 1 + \frac{\beta_1 (\alpha^p)^2 + \beta_2}{\max(\left[ (\alpha^p)^2 - \beta_3 \alpha^p - \beta_4 \right], 0)^2} \right].
\end{equation}
Furthermore, 
\begin{equation}
\begin{split}\label{OMP-CV e6}
\lambda^2 &\geq \frac{m_{\rm{cv}}}{2} \left[ 1 - \frac{\beta_1 (\alpha^p)^2 + \beta_2}{\max(\left[ (\alpha^p)^2 - \beta_3 \alpha^p - \beta_4 \right], 0)^2 + (\beta_1 (\alpha^p)^2 + \beta_2)} \right]
\\ & = \frac{m_{\rm{cv}}}{2}[1 - \rm{g}(\alpha^p) ]
\\ & \approx \frac{m_{\rm{cv}}}{2} \left[ 1 - \frac{\beta_1}{(\alpha^p)^2 + \beta_1} \right],
\end{split}
\end{equation}
where $\rm{g}(\alpha^p) = [ \beta_1 (\alpha^p)^2 + \beta_2 ] / [\beta_1 (\alpha^p)^2 + \beta_2 + \max( (\alpha^p)^2 - \beta_3 \alpha^p - \beta_4, 0)^2 ]$.

The last approximation holds because $\beta_1 \gg \beta_2, \beta_3, \beta_4$. According to Theorem \ref{cv theorem 2}, $\Phi(\lambda)$ is the probability that $\epsilon_{\rm{cv}}^o < \epsilon_{\rm{cv}}^p$. Therefore, taking square root of both sides of (\ref{OMP-CV e6}) to prove the first part of Theorem \ref{OMP-CV theorem 1}. \\~

\noindent \textbf{Situation 2}: $T \subset T^o$ and  $T \backslash T^p = \emptyset$

In this situation, please notice that $\rho_{\rm{g}}$ of $\mathbf{\hat{x}}^o$ and $\mathbf{\hat{x}}^p$ is extremely closed to 1. Therefore, we would like to first calculate the lower bound of $\rho_{\rm{g}}$ and then complete the proof using Theorem \ref{cv theorem 3}. As $p$ may exceed $o$, we first assume $p < o$ and then analyze the alternative situation. Recall that 
\begin{equation}
\rho_{\rm{g}} = \frac{\langle\Delta \mathbf{x}_{\rm{g}}^o, \Delta \mathbf{x}_{\rm{g}}^p\rangle}{\Vert \Delta \mathbf{x}_{\rm{g}}^o\Vert_2 \Vert\Delta \mathbf{x}_{\rm{g}}^p\Vert_2}.
\end{equation}
On one hand,
\begin{equation}
\begin{split}
\langle\Delta \mathbf{x}_{\rm{g}}^o, \Delta \mathbf{x}_{\rm{g}}^p\rangle & = \langle \mathbf{A}_{T^p}^\dagger \mathbf{n}, \mathbf{A}_{T^p}^\dagger (\mathbf{n} - \mathbf{A}_{T^{o-p}} \bm{\delta}_{T^{o-p}})\rangle + \sigma_{\rm{n}}^2
\\ & = \Vert \mathbf{A}_{T^p}^\dagger \mathbf{n}\Vert_2^2 - \langle \mathbf{A}_{T^p}^\dagger \mathbf{n}, \mathbf{A}_{T^p}^\dagger \mathbf{A}_{T^{o-p}} \bm{\delta}_{T^{o-p}}\rangle + \sigma_{\rm{n}}^2
\\ & \geq \Vert \mathbf{A}_{T^p}^\dagger \mathbf{n}\Vert_2^2 - \Vert \mathbf{A}_{T^p}^\dagger \mathbf{n}\Vert_2 \Vert \mathbf{A}_{T^p}^\dagger \mathbf{A}_{T^{o-p}} \bm{\delta}_{T^{o-p}}\Vert_2 + \sigma_{\rm{n}}^2
\\ & \geq - \Vert \mathbf{A}_{T^p}^\dagger \mathbf{n}\Vert_2 \Vert \mathbf{A}_{T^p}^\dagger \mathbf{A}_{T^{o-p}} \bm{\delta}_{T^{o-p}}\Vert_2 + \sigma_{\rm{n}}^2
\\ & \geq \sigma_{\rm{n}}^2 (1 - \frac{\delta_{p+1} \delta_{o+1}}{(1-\delta_p)^2}\sqrt{\eta}),
\end{split}
\end{equation}
where the last inequality holds due to Lemma \ref{RIP lemma 4}. On the other hand,
\begin{equation}
\begin{split}
\varepsilon_{\rm{g}}^p \varepsilon_{\rm{g}}^o & = (\Vert \mathbf{A}_{T^p}^\dagger \mathbf{n}\Vert_2^2 + \sigma_{\rm{n}}^2) (\Vert \mathbf{A}_{T^p}^\dagger (\mathbf{n} - \mathbf{A}_{T^{o-p}} \bm{\delta}_{T^{o-p}})\Vert_2^2 + \Vert \bm{\delta}_{T^{o-p}}\Vert_2^2 + \sigma_{\rm{n}}^2 ) 
\\ & \leq \left(\left(\frac{\delta_{p+1}}{1-\delta_p}\right)^2 \sigma_{\rm{n}}^2 + \sigma_{\rm{n}}^2\right)\left(\left(\frac{\delta_{o+1}}{1-\delta_p}\right)^2 \sigma_{\rm{n}}^2 + \eta \sigma_{\rm{n}}^2 + \sigma_{\rm{n}}^2\right)
\\ & = \sigma_{\rm{n}}^4 \left(\left(\frac{\delta_{p+1}}{1-\delta_p}\right)^2 + 1\right)\left(\left(\frac{\delta_{o+1}}{1-\delta_p}\right)^2 + \eta + 1\right).
\end{split}
\end{equation}
Then,
\begin{equation}
\rho_{\rm{g}} = \frac{\langle\Delta \mathbf{x}_{\rm{g}}^o, \Delta \mathbf{x}_{\rm{g}}^p\rangle}{\Vert \Delta \mathbf{x}_{\rm{g}}^o\Vert_2 \Vert\Delta \mathbf{x}_{\rm{g}}^p\Vert_2} \geq \beta_5,
\end{equation}
where $\beta_5 = \frac{(1 - \frac{\delta_{p+1} \delta_{o+1}}{(1-\delta_p)^2}\sqrt{\eta})}{\sqrt{((\frac{\delta_{p+1}}{1-\delta_p})^2 + 1)((\frac{\delta_{o+1}}{1-\delta_p})^2 + \eta + 1)}}$.

For the other situation, i.e. $o<p$, just switch the position of $o$ and $p$ to obtain the same lower bound of $\rho_{\rm{g}}$.

Furthermore, it follows from Theorem \ref{cv theorem 3} that the CV comparison success probability of $\mathbf{\hat{x}}^p$ and $\mathbf{\hat{x}}^o$ is equal or higher than $\Phi (\lambda_0)$ if 
\begin{equation}\label{OMP-CV e7}
\frac{\varepsilon_{\rm{g}}^p}{\varepsilon_{\rm{g}}^o} \geq 2 C_0 + 1 + 2 \sqrt{C_0^2 + C_0},
\end{equation}
where $C_0 = (1-\rho_{\rm{g}}^2 )\frac{\lambda_0^2}{m_{\rm{cv}} - 2 \lambda_0^2}\leq (1-\beta_5^2) \frac{\lambda_0^2}{m_{\rm{cv}} - 2 \lambda_0^2}$.

Let $C_1 = 2 C_0 + 1 + 2 \sqrt{C_0^2 + C_0}$, then (\ref{OMP-CV e7}) can be written as $\varepsilon_{\rm{g}}^p < C_1 \varepsilon_{\rm{g}}^o$. Assume that among all recovered signals that recovered all indices in the support $T$, there are $n$ signals whose recovery error is larger than $C_1 \varepsilon_{\rm{g}}^o$, i.e. satisfying (\ref{OMP-CV e7}). Let $S$ denote the set containing iteration indices of these $n$ recovered signals, i.e., 
\begin{equation}
S = \{ i \mid  \varepsilon_{\rm{g}}^i \geq C_1 \varepsilon_{\rm{g}}^o \}.
\end{equation}
The probability is smaller than $n[1-\Phi(\lambda_0)]$ that among CV residuals of these $n$ recovered signals, there exists one that is smaller than $\epsilon_{\rm{cv}}^o$, i.e.
\begin{equation}
\textrm{P}(\exists i \in S ~~ s.t. ~~ \epsilon_{\rm{cv}}^i \leqslant \epsilon_{\textrm{cv}}^o) < n[1-\Phi(\lambda_0)].
\end{equation}
Since this is the probability smaller than which it holds that a recovered signal with recovery error larger $C_1 \varepsilon_{\rm{g}}^o$ is the OMP-CV output, i.e. 
\begin{equation}
\textrm{P}(\{\mathbf{\hat{x}}^i~\textrm{is~the~OMP-CV~output}\} \bigcap \{i \in S\}) < \textrm{P}(\exists i \in S ~~ \textrm{s.t.} ~~ \epsilon_{\rm{cv}}^i \leqslant \epsilon_{\rm{cv}}^o), 
\end{equation}
and since $n$ does not exceed $(d-k)$, it holds with probability larger than $\{1-(d-k)[1-\Phi(\lambda_0)]\}$ that OMP-CV output has recovery error smaller than $C_1 \varepsilon_{\rm{g}}^o$. This proves the second part of Theorem \ref{OMP-CV theorem 1}.
\end{proof}
\section{proof of Lemma \ref{OMP-CV lemma 1}}
\label{app OMP-CV lemma 1}
\begin{proof}
Make the notation $\mathbf{\Gamma}_i = (\mathbf{A}_{T^i}' \mathbf{A}_{T^i})^{-1}$, $i=p, q$, and assume next $p < q$. $\mathbf{\Gamma}_q$ can be written in $\mathbf{\Gamma}_p$ as,
\begin{equation}
\begin{split}
\mathbf{\Gamma}_q &= \left[ \begin{array}{c c}
\mathbf{A}_{T^p}' \mathbf{A}_{T^p} & \mathbf{A}_{T^p}' \mathbf{A}_{T^{q-p}} \\
\mathbf{A}_{T^{q-p}}' \mathbf{A}_{T^p} & \mathbf{A}_{T^{q-p}}' \mathbf{A}_{T^{q-p}}
\end{array} \right] ^ {-1} \\
&= \left[ \begin{array}{c c}
\mathbf{\Gamma}_p + \mathbf{\Gamma}_p \bm{\alpha}_p\bm{\theta}_p\bm{\alpha}_p'\mathbf{\Gamma}_p & -\mathbf{\Gamma}_p \bm{\alpha}_p\bm{\theta}_p \\
-\bm{\theta}_p \bm{\alpha}_p'\mathbf{\Gamma}_p & \bm{\theta}_p
\end{array}\right],
\end{split}
\end{equation}
where $\bm{\alpha}_p = \mathbf{A}_{T^p}' \mathbf{A}_{T^{q-p}}$, $\bm{\theta}_p = (\mathbf{A}_{T^{q-p}}' P_{T^p}\mathbf{A}_{T^{q-p}})^{-1}$, and $P_{T^p} = I - \mathbf{A}_{T^p} \mathbf{A}_{T^p}^\dagger$. Next, write $\mathbf{A}_{T^q}^\dagger$ in $\mathbf{A}_{T^p}^\dagger$ as,
\begin{equation}
\begin{split}
\mathbf{A}_{T^q}^\dagger &= (\mathbf{A}_{T^q}' \mathbf{A}_{T^q})^{-1} \mathbf{A}_{T^q}' = \mathbf{\Gamma}_q \mathbf{A}_{T^q}' \\
& = \left[ \begin{array}{c c}
\mathbf{\Gamma}_p + \mathbf{\Gamma}_p \bm{\alpha}_p\bm{\theta}_p\bm{\alpha}_p'\mathbf{\Gamma}_p & -\mathbf{\Gamma}_p \bm{\alpha}_p\bm{\theta}_p \\
-\bm{\theta}_p \bm{\alpha}_p'\mathbf{\Gamma}_p & \bm{\theta}_p
\end{array}\right] 
\left[ \begin{array}{c}
\mathbf{A}_{T^p}' \\
\mathbf{A}_{T^{q-p}}'
\end{array}\right] \\
& = \left[ \begin{array}{c}
\mathbf{A}_{T^p}^\dagger - \mathbf{A}_{T^p}^\dagger \mathbf{A}_{T^{q-p}} \bm{\theta}_p \mathbf{A}_{T^{q-p}}' P_{T^p} \\
\bm{\theta}_p \mathbf{A}_{T^{q-p}}' P_{T^p}
\end{array}\right].
\end{split}
\end{equation}

Finally, we reach that,  
\begin{equation}
\begin{split}
\mathbf{\hat{x}}_{T^q}^q &= \mathbf{A}_{T^q}^\dagger \mathbf{y} = \mathbf{A}_{T^q}^\dagger (\mathbf{A}_{T^q}\mathbf{x}_{T^q} + \mathbf{A}_{(T^q)^c} \mathbf{x}_{(T^q)^c} + \mathbf{n}) \\
&= \left[ \begin{array}{c}
\mathbf{A}_{T^p}^\dagger - \mathbf{A}_{T^p}^\dagger \mathbf{A}_{T^{q-p}} \theta_p \mathbf{A}_{T^{q-p}}' P_{T^p} \\
\bm{\theta}_p \mathbf{A}_{T^{q-p}}' P_{T^p}
\end{array}\right] (\mathbf{A}_{(T^q)^c} \mathbf{x}_{(T^q)^c} + \mathbf{n}) + \mathbf{x}_{T^q} \\
&= \left[ \begin{array}{c}
\mathbf{A}_{T^p} ^ \dagger (\mathbf{A}_{(T^q)^c} \mathbf{x}_{(T^q)^c} + \mathbf{n} - \mathbf{A}_{T^{q-p}} \bm{\delta}_{T^{q-p}})\\ 
\bm{\delta}_{T^{q-p}}
\end{array} \right]
+ \mathbf{x}_{T^q},
\end{split}
\end{equation}
where $\bm{\delta}_{T^{q-p}} = (\mathbf{A}_{T^{q-p}}' P_{T^p} \mathbf{A}_{T^{q-p}})^{-1} \mathbf{A}_{T^{q-p}}' P_{T^p} (\mathbf{A}_{(T^q)^c} \mathbf{x}_{(T^q)^c} + \mathbf{n})$.
\end{proof}
\section{proof of Lemma \ref{OMP-CV lemma 2}}
\label{app OMP-CV lemma 2}
\begin{proof}
Let $\mathbf{u} \in \mathbb{R}^{q-p}$ be an arbitrary vector. By Lemma \ref{RIP lemma 1}, 
\begin{equation}\label{App e5}
\sqrt{1-\delta_{q-p}} \Vert P_{T^p} \mathbf{A}_{T^{q-p}} \mathbf{u}\Vert_2 \leq \Vert \mathbf{A}_{T^{q-p}}'
P_{T^p} \mathbf{A}_{T^{q-p}} \mathbf{u}\Vert_2 \leq \sqrt{1+\delta_{q-p}} \Vert P_{T^p} \mathbf{A}_{T^{q-p}} \mathbf{u}\Vert_2.
\end{equation}
Furthermore, by Lemma \ref{RIP lemma 5},
\begin{equation}\label{App e6}
\sqrt{1-\left(\frac{\delta_q}{1-\delta_q}\right)^2} \Vert \mathbf{A}_{T^{q-p}} \mathbf{u} \Vert_2 \leq \Vert P_{T^p} \mathbf{A}_{T^{q-p}} \mathbf{u}\Vert_2 \leq \Vert \mathbf{A}_{T^{q-p}} \mathbf{u} \Vert_2.
\end{equation}
By Lemma \ref{RIP lemma 1} once again,
\begin{equation}\label{App e7}
\sqrt{1-\delta_{q-p}} \Vert \mathbf{u}\Vert_2 \leq \Vert  \mathbf{A}_{T^{q-p}} \mathbf{u}\Vert_2 \leq \sqrt{1+\delta_{q-p}} \Vert \mathbf{u}\Vert_2.
\end{equation}
Combining inequalities (\ref{App e5}), (\ref{App e6}), and (\ref{App e7}) we obtain that,
\begin{equation}
(1-\delta_{q-p})^2 \left(1-\left(\frac{\delta_q}{1-\delta_q}\right)^2\right) \Vert \mathbf{u}\Vert_2^2 \leq \Vert \mathbf{A}_{T^{q-p}}'
P_{T^p} \mathbf{A}_{T^{q-p}} \mathbf{u}\Vert_2 \leq (1+\delta_{q-p})^2 \Vert \mathbf{u}\Vert_2^2.
\end{equation}

The above inequality shows that the singular values of matrix $\mathbf{A}_{T^{q-p}}'
P_{T^p} \mathbf{A}_{T^{q-p}}$ lie between $(1-\delta_{q-p}) \sqrt{(1-(\frac{\delta_q}{1-\delta_q})^2)}$ and $(1+\delta_{q-p})$. Hence, 
\begin{equation}
\frac{1}{(1+\delta_{q-p})^2} \Vert \mathbf{u}\Vert_2^2 \leq \Vert (\mathbf{A}_{T^{q-p}}'
P_{T^p} \mathbf{A}_{T^{q-p}})^{-1} \mathbf{u} \Vert_2^2 \leq \frac{1}{(1-\delta_{q-p})^2 (1-(\frac{\delta_q}{1-\delta_q})^2)} \Vert \mathbf{u}\Vert_2^2.
\end{equation}
Finally, 
\begin{equation} \label{lemma e1}
\begin{split}
&\Vert \mathbf{\delta}_{T^{q-p}}\Vert_2^2 \\
 =& \Vert (\mathbf{A}_{T^{q-p}}' P_{T^p} \mathbf{A}_{T^{q-p}})^{-1} \mathbf{A}_{T^{q-p}}' P_{T^p} (\mathbf{A}_{(T^q)^c} \mathbf{x}_{(T^q)^c} + \mathbf{n}) \Vert_2^2 \\
\leq& \frac{1}{(1-\delta_{q-p})^2 (1-(\frac{\delta_q}{1-\delta_q})^2)} \Vert \mathbf{A}_{T^{q-p}}' P_{T^p} (\mathbf{A}_{(T^q)^c} \mathbf{x}_{(T^q)^c} + \mathbf{n})\Vert_2^2 \\
=&\frac{1}{(1-\delta_{q-p})^2 (1-(\frac{\delta_q}{1-\delta_q})^2)} \Vert \mathbf{A}_{T^{q-p}}' (\mathbf{A}_{(T^q)^c} \mathbf{x}_{(T^q)^c} + \mathbf{n}) - \mathbf{A}_{T^{q-p}}' \mathbf{A}_{T^p} \mathbf{A}_{T^p}^\dagger (\mathbf{A}_{(T^q)^c} \mathbf{x}_{(T^q)^c} + \mathbf{n}) \Vert_2^2 \\
\leq& \frac{1}{(1-\delta_{q-p})^2 (1-(\frac{\delta_q}{1-\delta_q})^2)} \left( \Vert \mathbf{A}_{T^{q-p}}' (\mathbf{A}_{(T^q)^c} \mathbf{x}_{(T^q)^c} + \mathbf{n}) \Vert_2^2 + \Vert \mathbf{A}_{T^{q-p}}' \mathbf{A}_{T^p} \mathbf{A}_{T^p}^\dagger (\mathbf{A}_{(T^q)^c} \mathbf{x}_{(T^q)^c} + \mathbf{n}) \Vert_2^2 \right) \\
\leq& \frac{1}{(1-\delta_{q-p})^2 (1-(\frac{\delta_q}{1-\delta_q})^2)} \left(\delta_{\vert (T^q)^c \vert +q-p+1}^2 + \left(\frac{\delta_q \delta_{\vert (T^q)^c \vert +p+1}}{1-\delta_p}\right)^2\right)(\Vert \mathbf{x}_{(T^q)^c}\Vert_2^2 + \sigma_{\rm{n}}^2) \\
=& \eta (\Vert \mathbf{x}_{(T^q)^c}\Vert_2^2 + \sigma_{\rm{n}}^2),
\end{split}
\end{equation}
where the second inequality holds due to Cauchy-Schwarz Inequality and 
$$
\eta = \frac{1}{(1-\delta_{q-p})^2 \left(1-\left(\frac{\delta_q}{1-\delta_q}\right)^2\right)} \left(\delta_{\vert (T^q)^c \vert +q-p+1}^2 + \left(\frac{\delta_q \delta_{\vert (T^q)^c \vert +p+1}}{1-\delta_p}\right)^2\right).
$$ 

If, e.g., $\delta_d \leq 0.1$, since all footnotes of RICs in the above equation do not exceed $d$, one has, 
\begin{equation}
\eta \leq 0.0127.
\end{equation}
\end{proof}
\bibliographystyle{ieeetr}
\bibliography{ref}

\begin{thebibliography}{10}

\bibitem{donoho2006compressed}
D.~L. Donoho, ``Compressed sensing,'' {\em Information Theory, IEEE
  Transactions on}, vol.~52, no.~4, pp.~1289--1306, 2006.

\bibitem{candes2006compressive}
E.~J. Cand{\`e}s, ``Compressive sampling,'' in {\em Proceedings oh the
  International Congress of Mathematicians: Madrid, August 22-30, 2006: invited
  lectures}, pp.~1433--1452, 2006.

\bibitem{venkataramani1998sub}
R.~Venkataramani and Y.~Bresler, ``Sub-nyquist sampling of multiband signals:
  perfect reconstruction and bounds on aliasing error,'' in {\em Acoustics,
  Speech and Signal Processing, 1998. Proceedings of the 1998 IEEE
  International Conference on}, vol.~3, pp.~1633--1636, IEEE, 1998.

\bibitem{candes2005decoding}
E.~J. Candes and T.~Tao, ``Decoding by linear programming,'' {\em Information
  Theory, IEEE Transactions on}, vol.~51, no.~12, pp.~4203--4215, 2005.

\bibitem{tropp2007signal}
J.~A. Tropp and A.~C. Gilbert, ``Signal recovery from random measurements via
  orthogonal matching pursuit,'' {\em Information Theory, IEEE Transactions
  on}, vol.~53, no.~12, pp.~4655--4666, 2007.

\bibitem{needell2009uniform}
D.~Needell and R.~Vershynin, ``Uniform uncertainty principle and signal
  recovery via regularized orthogonal matching pursuit,'' {\em Foundations of
  computational mathematics}, vol.~9, no.~3, pp.~317--334, 2009.

\bibitem{donoho2012sparse}
D.~L. Donoho, Y.~Tsaig, I.~Drori, and J.-L. Starck, ``Sparse solution of
  underdetermined systems of linear equations by stagewise orthogonal matching
  pursuit,'' {\em Information Theory, IEEE Transactions on}, vol.~58, no.~2,
  pp.~1094--1121, 2012.

\bibitem{dai2009subspace}
W.~Dai and O.~Milenkovic, ``Subspace pursuit for compressive sensing signal
  reconstruction,'' {\em Information Theory, IEEE Transactions on}, vol.~55,
  no.~5, pp.~2230--2249, 2009.

\bibitem{needell2009cosamp}
D.~Needell and J.~A. Tropp, ``Cosamp: Iterative signal recovery from incomplete
  and inaccurate samples,'' {\em Applied and Computational Harmonic Analysis},
  vol.~26, no.~3, pp.~301--321, 2009.

\bibitem{boufounos2007sparse}
P.~Boufounos, M.~F. Duarte, and R.~G. Baraniuk, ``Sparse signal reconstruction
  from noisy compressive measurements using cross validation,'' in {\em
  Statistical Signal Processing, 2007. SSP'07. IEEE/SP 14th Workshop on},
  pp.~299--303, IEEE, 2007.

\bibitem{devijver1982pattern}
P.~A. Devijver and J.~Kittler, {\em Pattern recognition: A statistical
  approach}.
\newblock Prentice/Hall International Englewood Cliffs, NJ, 1982.

\bibitem{geisser1993predictive}
S.~Geisser, {\em Predictive interference: an introduction}, vol.~55.
\newblock CRC Press, 1993.

\bibitem{kohavi1995study}
R.~Kohavi {\em et~al.}, ``A study of cross-validation and bootstrap for
  accuracy estimation and model selection,'' in {\em IJCAI}, vol.~14,
  pp.~1137--1145, 1995.

\bibitem{picard1984cross}
R.~R. Picard and R.~D. Cook, ``Cross-validation of regression models,'' {\em
  Journal of the American Statistical Association}, vol.~79, no.~387,
  pp.~575--583, 1984.

\bibitem{shao1993linear}
J.~Shao, ``Linear model selection by cross-validation,'' {\em Journal of the
  American statistical Association}, vol.~88, no.~422, pp.~486--494, 1993.

\bibitem{ward2009compressed}
R.~Ward, ``Compressed sensing with cross validation,'' {\em Information Theory,
  IEEE Transactions on}, vol.~55, no.~12, pp.~5773--5782, 2009.

\bibitem{candes2006near}
E.~J. Candes and T.~Tao, ``Near-optimal signal recovery from random
  projections: Universal encoding strategies?,'' {\em Information Theory, IEEE
  Transactions on}, vol.~52, no.~12, pp.~5406--5425, 2006.

\bibitem{davenport2009compressive}
M.~A. Davenport, P.~T. Boufounos, and R.~G. Baraniuk, ``Compressive domain
  interference cancellation,'' tech. rep., DTIC Document, 2009.

\bibitem{donohosparselab}
D.~Donoho, V.~Stodden, and Y.~Tsaig, ``Sparselab. 2007.''

\end{thebibliography}

\end{document}